\newtheorem{theorem}{\bf Theorem}[section]
\newtheorem{lemma}[theorem]{\bf Lemma}
\newtheorem{proposition}[theorem]{\bf Proposition}
\newtheorem{note}[theorem]{\bf Note}
\newtheorem{remark}[theorem]{\bf Remark}
\newtheorem{definition}[theorem]{\bf Definition}
\newenvironment{proof}{\noindent{\em Proof:}}{\quad \hfill$\Box$\vspace{2ex}}
\begin{document}

\title{Random phaseless  sampling    for   causal    signals   in  shift-invariant spaces: a zero distribution  perspective}

\author{Youfa~Li, Wenchang~Sun
\thanks{Youfa~Li. College of Mathematics and Information Science,
Guangxi University,  Naning  530004, China.   Email: youfalee@hotmail.com}
\thanks{Wenchang~Sun. School   of Mathematical Sciences and LPMC,
Nankai University, Tianjin 300071, China. Email:sunwch@nankai.edu.cn}

\thanks{Youfa Li is partially supported by Natural Science Foundation of China (Nos: 61961003,  61561006, 11501132), Natural Science Foundation of Guangxi (No: 2019GXNSFAA185035)
and the talent project of  Education Department of Guangxi Government  for Young-Middle-Aged backbone teachers. Wenchang Sun
was partially supported by the
National Natural Science Foundation of China (11525104, 11531013 and 11761131002).}}

\date{}

\maketitle

\begin{abstract}
We      proved    that   the phaseless sampling (PLS)
in  the  linear-phase modulated  shift-invariant space (SIS) $V(e^{\textbf{i}\alpha \cdot}\varphi), \alpha\neq0,$
is impossible even though the real-valued  function  $\varphi$ enjoys the full spark property (so does $e^{\textbf{i}\alpha \cdot}\varphi$).
Stated another way, the PLS in the  complex-generated
SISs is essentially different from that in the real-generated  ones.
Motivated by this,  we first  establish  the condition on the complex-valued generator   $\phi$ such that the PLS of  nonseparable causal  (NC) signals  in  $V(\phi)$  can be achieved by random sampling. The condition is established  from the     generalized Haar condition (GHC) perspective. Based on the proposed reconstruction approach,  it is  proved that if the GHC holds then with probability $1$,
the random     sampling density (SD) $=3$ is sufficient for the PLS of   NC  signals   in the  complex-generated  SISs.
For the real-valued  case  we also   prove   that,   if the GHC holds then with probability $1$,
the random  SD $=2$ is sufficient for the PLS of  real-valued NC signals   in the  real-generated  SISs.
For the local reconstruction of  highly oscillatory  signals such as chirps,
a great number of  deterministic  samples are   required.
Compared with deterministic sampling, the  proposed  random approach
enjoys  not only the greater   sampling flexibility but the    much  smaller number of samples.
 To verify our results, numerical simulations were  conducted  to  reconstruct highly oscillatory NC signals   in the  chirp-modulated SISs.
\end{abstract}

\bigskip
\begin{IEEEkeywords} Random  phaseless sampling, complex (real)-generated shift-invariant space, generalized Haar condition, sampling density, highly oscillatory  signals.\end{IEEEkeywords}
\bigskip


\section{Introduction}\label{section1}
Phase retrieval (PR) is a nonlinear   problem  that seeks to
reconstruct   a signal   $f$, \emph{up to  a unimodular scalar},  from   the intensities  of  the   linear  measurements
(c.f.\cite{Gerchberg,Fienup1,Fienup2,Fienup3,Cand,Shenoy12,reasonforPR})
\begin{align}\notag \begin{array}{lll}  b_{k}:=|\langle f, \textbf{a}_{k}\rangle|, \ k\in \Gamma, \end{array}\end{align}
where $\textbf{a}_{k}$ is called  the \emph{measurement vector}.

As stated in   Y. Shechtman et. al \cite{reasonforPR}
one of the     reasons for PR in optics is that,  for  highly  oscillatory  signals such as optical waves (electromagnetic fields oscillating at $10^{15}$ Hz and higher), measuring their phases is very difficult or even  impossible for electronic measurement
devices. PR   has   been
widely investigated  in  engineering and mathematical  problems such as  coherent diffraction imaging (\cite{reasonforPR,crystallography,Candes0}),   quantum tomography (\cite{Heinosaarri}),
and    frame theory (\cite{HanDe1,HanDe3}).
A concrete  PR problem     corresponds  to
the specific  signal class  $\mathcal{C}$   and     measurement vectors (e.g. \cite{example,Fienup46,Fienup278,Fienup288,Science61,Eldar,FAA}).
For example, Alaifari et. al \cite{Fienup278} considered the PR of  real-valued bandlimited functions by frame  measurement vectors.
When
$f$ lies in a function class $\mathcal{C}$    and    $\textbf{a}_{k}$   is  the  shift  of the  Dirac distribution,
then the corresponding PR is the \emph{phaseless sampling} (PLS for short), modeled as
\begin{align}\notag   \hbox{to reconstruct} \ f \ \hbox{by the  samples} \ |f(x)|, x\in \Omega, \end{align}
up to  a unimodular scalar.
In what follows, we introduce  the recent developments on PLS in shift-invariant spaces (SISs).
\subsection{Related work}
SIS has   many applications in   signal processing.
Please refer to   \cite{FAA,Fienup289,SHI,Zayed0} and the references therein for a few examples. For a  generator $g: \mathbb{R}\rightarrow \mathbb{C}$,
its   SIS is defined as \begin{align}\notag \begin{array}{lll}  V(g):=\{\sum_{k\in \mathbb{Z}}c_{k}g(\cdot-k): \{c_{k}\}_{k\in \mathbb{Z}}\in \ell^{2}\}, \end{array}\end{align}
where $ \{c_{k}\}_{k\in \mathbb{Z}}\in \ell^{2}$ means $\sum_{k\in \mathbb{Z}} |c_{k}|^{2}<\infty.$
Recently, PLS in SISs   received  much attention  (e.g.\cite{Sun2,Sun1,Iserles,QiyuPR,QiyuPR1,Jaming}).
Particularly, it was investigated   for  bandlimited signals in Thakur \cite{Iserles}, P. Jaming, K. Kellay and R. Perez Iii \cite{Jaming}
and C.K. Lai, F. Littmann, E. Weber \cite{Fienup45}.
Note that the spaces  of bandlimited signals  are shift-invariant and the corresponding generators (sinc function or  its dilations)  are
infinitely supported (c.f. \cite{BINHAN,BHBOOK}).
 Chen,  Cheng,  Sun and   Wang  \cite{QiyuPR} established the
PLS  of   nonseparable (the definition of nonseparability is postponed to section \ref{ppp786}) real-valued  signals  in  the    SIS
 from a  compactly supported generator.
 W. Sun \cite{Sun1} established  the
PLS  for nonseparable real-valued  signals  in SISs generated by B-splines.

 Note  that the   generators  and signals in \cite{Sun1,QiyuPR}
 are all real-valued, and the sampling is deterministic.
Motivated by the  results therein we will  investigate the random PLS of causal signals  in complex (or real)-generated SISs.
Here  a signal $f\in V(g)$ is  said to be   causal if
\begin{align}\notag \begin{array}{lll}  f=\sum_{k=0}^{\infty}c_kg(\cdot-k), \ c_{0}\neq0.\end{array}\end{align}
 The set of causal signals in $V(g)$ is denoted by  $V_{\hbox{ca}}(g)$. Causal signals are an important class  of signals (c.f. \cite{FAA,causal1,causal2}). Particularly, the Fourier measurement-based  PR of causal signals in SISs was  addressed in \cite{FAA}.
 In what follows,
we introduce the  motivation.

\subsection{Motivation}
\subsubsection{Full spark property fails for complex-valued case}\label{ppp786}
Many practical applications require    processing    signals    in the
SISs    from   complex-valued generators such as chirps (e.g.\cite{SHI,chirp}).
We will investigate the PLS in complex-generated  SISs.
To the best of our knowledge, there are few  literatures on this topic.
We are greatly   motivated by   Theorem
\ref{examm},  which will state that the PLS  in  the complex-generated   SISs is essentially different from that in  the real-generated ones.

Some denotations and definitions  are  necessary    for Theorem
\ref{examm}. A nonzero function $f$ is traditionally denoted as $f\not\equiv0$,
and $f(x)\neq0$ means that the point $x$ is not the zero  of $f$.
The conjugate of $a\in \mathbb{C}$ is denoted by $\bar{a}.$ The real and imaginary parts of $a$
are denoted by $\Re(a)$ and $\Im(a)$, respectively.
Any   $a\neq0$  can be  denoted by   $|a|e^{\textbf{i}\theta(a)}$ where    $\textbf{i}$, $|a|$ and  $\theta(a)$ are   the imaginary unit,   modulus and     phase, respectively. For   phases $\theta(a)$ and $ \theta(b)$,  we say that  $\theta(a)=\theta(b)$ if  $\theta(a)=\theta(b)+2k\pi$ for a certain  $k\in \mathbb{Z}.$
Traditionally,   the  phase of zero can be   assigned   arbitrarily.

Throughout this paper the complex and real-valued generators are denoted by
$\phi$ and $\varphi$, respectively. Without loss of generality, assume that  \begin{align}\label{support}  \hbox{supp}(\phi)\subseteq(0,s), \ \hbox{supp}(\varphi)\subseteq(0,s)  \end{align} with the  integer  $s\geq2$.
A function $0\not\equiv f\in V(\phi)$ (or $V(\varphi)$) is  separable if there exist    $0\not\equiv f_{1}$ and $ 0\not\equiv f_{2}\in V(\phi)$ (or $V(\varphi)$) such that
$f=f_{1}+f_{2}$ and  $f_{1}f_{2}\equiv0$. Clearly, if   $f$ is separable then $|f|=|f_{1}+e^{\textbf{i}\alpha}f_{2}|$ where $\alpha\in (0, 2\pi)$, and consequently  it  is not distinguishable from $f_{1}+e^{\textbf{i}\alpha}f_{2}$ by  the samples of $|f|$.

%

 For the  above    real-valued generator  $\varphi$,  if  the    matrix \begin{align}\label{fullspark}\begin{array}{lllllllllllllllll}  \big(\varphi(x_{k}+n)\big)_{1\leq k\leq 2s-1, 0\leq n\leq s-1}\end{array}\end{align}  is   full spark (c.f. \cite{HanDe4,HanDe5}) for  any   $2s-1$ distinct  points  $x_{k}\in(0,1), k=1, \ldots, 2s-1$, namely,
every $s\times s$ submatrix is  nonsingular,
then it follows from    \cite{QiyuPR} that the  real-valued   nonseparable signals  in $V(\varphi)$
can be determined  by sufficiently many  samples.
The  B-spline generators in \cite{Sun1}   satisfy the
property. However,  the following theorem  implies that the  property is not sufficient for achieving PLS  when the generator is  complex-valued.

\begin{theorem}\label{examm}
Let  $\varphi$ be  real-valued    such that  $\hbox{supp}(\varphi)\subseteq(0,s)$ and  the   matrix     in  \eqref{fullspark} is full spark for   any $2s-1$  distinct   points  $x_{k}\in(0,1)$, $k=1, \ldots, 2s-1$.
Define  $\phi:=e^{\textbf{i}\alpha \cdot}\varphi$ with $\alpha\neq0$.
Then the PLS  in   $V_{\hbox{ca}}(\phi)$ can not be achieved despite the fact that  $\phi$ also satisfies the full spark property.
\end{theorem}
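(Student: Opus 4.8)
The plan is to show that PLS in $V_{\hbox{ca}}(\phi)$ fails by exhibiting two \emph{nonseparable} causal signals $f,g\in V_{\hbox{ca}}(\phi)$ with $|f(x)|=|g(x)|$ for every $x\in\mathbb{R}$ but $g\neq e^{\textbf{i}\beta}f$ for all $\beta\in\mathbb{R}$; since, in the real-valued situation, the full-spark hypothesis yields recovery of the nonseparable signals (via \cite{QiyuPR}), such a pair is precisely what it means for PLS to be unattainable. The device is a modulated version of the conjugation ambiguity of magnitude data. First I would factor out the modulation: each $f=\sum_{k\geq0}c_k\phi(\cdot-k)\in V_{\hbox{ca}}(\phi)$ equals $e^{\textbf{i}\alpha\cdot}\widetilde f$, where $\widetilde f:=\sum_{k\geq0}d_k\varphi(\cdot-k)\in V(\varphi)$ and $d_k:=c_ke^{-\textbf{i}\alpha k}$, so $d_0=c_0\neq0$ and $|f(x)|=|\widetilde f(x)|$ for all $x$; conversely, any $\widetilde f\in V(\varphi)$ with $d_0\neq0$ lifts to $f:=e^{\textbf{i}\alpha\cdot}\widetilde f\in V_{\hbox{ca}}(\phi)$.

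Next I would introduce the conjugation twin $g:=e^{\textbf{i}\alpha\cdot}\,\overline{\widetilde f}$. Since $\varphi$ is real-valued, $\overline{\widetilde f}=\sum_{k\geq0}\overline{d_k}\varphi(\cdot-k)$, and re-expanding in the shifts of $\phi$ via $e^{\textbf{i}\alpha\cdot}\varphi(\cdot-k)=e^{\textbf{i}\alpha k}\phi(\cdot-k)$ gives $g=\sum_{k\geq0}\big(\overline{c_k}e^{2\textbf{i}\alpha k}\big)\phi(\cdot-k)\in V_{\hbox{ca}}(\phi)$, with leading coefficient $\overline{c_0}\neq0$. As $|z|=|\overline{z}|$, one gets $|g(x)|=|\overline{\widetilde{f}(x)}|=|\widetilde f(x)|=|f(x)|$ for \emph{every} $x$, so $f$ and $g$ coincide on any conceivable magnitude-sampling set. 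Moreover, conjugating the defining decomposition of a separable signal shows $\widetilde f$ is separable iff $\overline{\widetilde f}$ is, and since $e^{\textbf{i}\alpha\cdot}$ vanishes nowhere, $f$ (resp.\ $g$) is separable iff $\widetilde f$ (resp.\ $\overline{\widetilde f}$) is; hence it suffices to start from a nonseparable $\widetilde f$.

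It remains to choose that $\widetilde f$ so that $g$ is not a unimodular multiple of $f$. I would invoke that $V(\varphi)$ contains nonseparable causal signals and, more precisely, one whose coefficient sequence has two consecutive nonzero entries of different argument modulo $\pi$ — say $d_0=1$, $d_1=\textbf{i}$ — which is where the structural picture of (non)separability under the full-spark hypothesis (as in \cite{QiyuPR}) enters. For such a choice, $g=e^{\textbf{i}\beta}f$ would force $\overline{\widetilde f}=e^{\textbf{i}\beta}\widetilde f$, i.e.\ $\sum_{k\geq0}\big(\overline{d_k}-e^{\textbf{i}\beta}d_k\big)\varphi(\cdot-k)\equiv0$; but full spark makes the integer translates $\{\varphi(\cdot-k)\}$ linearly independent — a nontrivial relation, restricted to a unit interval $(m,m+1)$ and read off at $s$ distinct points, would produce a singular $s\times s$ submatrix of $\big(\varphi(x_k+n)\big)$ — so $\overline{d_k}=e^{\textbf{i}\beta}d_k$ for every $k$, and $k=0,1$ give $e^{\textbf{i}\beta}=1$ together with $-\textbf{i}=\textbf{i}$, which is absurd. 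This supplies the desired pair, so PLS in $V_{\hbox{ca}}(\phi)$ is impossible; that $\phi$ is itself full spark needs nothing, since $\big(\phi(x_k+n)\big)=\mathrm{diag}(e^{\textbf{i}\alpha x_k})\big(\varphi(x_k+n)\big)\mathrm{diag}(e^{\textbf{i}\alpha n})$ and invertible diagonal factors preserve nonsingularity of all submatrices.

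The real work — and the step I expect to be the main obstacle — is the nonseparability requirement: the separable ambiguity is already flagged in the text and would give only a vacuous counterexample, so one must pin down that $V(\varphi)$ indeed carries a nonseparable causal signal with two consecutive coefficients of different argument, and that nonseparability is inherited through both complex conjugation and modulation by $e^{\textbf{i}\alpha\cdot}$. The latter two inheritances are immediate from the paragraph above; the former rests on the full-spark structure theory for $V(\varphi)$, which also guarantees the linear independence of translates used in the last step.
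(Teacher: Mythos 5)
Your construction is exactly the paper's: the ``conjugation twin'' $g=e^{\textbf{i}\alpha\cdot}\overline{\widetilde f}=\sum_{k}\overline{c_k}e^{2\textbf{i}\alpha k}\phi(\cdot-k)$ is precisely the ambiguity $\sum_k e^{\textbf{i}2\alpha k}\overline{c}_k\phi(\cdot-k)$ used in the paper, and your choice $d_0=1$, $d_1=\textbf{i}$ is the paper's $c_0=1$, $c_1=e^{\textbf{i}\beta}$ with $\beta=\alpha+\pi/2$ (so $\alpha-\beta\neq k\pi$), with the same appeal to linear independence of the translates to exclude a global unimodular factor. The one place you go beyond the paper is in insisting that the counterexample pair be nonseparable; the paper's proof does not address this point at all, so while your concern is legitimate for the theorem to carry its intended contrast with the real-valued case, leaving it at the level of ``invoke the structure theory of \cite{QiyuPR}'' is no less complete than the published argument.
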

\begin{proof}
It is easy to check that   $\phi$  inherits the full spark property of $\varphi$.
%
It follows from the full spark property that     $\{\phi(\cdot+k): k=0, \ldots,s-1\}$
is  linearly independent. We first choose   $\beta\in \mathbb{R}$ such that $\alpha-\beta\neq k\pi$
for any $k\in \mathbb{Z}.$
Let  $N\geq2$.
Define a    sequence $\{c_{k}\}^{N}_{k=0}$    such that $c_{0}=1$ and $c_{1}=e^{\textbf{i}\beta}$.
It is easy to check that
$\{c_{k}\}^{N}_{k=0}\neq e^{\textbf{i}\widehat{\theta}}
\{e^{\textbf{i}2\alpha k}\overline{c}_{k}\}^{N}_{k=0} $ for any $\widehat{\theta}\in [0, 2\pi)$.
By the above linear independence, we have
$\sum^{N}_{k=0} c_{k}\phi(\cdot-k)\neq e^{\textbf{i}\widehat{\theta}}\sum^{N}_{k=0} e^{\textbf{i}2\alpha k}\overline{c}_{k}\phi(\cdot-k)$. However, it is easy to check that
 \begin{align}\label{ambi} |\sum^{N}_{k=0} e^{\textbf{i}2\alpha k}\overline{c}_{k}\phi(\cdot-k)|=|\sum^{N}_{k=0} c_{k}\phi(\cdot-k)|.\end{align}
In other words, the PLS in
$V_{\hbox{ca}}(\phi)$ can not be achieved.
\end{proof}

Motivated by Theorem \ref{examm}, we need  to establish a condition on the complex-valued generator  $\phi$
such that the  PLS  in
$V_{\hbox{ca}}(\phi)$  can  be achieved.
The  condition will be established  from the zero distribution (or the Lebesgue measure of zero set) perspective. Our motivation for this perspective is introduced   in what follows.

\subsubsection{New perspective:  zero distribution-based PLS}\label{newperspective}
We  first    interpret  the full spark property from the zero distribution perspective.
For a  function  system $\Lambda=\{g_{0}, \ldots, g_{L-1}\}$, its span space is defined as
 \begin{align}\label{span}\begin{array}{lllllllllllllllll}
 \hbox{span}\{\Lambda\}:=\big\{\sum^{L-1}_{j=0} c_{j}g_{j}: c_{j}\in \mathbb{R}\big\}.
 \end{array}\end{align}
%
It is easy to check that  the     full spark  property of the matrix  in \eqref{fullspark}  is equivalent to that  the function system \begin{align}\label{jihe}\begin{array}{lllllllllllllllll} \Lambda_{\varphi}:=\{\varphi, \ldots, \varphi(\cdot+s-1)\} \end{array}\end{align}
satisfies   the $(s-1)$-Haar condition (HC for short) on $(0,1)$ (c.f.\cite{Chui,DU,Van de Vel} for HC).  Specifically, $\Lambda_{\varphi}$
is linearly independent and
\begin{align}\label{quantity} \sup_{0\not\equiv h\in\hbox{span}\{\Lambda_{\varphi}\}}\#(\mathcal{Z}_{h}\cap(0,1))\leq s-1, \end{align}
where $\mathcal{Z}_{h}$ is  the zero set of $h$ and
$\#(\mathcal{Z}_{h}\cap(0,1))$
is the cardinality of   $\mathcal{Z}_{h}\cap(0,1)$.

Motivated by the above HC, from the zero distribution perspective  we will   establish
the condition on    $\phi:=\phi_{\Re}+\textbf{i}\phi_{\Im}$ such that the  PLS     in  $V_{\hbox{ca}}(\phi)$ can be achieved.
Inspired  by  Theorem  \ref{examm}, the zero distribution should not be correlated with the  functions in
$\hbox{span}\{\phi, \ldots, \phi(\cdot+s-1)\}$.
Instead we will require  in  section \ref{section2}  that
the   distribution  is related  with  the functions  in $\hbox{span}(\Xi_{\phi})$, where
\begin{align}\label{gammaset}\begin{array}{lllllllllllllllll}  \Xi_{\phi}:=\big\{\phi_{\Re}\phi_{\Re}(\cdot+k)+\phi_{\Im}\phi_{\Im}(\cdot+k),
\phi_{\Re}\phi_{\Im}(\cdot+k)- \\ \ \ \ \quad \quad \phi_{\Im}\phi_{\Re}(\cdot+k)\big\}^{s-1}_{k=1} \cup\big\{\phi^{2}_{\Re}+\phi^{2}_{\Im}\big\}.\end{array}\end{align}
More specifically,  $\Xi_{\phi}$
is  linearly independent and
\begin{align}\label{quantity11} \sup_{0\not\equiv h\in \hbox{span}\{\Xi_{\phi}\}}\mu(\mathcal{Z}_{h}\cap(0,1))=0,  \end{align}
where $\mu$ is the Lebesgue measure and $\hbox{span}\{\Xi_{\phi}\}$ is defined via  \eqref{span}.
Clearly, \eqref{quantity11} (a measure perspective) is essentially different from \eqref{quantity} (a cardinality  perspective). Compared with the cardinality perspective,  we will profit more from the measure  perspective. Details on this will be given in section \ref{yuyuou}.
For simplicity  we give the following definitions.

\begin{definition}\label{CHCcomplex}
If \eqref{quantity11} holds,  then we say  that the  system $\Xi_{\phi}$ satisfies the  \emph{generalized Haar condition} (GHC for short), and  $\phi$  is  a complex-valued  GHC-generator.
\end{definition}

As a counterpart of Definition \ref{CHCcomplex}, we next define the GHC related to  the  real-valued generator  $\varphi$   in \eqref{support}.

\begin{definition}\label{CHCreal}
If $\Lambda_{\varphi}=\{\varphi(\cdot+k): k=0, \ldots, s-1\}$ in \eqref{jihe}
is  linearly independent and satisfies
\begin{align}\label{quantity110} \sup_{0\not\equiv h\in \hbox{span}\{\Lambda_{\varphi}\}}\mu(\mathcal{Z}_{h}\cap(0,1))=0,  \end{align}
then we say that the  system   $\Lambda_{\varphi}$ satisfies the GHC,  and     $\varphi$  is  a real-valued GHC-generator.
\end{definition}

\subsection{Typical  GHC-generators}
\subsubsection{Typical complex-valued GHC-generators}\label{ghvcd87}
We start with the amplitude-phase  form of a complex-valued function.
Any function (including a generator for an SIS) $F: \mathbb{R}\longrightarrow \mathbb{C}$ can be
written as the \emph{general}  form $|F(t)|e^{\textbf{i}\theta(F(t))}$, where $|F(t)|$ and
$\theta(F(t))$ (taking   values on $[0, 2\pi)$) are referred to as the amplitude function and the  phase (or rotation) function.
Therefore,  any  complex-valued generator for an SIS  can be interpreted as the (possibly nonlinear) rotation of a real-valued function.

As mentioned before, one of  the  reasons  for PR in optics  is the high  oscillation of a signal.
Chirps   which  take   the very  \emph{general} form $F(t)e^{\textbf{i}\lambda\rho(t)}$
are the typical class  of highly oscillatory signals, where $F(t)\geq0$ and
$\lambda$ is a (large) base frequency  such that the phase function $\lambda\rho(t)$ is varying rapidly over time (c.f.\cite{CHEN}). As stated in \cite{CHEN},
chirps are \emph{ubiquitous in nature}.  They  are of  interest in applications such as
in
analysis of echolocation in bats (\cite{Science1}) and whales (\cite{Science2,CHEN}), and in  detecting gravitational waves (\cite{CHEN}).
They are also applied
in  ultrafast optics (\cite{Science3})
and ultrashort laser pulses (\cite{Science4}).


Many chirps such as those in \cite[section 6.3]{chirp} have the local analytic structure.
Employing this, GHC \eqref{quantity11} can be easily checked.
For example, motivated by \cite[section 6.3]{chirp} we take  the chirp-generator
\begin{align}\begin{array}{lll}
\label{89}\phi(x)=\frac{2}{3}\sqrt{2\pi|b|}e^{-\textbf{i}\frac{a(x-2)^{2}}{2b}}e^{-\textbf{i}\frac{p(x-2)}{b}}\cos^{2}\frac{\pi(x-2)}{4}
\chi_{(0,4)}(x),\end{array}\end{align}
such that $\hbox{supp}(\phi)=(0,4)$,
where $a\neq0$ and $\chi_{E}$ is the characteristic function of  the set $E\subseteq \mathbb{R}$.
Clearly,      the $7$ components of the    system  $\Xi_{\phi}$ in \eqref{gammaset}
are  essentially the restrictions of analytic functions.  Recall that the zero set of any nonzero  analytic function has Lebesgue
measure zero (c.f. \cite{BAF}). Hence,  if the  components $g_{i}\in\Xi_{\phi}$, $i=1, \ldots, 7$, are  linearly independent on $(0,1)$
then $\phi$ is a complex-valued GHC-generator. The  independence can be achieved if there exists  $(x_{1}, \ldots,  x_{7})\in (0,1)^{7}$
such that the determinant
\begin{align}\label{BCA} \left|\begin{array}{ccccc}
g_{1}(x_{1}) & g_{2}(x_{1}) &  \ldots& g_{7}(x_{1})\\
\vdots& \vdots & \ddots&\vdots\\
g_{1}(x_{7}) & g_{2}(x_{7}) &  \ldots& g_{7}(x_{7})
\end{array}\right|\neq 0.\end{align}
Take the parameters $(a,b,p)=(4,0.8,1)$ or $(50,0.8,1)$ for example.
Uniformly choosing $(x_{1}, \ldots,  x_{7})$ from $ (0,1)^{7}$ we found that
\eqref{BCA} holds with probability $1.$ Then  $\phi$ in \eqref{89} is a complex-valued   GHC-generator.


\subsubsection{Real-valued GHC-generators}\label{jianzhijiushishengli}
If a real-valued generator has the local analytic structure, then  we can also address how to check whether it
is a GHC-generator  by the similar argument as above.
%
%
%
The cardinal B-splines  and the refinable functions (\cite{Goodman1,Goodman2,Youfa}) having positive masks are the typical examples of real-valued GHC-generators.


\subsection{Contributions}\label{contribiution}
An infinite  discrete set $E$ is said to have  \emph{sampling density} (SD) if $\hbox{SD}=\lim_{b-a\rightarrow\infty}\frac{\#([a,   b]\cap E)}{b-a}<\infty$.
Throughout this paper, we require that  the random  sampling points on any unit interval $[n, n+1]$ obey the uniform  distribution.
Our contributions include:

(i)  From a new perspective---zero distribution,  we establish the condition for the PLS of causal signals in the complex-generated SISs.
Specifically, Theorem \ref{main1} will state  that  if $\phi$ is a complex-valued  GHC-generator,
then     with probability $1$  the random  SD $=3$ is sufficient for the PLS of   nonseparable     signals in   $V_{\hbox{ca}}(\phi)$.

(ii) The PLS of nonseparable and real-valued signals in real-generated SISs is also investigated  from the zero distribution perspective.
Specifically, Theorem \ref{main4} will state   that
if $\varphi$ is a real-valued  GHC-generator, then   with probability $1$  the random  SD $=2$ is sufficient for  the PLS of    nonseparable and
real-valued   signals in $V_{\hbox{ca}}(\varphi)$.

(iii) An alternating approach,   termed as phase decoding-coefficient recovery (PD-CR),
is established to reconstruct  the nonseparable  signals in  $V_{\hbox{ca}}(\phi)$ and $V_{\hbox{ca}}(\varphi)$.
By the random sampling-based PD-CR, Propositions \ref{main1proposition} and \ref{main1proposition123} will  guarantee that  the highly oscillatory signals can be locally  reconstructed by using  a  very small number of samples.
More details about this is given  in section \ref{yuyuou}-2).

\subsection{Highlights}\label{yuyuou}

\subsubsection{The  zero distribution perspective   enables us  to do   PLS in complex-generated $V_{\hbox{ca}}(\phi)$}
As mentioned in section \ref{newperspective}, the traditional requirement---full spark property for PLS of real-valued  signals
can be interpreted by  \eqref{quantity}, a cardinality perspective.
Theorem \ref{examm} implies that  the property
does not   work  for complex-valued case. Based on  GHC (a measure perspective), we establish the PLS
of   nonseparable  signals in  $V_{\hbox{ca}}(\phi)$.

\subsubsection{Local reconstruction of highly oscillatory signals costs a small number of  samples}\label{fgfgfa}
If $\varphi$  or  $\phi$    is     highly oscillatory, then the quantities
$ \sup_{0\not\equiv  h\in \hbox{span}\{\Lambda_{\varphi}\}}\#(\mathcal{Z}_{h}\cap(0,1))$
and $ \sup_{0\not\equiv  h\in \hbox{span}\{\Xi_{\phi}\}}\#(\mathcal{Z}_{h}\cap(0,1))$
are great.
 And a great number of deterministic samples
are necessary for   local reconstruction. However,   Propositions \ref{main1proposition} and \ref{main1proposition123} will guarantee   that  the highly oscillatory signals
can be locally  reconstructed,  with probability $1$, by using  a very small number of random samples. Unlike \cite{Sun1},   the  number
of samples is independent of  the above quantities.
 To make this point, we will  give a test signal   in section \ref{chirp} \eqref{targetnumer2}  and its local restriction in  \eqref{bianhuakuai}. Although  the   restriction  is determined by just \textbf{two} coefficients, one needs  at least $\textbf{259}$ deterministic  samples to reconstruct it.
By the PD-CR, however,
with probability $1$ it    can be reconstructed  by just \textbf{three} random samples.
\subsection{Organization}
Section  \ref{section2} concerns on  the random PLS
of    nonseparable signals  in   $V_{\hbox{ca}}(\phi)$, where $\phi$ is a   complex-valued GHC-generator. Based on the proposed  PD-CR,
we  proved that when the sampling points obey the uniform distribution and
the random SD  $=3$, then with probability $1$ any nonseparable  signal  in $V_{\hbox{ca}}(\phi)$ can be determined  up to   a unimodular scalar. In section \ref{section3} the  PD-CR  is modified such that it is more  adaptive to   the real-valued
case. By the  modified  PD-CR, the real-valued and  nonseparable signals  in  $V_{\hbox{ca}}(\varphi)$ can be determined
with probability $1$ if  the  random  SD $=2.$ To confirm our results numerical simulations are
conducted in section \ref{numeri1} and  section \ref{chirp}. For the local reconstruction, Propositions \ref{main1proposition} and \ref{main1proposition123} imply  that  the highly oscillatory signals
can be   determined,  with probability $1$, by using  a very small number of random samples.
We conclude in section \ref{conclusion}.

\section{Random PLS of  causal signals  in  complex-generated  SISs}\label{section2}
We start  with some necessary denotations.
%
As in section \ref{section1} the conjugate of $a\in \mathbb{C}$ is denoted by $\bar{a}.$
The  random  variable $t$,   obeying   the   uniform distribution on $(0,1)$, is denoted by $t\sim\textbf{U}(0,1)$.
Its observed value  is denoted by $\widehat{t}.$
For an event $\mathfrak{E}$,  its  probability and complementary event are  denoted by $P(\mathfrak{E})$  and   $\mathfrak{E}^{c}$,
respectively.
For two events $\mathfrak{E}_{1}$ and $\mathfrak{E}_{2}$, $P(\mathfrak{E}_{1}\cap \mathfrak{E}_{2})=P(\mathfrak{E}_{1}|\mathfrak{E}_{2})P(\mathfrak{E}_{2}),$
where
$\mathfrak{E}_{1}\cap \mathfrak{E}_{2}$ and  $P(\mathfrak{E}_{1}|\mathfrak{E}_{2})$ are   the  intersection event and    conditional probability, respectively.

\subsection{Preliminary on complex-valued GHC-generator}\label{GHC}
The following proposition   will be  helpful  for proving   Theorem \ref{main1}, one of our main theorems.

\begin{proposition}\label{Haarcond}
Let $\phi=\phi_{\Re}+\textbf{i}\phi_{\Im}$ be a GHC-generator supported on $(0, s)$.
Then         $\Lambda_{\phi,1}:=\{\phi(\cdot+k): k=0, \ldots, s-1\}$   and
$\Lambda_{\phi,2}:=\{\phi\bar{\phi}(\cdot+k): k=0, \ldots, s-1\}$ also  satisfy  the GHC, namely,
\eqref{quantity11} holds with $\Xi_{\phi}$ being replaced by $\Lambda_{\phi,1}$ or  $\Lambda_{\phi,2}$.
\end{proposition}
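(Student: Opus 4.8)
The plan is to deduce the GHC for $\Lambda_{\phi,1}$ and $\Lambda_{\phi,2}$ from the GHC for $\Xi_\phi$ by pre-multiplying elements of $\hbox{span}\{\Lambda_{\phi,1}\}$ by $\bar\phi$. The key observation is that for real scalars $c_0,\dots,c_{s-1}$ and $h:=\sum_{k=0}^{s-1}c_k\phi(\cdot+k)$, expanding $\bar\phi\,\phi(\cdot+k)=\big(\phi_\Re\phi_\Re(\cdot+k)+\phi_\Im\phi_\Im(\cdot+k)\big)+\textbf{i}\big(\phi_\Re\phi_\Im(\cdot+k)-\phi_\Im\phi_\Re(\cdot+k)\big)$ and summing yields
\[
\bar\phi\,h=g+\textbf{i}\,\tilde g,\qquad g,\ \tilde g\in\hbox{span}\{\Xi_\phi\},
\]
where $g=c_0(\phi_\Re^2+\phi_\Im^2)+\sum_{k=1}^{s-1}c_k\big(\phi_\Re\phi_\Re(\cdot+k)+\phi_\Im\phi_\Im(\cdot+k)\big)$ and $\tilde g=\sum_{k=1}^{s-1}c_k\big(\phi_\Re\phi_\Im(\cdot+k)-\phi_\Im\phi_\Re(\cdot+k)\big)$ (the $k=0$ term drops out of $\tilde g$). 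Crucially, linear independence of $\Xi_\phi$ makes $g\equiv0$ equivalent to $c_0=\dots=c_{s-1}=0$; this is precisely why $\Xi_\phi$ is built from these autocorrelation-type functions, since multiplying the complex system $\Lambda_{\phi,1}$ by $\bar\phi$ exactly reproduces them.

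First I would handle $\Lambda_{\phi,1}$. Since $\mathcal{Z}_h\subseteq\mathcal{Z}_{\bar\phi h}=\mathcal{Z}_{g+\textbf{i}\tilde g}\subseteq\mathcal{Z}_g$, if some $0\not\equiv h\in\hbox{span}\{\Lambda_{\phi,1}\}$ had $\mu(\mathcal{Z}_h\cap(0,1))>0$ then also $\mu(\mathcal{Z}_g\cap(0,1))>0$; the GHC of $\Xi_\phi$ then forces $g\equiv0$, hence all $c_k=0$, hence $h\equiv0$, a contradiction. Running the same argument with a full-measure zero set gives linear independence of $\Lambda_{\phi,1}$ ($h\equiv0\Rightarrow g\equiv0\Rightarrow$ all $c_k=0$). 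Thus $\Lambda_{\phi,1}$ satisfies the GHC, i.e. \eqref{quantity11} holds with $\Xi_\phi$ replaced by $\Lambda_{\phi,1}$.

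Next I would handle $\Lambda_{\phi,2}$, using that the $c_k$ are real so that $\sum_{k=0}^{s-1}c_k\,\phi\,\bar\phi(\cdot+k)=\phi\cdot\overline{\textstyle\sum_{k}c_k\phi(\cdot+k)}=\phi\,\bar h$ with $h\in\hbox{span}\{\Lambda_{\phi,1}\}$. Then $\mathcal{Z}_{\phi\bar h}\cap(0,1)=\big(\mathcal{Z}_\phi\cap(0,1)\big)\cup\big(\mathcal{Z}_h\cap(0,1)\big)$. Because $|\phi|^2=\phi_\Re^2+\phi_\Im^2\in\Xi_\phi$ and $|\phi|^2\not\equiv0$, the GHC of $\Xi_\phi$ gives $\mu(\mathcal{Z}_\phi\cap(0,1))=\mu(\mathcal{Z}_{|\phi|^2}\cap(0,1))=0$; and $\mu(\mathcal{Z}_h\cap(0,1))=0$ whenever $h\not\equiv0$ by the case already proved. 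Hence if $\phi\bar h\not\equiv0$ (equivalently $h\not\equiv0$) then $\mu(\mathcal{Z}_{\phi\bar h}\cap(0,1))=0$; and if $\phi\bar h\equiv0$ then $h$ vanishes on the full-measure set $(0,1)\setminus\mathcal{Z}_\phi$, so $h\equiv0$ by the GHC of $\Lambda_{\phi,1}$ and then all $c_k=0$ by its linear independence. This establishes both the measure condition and linear independence for $\Lambda_{\phi,2}$.

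The main, and essentially only, non-routine step is spotting the identity $\bar\phi\,h=g+\textbf{i}\,\tilde g$; once that is in place the rest is bookkeeping with zero sets and with the linear independence of $\Xi_\phi$, and I do not expect a genuine obstacle. The points that need a little care are keeping track of the $k=0$ term, which lands on $\phi_\Re^2+\phi_\Im^2$ and contributes nothing to $\tilde g$, and using that the scalars are real when moving complex conjugation through the sum in the $\Lambda_{\phi,2}$ step.
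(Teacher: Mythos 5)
Your argument is correct, and it is the natural (surely intended) route: the paper's own ``proof'' is a one-line assertion that the claim ``can be easily concluded'' from the GHC of $\Xi_\phi$, and your identity $\bar\phi\,h=g+\textbf{i}\,\tilde g$ with $g,\tilde g\in\hbox{span}\{\Xi_\phi\}$ is exactly the mechanism that makes that assertion true, since it gives $\mathcal{Z}_h\subseteq\mathcal{Z}_g$ and reduces everything to the measure condition and linear independence of $\Xi_\phi$. The handling of $\Lambda_{\phi,2}$ via $\phi\bar h$ and $\mathcal{Z}_{\phi\bar h}=\mathcal{Z}_\phi\cup\mathcal{Z}_h$, together with $\mu(\mathcal{Z}_\phi\cap(0,1))=\mu(\mathcal{Z}_{|\phi|^2}\cap(0,1))=0$, is likewise sound, so your write-up simply supplies the details the paper omits.
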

\begin{proof}
The proof can be easily  concluded by  the GHC  in \eqref{quantity11} associated with $\Xi_{\phi}$.
\end{proof}

\begin{note}
Theorem \ref{examm} implies  that      $\Lambda_{\phi,1}$  satisfying GHC
is not  sufficient for achieving  the PLS  in $V_{\hbox{ca}}(\phi)$.
By the same argument as in  the proof of Theorem \ref{examm},
it is easy to prove  that  $\Lambda_{\phi,2}$ is also not sufficient.
\end{note}

\subsection{Phase decoding-coefficient recovery   for   $V_{\hbox{ca}}(\phi)$}\label{main21}
As in section \ref{section1},  $V_{\hbox{ca}}(\phi)$ is defined  by
\begin{align}\notag \begin{array}{lll}  V_{\hbox{ca}}(\phi)=\{\sum_{k=0}^{\infty}c_k\phi(\cdot-k): \{c_{k}\}\in \ell^2, \ c_{0}\neq0, c_{k}\in \mathbb{C}\}.\end{array}\end{align}
For $f\in V_{\hbox{ca}}(\phi)$, call $\mathcal{N}_{f}:=\sup\{k: c_{k}\neq0\}$   the \emph{maximum coefficient index} of $f$. Clearly, if $f$ is compactly supported (infinitely supported) then
$\mathcal{N}_{f}<\infty(=\infty)$.
On the other hand, if the phases  of  sufficiently many samples   of $|f|$ have been  decoded, then the reconstruction of $f$ can be   linear.
Motivated by this, we will establish an alternating approach:
\emph{phase decoding-coefficient recovery} (\textbf{PD-CR}). Some denotations are necessary for introducing the approach.

As in section \ref{section1},      $\phi$ is  supported on   $(0,s)$ with the  integer $s\geq2$.
For   $n\geq1$, define the   set  $I_{n}$ by
\begin{align}\label{indexset} I_{n}:=\left\{\begin{array}{lll} \{0, 1, \ldots, n-1\},&1\leq n\leq s-1,\\
\{n-s+1,   \ldots, n-1\},&n\geq s.
\end{array}\right.\end{align}
For   $f=\sum_{k=0}^{\infty}c_k\phi(\cdot-k)\in V_{\hbox{ca}}(\phi)$,  define  the  auxiliary  function
$v_{n,f}$ on $(0,1)$ by
\begin{align} \label{vn}\begin{array}{lll}  v_{n,f}(\cdot):=\sum_{k\in I_{n}}c_{k}\phi(n+\cdot-k),\end{array}\end{align}
which together with $\hbox{supp}(\phi)\subseteq(0, s)$ leads to
  \begin{align} \label{diedai} f(n+x)=v_{n,f}(x)+c_{n}\phi(x), x\in (0, 1).\end{align}
Based on $v_{n,f}$,    define two  auxiliary bivariate  functions
\begin{align}  \label{Teren}  \begin{array}{lll} A_{n,f}(x,y)+B_{n,f}(x,y)\textbf{i}\\
:=\frac{|f|(n+x)}{|\phi|^{2}(x)}\big[\bar{\phi}(x)\phi(y)\bar{v}_{n}(y)-\bar{v}_{n}(x)|\phi|^{2}(y)\big],\end{array}\end{align}
and
\begin{align}\label{Cconstant} \begin{array}{lll}  C_{n,f}(x,y)\\
:=|f|^{2}(n+y)-|v_{n,f}|^{2}(y)+2\Re(\frac{v_{n,f}(x)\bar{v}_{n,f}(y)\phi(y)}{\phi(x)})\\
-\frac{|\phi|^{2}(y)}{|\phi|^{2}(x)}[|f|^{2}(n+x)+|v_{n,f}|^{2}(x)],\end{array}\end{align}
whenever   $x, y\in (0,1)$ such that $\phi(x)\neq0$.
The values of the  above bivariate  functions at $(x, y)\in (0, 1)^{2}$  are correlated via  the following  equation w.r.t the unknown  $z\in \mathbb{C}$:
\begin{align}\label{root} \begin{array}{lll} (A_{n,f}(x,y)+B_{n,f}(x,y)\textbf{i})z^{2}-C_{n,f}(x,y)z\\
+A_{n,f}(x,y)-B_{n,f}(x,y)\textbf{i}=0.\end{array}\end{align}
The following lemma states that the  solutions to   \eqref{root}
can provide a precise feedback on the global phase   of $\{c_{k}\}_{k\in I_{n}}$.


\begin{lemma}\label{globalphase}
Let $v_{n,f}(\cdot)$ and $A_{n,f}(\cdot,\cdot)+B_{n,f}(\cdot,\cdot)\textbf{i}$ be defined   in \eqref{vn} and  \eqref{Teren}, respectively. Define $\widetilde{v}_{n,f}(\cdot)$ via \eqref{vn} with $\{c_{k}\}_{k\in I_{n}}$ being replaced  by $\{\widetilde{c}_{k}\}_{k\in I_{n}}:=e^{\textbf{i}\widehat{\theta}}\{c_{k}\}_{k\in I_{n}}$. Moreover,
define  $\widetilde{A}_{n,f}(\cdot,\cdot)+\widetilde{B}_{n,f}(\cdot,\cdot)\textbf{i}$ and $\widetilde{C}_{n,f}(\cdot,\cdot) $ via \eqref{Teren} and \eqref{Cconstant} with
$v_{n,f}(\cdot)$ being replaced by $\widetilde{v}_{n,f}(\cdot)$.
For  fixed $x, y\in (0,1)$ and  $n\geq1$ such that $\phi(x)\neq0$ and $A_{n,f}(x,y)+B_{n,f}(x,y)\textbf{i}\neq0$,  suppose that the two solutions to    \eqref{root}
are $z_{1}$ and $z_{2}$. Then the two solutions    to
\begin{align}\label{harmonic} \begin{array}{lll} (\widetilde{A}_{n,f}(x,y)+\widetilde{B}_{n,f}(x,y)\textbf{i})z^{2}-\widetilde{C}_{n,f}(x,y)z\\
+\widetilde{A}_{n,f}(x,y)-\widetilde{B}_{n,f}(x,y)\textbf{i}=0\end{array}\end{align}
are $e^{\textbf{i}\widehat{\theta}}z_{1}$ and $e^{\textbf{i}\widehat{\theta}}z_{2}$.
\end{lemma}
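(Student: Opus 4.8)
The plan is to track explicitly how the coefficients of the quadratic in \eqref{root} transform when the coefficient vector $\{c_{k}\}_{k\in I_{n}}$ is multiplied by the unimodular factor $e^{\textbf{i}\widehat{\theta}}$. First I would observe that by \eqref{vn} the auxiliary function is linear in the coefficients, so $\widetilde{v}_{n,f}(\cdot)=e^{\textbf{i}\widehat{\theta}}v_{n,f}(\cdot)$, and hence also $\bar{\widetilde{v}}_{n,f}(\cdot)=e^{-\textbf{i}\widehat{\theta}}\bar{v}_{n,f}(\cdot)$. Note that replacing the coefficients does not change $|f|(n+\cdot)$, since $|f|$ is the given data; what changes is only the decomposition $f(n+x)=v_{n,f}(x)+c_{n}\phi(x)$ in \eqref{diedai}, where the tilde stands for the candidate reconstruction, and the modulus $|f|$ is held fixed throughout.

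Next I would substitute $\widetilde{v}_{n,f}=e^{\textbf{i}\widehat{\theta}}v_{n,f}$ into the definitions \eqref{Teren} and \eqref{Cconstant}. In \eqref{Teren}, both terms inside the bracket, namely $\bar{\phi}(x)\phi(y)\bar{v}_{n}(y)$ and $\bar{v}_{n}(x)|\phi|^{2}(y)$, carry exactly one conjugated copy of $v_{n}$, so each picks up the factor $e^{-\textbf{i}\widehat{\theta}}$; the prefactor $|f|(n+x)/|\phi|^{2}(x)$ is unchanged. Therefore
$\widetilde{A}_{n,f}(x,y)+\widetilde{B}_{n,f}(x,y)\textbf{i}=e^{-\textbf{i}\widehat{\theta}}\bigl(A_{n,f}(x,y)+B_{n,f}(x,y)\textbf{i}\bigr)$,
and taking conjugates, $\widetilde{A}_{n,f}(x,y)-\widetilde{B}_{n,f}(x,y)\textbf{i}=e^{\textbf{i}\widehat{\theta}}\bigl(A_{n,f}(x,y)-B_{n,f}(x,y)\textbf{i}\bigr)$. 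For \eqref{Cconstant} I would check term by term that every summand is invariant under $v_{n,f}\mapsto e^{\textbf{i}\widehat{\theta}}v_{n,f}$: the terms $|f|^{2}(n+y)$ and $|f|^{2}(n+x)$ are data; $|v_{n,f}|^{2}(y)$ and $|v_{n,f}|^{2}(x)$ are moduli-squared, hence unchanged; in $\Re\!\bigl(v_{n,f}(x)\bar{v}_{n,f}(y)\phi(y)/\phi(x)\bigr)$ the factors $e^{\textbf{i}\widehat{\theta}}$ and $e^{-\textbf{i}\widehat{\theta}}$ cancel. So $\widetilde{C}_{n,f}(x,y)=C_{n,f}(x,y)$.

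Having these three identities, I would rewrite the candidate equation \eqref{harmonic}. Multiplying through by $e^{\textbf{i}\widehat{\theta}}$, it becomes
$\bigl(A_{n,f}+B_{n,f}\textbf{i}\bigr)z^{2}-e^{\textbf{i}\widehat{\theta}}C_{n,f}\,z+e^{2\textbf{i}\widehat{\theta}}\bigl(A_{n,f}-B_{n,f}\textbf{i}\bigr)=0$
(evaluating all coefficients at $(x,y)$). Substituting $z=e^{\textbf{i}\widehat{\theta}}w$ and dividing by $e^{2\textbf{i}\widehat{\theta}}$ reduces this to
$\bigl(A_{n,f}+B_{n,f}\textbf{i}\bigr)w^{2}-C_{n,f}\,w+\bigl(A_{n,f}-B_{n,f}\textbf{i}\bigr)=0$,
which is precisely \eqref{root}. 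Hence $w\in\{z_{1},z_{2}\}$, i.e. the solutions of \eqref{harmonic} are exactly $e^{\textbf{i}\widehat{\theta}}z_{1}$ and $e^{\textbf{i}\widehat{\theta}}z_{2}$; the hypothesis $A_{n,f}(x,y)+B_{n,f}(x,y)\textbf{i}\neq0$ guarantees both equations are genuinely quadratic so the root count matches. The only step requiring care — and the one I would write out most carefully — is the invariance of $C_{n,f}$, since it has the most terms and one must confirm the cross term really is a real part of a phase-balanced product; the rest is bookkeeping.
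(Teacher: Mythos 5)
Your proposal is correct and follows essentially the same route as the paper: both hinge on the identities $\widetilde{A}_{n,f}+\widetilde{B}_{n,f}\textbf{i}=e^{-\textbf{i}\widehat{\theta}}(A_{n,f}+B_{n,f}\textbf{i})$ and $\widetilde{C}_{n,f}=C_{n,f}$, obtained from $\widetilde{v}_{n,f}=e^{\textbf{i}\widehat{\theta}}v_{n,f}$ and the phase-balance of the cross term. The only (immaterial) difference is the last step, where the paper reads the conclusion off the explicit quadratic formula while you perform the substitution $z=e^{\textbf{i}\widehat{\theta}}w$ in \eqref{harmonic}; both are one line of algebra.
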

\begin{proof}
Through the  direct calculation   we  have $\frac{v_{n,f}(x)\bar{v}_{n,f}(y)\phi(y)}{\phi(x)}=\frac{\widetilde{v}_{n,f}(x)\bar{\widetilde{v}}_{n,f}(y)\phi(y)}{\phi(x)}$ and
\begin{align}\label{zheqi} \begin{array}{lll} C_{n,f}(x,y)=\widetilde{C}_{n,f}(x,y),  \frac{\widetilde{A}_{n,f}(x,y)+\widetilde{B}_{n,f}(x,y)\textbf{i}}{A_{n,f}(x,y)+B_{n,f}(x,y)\textbf{i}}
=e^{-\textbf{i}\widehat{\theta}}.\end{array}\end{align}
On the other hand,
 \begin{align}\begin{array}{lll} \notag  z_{1}, z_{2}\\
 =\frac{C_{n,f}(x,y)\pm\sqrt{C^{2}_{n,f}(x,y)-4|A_{n,f}(x,y)+B_{n,f}(x,y)\textbf{i}|^{2}}}{2(A_{n,f}(x,y)+B_{n,f}(x,y)\textbf{i})},\end{array}\end{align}
which together with \eqref{zheqi} leads to that the two solutions to \eqref{harmonic} are $e^{\textbf{i}\widehat{\theta}}z_{1}$ and $e^{\textbf{i}\widehat{\theta}}z_{2}$.
\end{proof}

Based on Lemma \ref{globalphase}, the following  theorem concerns on  a   guarantee   for   decoding   phases.

\begin{theorem}\label{theorem1}
Let     $f\in V_{\hbox{ca}}(\phi)$.
Assume that    all  the    samples   \begin{align}\label{phaselesssampling}\begin{array}{lll} \{|f(\widehat{t}_{0})|\}\cup\big\{|f(n+\widehat{t}_{n_{j}})|: j=1, 2, 3, n=1, \ldots, \\
\quad\quad\quad\quad\quad\quad\quad\quad\quad\quad\quad \  \mathcal{N}_{f}+s-1\big\}\end{array}\end{align} are nonzeros  where $\widehat{t}_{0}, \widehat{t}_{n_{j}}\in (0, 1).$
Then  the corresponding  phases $\{\theta(f(\widehat{t}_{0}))\}\cup \{\theta(f(n+\widehat{t}_{n_{j}}))\}_{n, j}$    can be determined (up to a global real-valued  number) if
 for every  $n\in \{1,2, \ldots, \mathcal{N}_{f}+s-1\}$,   $\phi(\widehat{t}_{n_{1}})\neq0$ and  the  equation system
\begin{align}\label{equxy0}\left\{\begin{array}{lll} (A_{n,f}(\widehat{t}_{n_{1}},\widehat{t}_{n_{2}})+B_{n,f}(\widehat{t}_{n_{1}},\widehat{t}_{n_{2}})\textbf{i})z^{2}-C_{n,f}(\widehat{t}_{n_{1}},\widehat{t}_{n_{2}})z\\
+A_{n,f}(\widehat{t}_{n_{1}},\widehat{t}_{n_{2}})-B_{n,f}(\widehat{t}_{n_{1}},\widehat{t}_{n_{2}})\textbf{i}=0,\\
(A_{n,f}(\widehat{t}_{n_{1}},\widehat{t}_{n_{3}})+B_{n,f}(\widehat{t}_{n_{1}},\widehat{t}_{n_{3}})\textbf{i})z^{2}-C_{n,f}(\widehat{t}_{n_{1}},\widehat{t}_{n_{3}})z\\
+A_{n,f}(\widehat{t}_{n_{1}},\widehat{t}_{n_{3}})-B_{n,f}(\widehat{t}_{n_{1}},\widehat{t}_{n_{3}})\textbf{i}=0,
\end{array}\right.\end{align}
 w.r.t the  the unknown $z\in \mathbb{C}$ has a unique solution.
\end{theorem}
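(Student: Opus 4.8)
The plan is to decode the phases interval by interval, by induction on $n$, fixing the unavoidable global number once and for all to be $\widehat{\theta}:=\theta(c_{0})$; I will show that for every $n\geq0$ the coefficient $c_{n}$ and the phases of all samples in \eqref{phaselesssampling} lying in $(n,n+1)$ are determined up to the factor $e^{\textbf{i}\widehat{\theta}}$. For $n=0$: since $\hbox{supp}(\phi)\subseteq(0,s)$ and $f$ is causal, only the $k=0$ term of $f$ survives on $(0,1)$, so $f(x)=c_{0}\phi(x)$ there; because $|f(\widehat{t}_{0})|\neq0$ and $c_{0}\neq0$ this forces $\phi(\widehat{t}_{0})\neq0$, whence $|c_{0}|=|f(\widehat{t}_{0})|/|\phi(\widehat{t}_{0})|$ is known and $\theta(f(\widehat{t}_{0}))=\widehat{\theta}+\theta(\phi(\widehat{t}_{0}))$ is pinned down up to $\widehat{\theta}$. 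We keep $|f(\widehat{t}_{0})|/|\phi(\widehat{t}_{0})|$ as the working value of $c_{0}$.

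For the inductive step, suppose $c_{0},\dots,c_{n-1}$ are known up to $e^{\textbf{i}\widehat{\theta}}$. Since $I_{n}\subseteq\{0,\dots,n-1\}$ by \eqref{indexset}, the function $v_{n,f}$ of \eqref{vn} is then known up to $e^{\textbf{i}\widehat{\theta}}$; call the computed version $\widetilde{v}_{n,f}=e^{\textbf{i}\widehat{\theta}}v_{n,f}$, and note that by Lemma \ref{globalphase} the solution set of the version of \eqref{equxy0} built from $\widetilde{v}_{n,f}$ is $e^{\textbf{i}\widehat{\theta}}$ times that of \eqref{equxy0} itself, so the hypothesis ``\eqref{equxy0} has a unique solution'' is unambiguous and carries over to the computed system. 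Next, eliminating $c_{n}$ from the two instances of \eqref{diedai} at $n+\widehat{t}_{n_{1}}$ and $n+\widehat{t}_{n_{j}}$ ($j=2,3$) and imposing that the resulting affine expression for $f(n+\widehat{t}_{n_{j}})$ in the unknown $z:=e^{\textbf{i}\theta(f(n+\widehat{t}_{n_{1}}))}$ have the known modulus $|f(n+\widehat{t}_{n_{j}})|$, a direct computation (this is the ``correlation'' recorded just before Lemma \ref{globalphase}) shows that $z$ satisfies \emph{both} equations of \eqref{equxy0}, whose coefficients $A_{n,f}\pm B_{n,f}\textbf{i}$, $C_{n,f}$ of \eqref{Teren}--\eqref{Cconstant} are computable from $|f(n+\widehat{t}_{n_{j}})|$, $\phi$ and $\widetilde{v}_{n,f}$ (here $\phi(\widehat{t}_{n_{1}})\neq0$ is used). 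Since \eqref{equxy0} has a unique solution, it must equal $z$ — read off the data it is $e^{\textbf{i}(\widehat{\theta}+\theta(f(n+\widehat{t}_{n_{1}})))}$. Then $f(n+\widehat{t}_{n_{1}})=|f(n+\widehat{t}_{n_{1}})|z$, hence $c_{n}=(f(n+\widehat{t}_{n_{1}})-v_{n,f}(\widehat{t}_{n_{1}}))/\phi(\widehat{t}_{n_{1}})$ is recovered up to $e^{\textbf{i}\widehat{\theta}}$, and finally $f(n+\widehat{t}_{n_{j}})=v_{n,f}(\widehat{t}_{n_{j}})+c_{n}\phi(\widehat{t}_{n_{j}})$ delivers the phases at $n+\widehat{t}_{n_{2}}$ and $n+\widehat{t}_{n_{3}}$, again up to $e^{\textbf{i}\widehat{\theta}}$. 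Letting $n$ run from $1$ to $\mathcal{N}_{f}+s-1$ decodes every phase in \eqref{phaselesssampling}; for $n>\mathcal{N}_{f}$ the same argument simply returns $c_{n}=0$.

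Two points deserve care. The reason three samples per unit interval are needed: a single pair $(\widehat{t}_{n_{1}},\widehat{t}_{n_{2}})$ produces only the one quadratic \eqref{root}, whose two roots leave a conjugation-type ambiguity for the phase at $n+\widehat{t}_{n_{1}}$; the third sample supplies a second quadratic in the \emph{same} unknown $z$, and the uniqueness hypothesis says exactly that the two quadratics share a single common root, which removes the ambiguity. The only step that is more than bookkeeping is the verification that $e^{\textbf{i}\theta(f(n+\widehat{t}_{n_{1}}))}$ is a root of \eqref{root}: one writes $\phi(\widehat{t}_{n_{j}})f(n+\widehat{t}_{n_{1}})-\phi(\widehat{t}_{n_{1}})f(n+\widehat{t}_{n_{j}})=\phi(\widehat{t}_{n_{j}})v_{n,f}(\widehat{t}_{n_{1}})-\phi(\widehat{t}_{n_{1}})v_{n,f}(\widehat{t}_{n_{j}})$ (from \eqref{diedai}), substitutes $f(n+\widehat{t}_{n_{1}})=|f(n+\widehat{t}_{n_{1}})|z$, solves for $f(n+\widehat{t}_{n_{j}})$, sets its squared modulus equal to $|f(n+\widehat{t}_{n_{j}})|^{2}$, uses $|z|=1$ and then $\bar z=z^{-1}$ to clear conjugates, and collects terms to obtain precisely the quadratic in \eqref{equxy0} with coefficients $A_{n,f}\pm B_{n,f}\textbf{i}$ and $C_{n,f}$; consistency of this with the $e^{\textbf{i}\widehat{\theta}}$-scaled data is exactly Lemma \ref{globalphase}, and everything else is global-phase bookkeeping.
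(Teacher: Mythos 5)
Your proposal is correct and follows essentially the same route as the paper: a recursion on $n$ in which $c_{n}$ is eliminated via \eqref{diedai}, the modulus constraints at $\widehat{t}_{n_{2}},\widehat{t}_{n_{3}}$ are converted into the two quadratics \eqref{equxy0}, the uniqueness hypothesis pins down $z=e^{\textbf{i}\theta(f(n+\widehat{t}_{n_{1}}))}$, and Lemma \ref{globalphase} shows the whole procedure is equivariant under the global phase ambiguity. The only (harmless) discrepancy is a sign slip in the bookkeeping: with your normalization the computed coefficients are $e^{-\textbf{i}\widehat{\theta}}c_{k}$, so the computed auxiliary function should be $\widetilde{v}_{n,f}=e^{-\textbf{i}\widehat{\theta}}v_{n,f}$ rather than $e^{\textbf{i}\widehat{\theta}}v_{n,f}$.
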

\begin{proof}
As previously,
denote $f=\sum_{k=0}^{\infty}c_k\phi(\cdot-k)$.
We prove the  theorem recursively on $n$.
Suppose that       \begin{align} \label{assumption0} \theta(f(\widehat{t}_{0}))=\theta_{0}\end{align}
is known as the   priori information.
 Then  it follows from $0\neq f(\widehat{t}_{0})=c_{0}\phi(\widehat{t}_{0})$ that
   \begin{align} \label{rongshui} c_{0}=e^{\textbf{i}\theta_{0}}|f(\widehat{t}_{0})|/\phi(\widehat{t}_{0}).\end{align}
For $n=1$, we next  address how to determine $z:=e^{\textbf{i}\theta(f(\widehat{t}_{1_{1}}+1))}$.
 It follows from    \eqref{diedai}   that
\begin{align}\label{iterative00} \left\{\begin{array}{lll}
|v_{1,f}(\widehat{t}_{1_{1}})+c_{1}\phi(\widehat{t}_{1_{1}})|=|f(1+\widehat{t}_{1_{1}})|,\\
|v_{1,f}(\widehat{t}_{1_{2}})+c_{1}\phi(\widehat{t}_{1_{2}})|=|f(1+\widehat{t}_{1_{2}})|,\\
|v_{1,f}(\widehat{t}_{1_{3}})+c_{1}\phi(\widehat{t}_{1_{3}})|=|f(1+\widehat{t}_{1_{3}})|,
\end{array}\right.\end{align}
where $v_{1,f}(\widehat{t}_{1_{j}}), j=1,2,3,$ are  computed by using \eqref{vn} and  \eqref{rongshui} as follows,
\begin{align}\label{buxmy} \begin{array}{lll} v_{1,f}(\widehat{t}_{1_{j}})=\phi(1+\widehat{t}_{1_{j}})c_{0}=\frac{\phi(1+\widehat{t}_{1_{j}})e^{\textbf{i}\theta_{0}}|f(\widehat{t}_{0})|}{\phi(\widehat{t}_{0})}.\end{array}\end{align}
By   \eqref{diedai},
$ |f(1+\widehat{t}_{1_{1}})|z=v_{1,f}(\widehat{t}_{1_{1}})+c_{1}\phi(\widehat{t}_{1_{1}})$.
Since $\phi(\widehat{t}_{1_{1}})\neq0$, we have  \begin{align}\label{c1} c_{1}=\frac{|f(1+\widehat{t}_{1_{1}})|z-v_{1,f}(\widehat{t}_{1_{1}})}{\phi(\widehat{t}_{1_{1}})},\end{align}
 which together with the last two  identities   in \eqref{iterative00}  leads to
 \begin{align}\label{sxdnew00} \begin{array}{lll} \big|v_{1,f}(\widehat{t}_{1_{j}})+\frac{|f(1+\widehat{t}_{1_{1}})|z-v_{1,f}(\widehat{t}_{1_{1}})}{\phi(\widehat{t}_{1_{1}})}\phi(\widehat{t}_{1_{j}})\big|^{2}\\
 =|f(1+\widehat{t}_{1_{j}})|^{2}, j=2,3.\end{array}\end{align}
By direct calculation, we can prove that   \eqref{sxdnew00} is  equivalent to   \eqref{equxy0} for  $n=1$. Since there exists a unique solution to
\eqref{equxy0},    $z$ can be determined, and consequently    $c_{1}$
can be  done  by \eqref{c1}.
Now    $\{\theta(f(1+\widehat{t}_{1_{l}})):  l\neq1\}$ are  determined  by $\theta(f(1+\widehat{t}_{1_{l}}))=\theta(v_{1,f}(\widehat{t}_{1_{l}})+c_{1}\phi(\widehat{t}_{1_{l}}))$ with $v_{1,f}(\widehat{t}_{1_{l}})$
 given  by \eqref{buxmy}. Suppose that
$\{\theta(f(\widehat{t}_{0}))\}\cup\{\theta(f(k+\widehat{t}_{k_{j}})): j=1, 2, 3, k=1, \ldots, n-1\}$
and $\{c_{k}\}^{n-1}_{k=0}$ have been determined, where $n<\mathcal{N}_{f}+s$. Through  the similar    procedures as above,
$z:=e^{\textbf{i}\theta(f(\widehat{t}_{n_{1}}+n))}$ can be determined by \eqref{equxy0},
and $c_{n}=[|f(n+\widehat{t}_{n_{1}})|z-v_{n,f}(\widehat{t}_{n_{1}})]/\phi(\widehat{t}_{n_{1}}).$
Then by   \eqref{diedai}, $\theta(f(n+\widehat{t}_{n_{j}}))$
can be computed. By the recursion on $n$, the phases $\{\theta(f(\widehat{t}_{0}))\}\cup\{\theta(f(n+\widehat{t}_{n_{j}})): j=1, 2, 3, n= 1,  \ldots, \mathcal{N}_{f}+s-1\}$    can be determined.

Recall that the above determination is achieved by   the priori information \eqref{assumption0}. Without this information,
now we   assign     \begin{align}\label{assumption1} \theta(f(\widehat{t}_{0}))=\widetilde{\theta}_{0},\end{align}
where $\widetilde{\theta}_{0}\in [0, 2\pi)$. We   next  prove that under this   assignment, $\widetilde{f}:=e^{\textbf{i}(\widetilde{\theta}_{0}-\theta_{0})}f=\sum^{\infty}_{k=0}\widetilde{c}_{k}\phi(\cdot-k)$
can be determined by the   samples in \eqref{phaselesssampling}, where $\widetilde{c}_{k}=e^{\textbf{i}(\widetilde{\theta}_{0}-\theta_{0})}c_{k}$. Consequently, $\{\theta(f(\widehat{t}_{0}))+\widetilde{\theta}_{0}-\theta_{0}\}\cup\{\theta(f(n+\widehat{t}_{n_{j}}))+\widetilde{\theta}_{0}-\theta_{0}: j=1, 2, 3, n=1,  \ldots, \mathcal{N}_{f}+s-1\}$    can be determined.

For   \eqref{assumption1},   through  the similar  analysis as  in  \eqref{rongshui}   we have
\begin{align} \notag  \begin{array}{lll} \widetilde{c}_{0}=\frac{e^{\textbf{i}\widetilde{\theta}_{0}}|f(\widehat{t}_{0})|}{\phi(\widehat{t}_{0})}
=e^{\textbf{i}(\widetilde{\theta}_{0}-\theta_{0})}c_{0}\end{array}\end{align}
and $\theta(\widetilde{f}(\widehat{t}_{0}))=\theta(\phi(\widehat{t}_{0})\widetilde{c}_{0})=\theta(f(\widehat{t}_{0}))+\widetilde{\theta}_{0}-\theta_{0}$.
As in Lemma \ref{globalphase}, define  $\widetilde{A}_{1,f}(x,y)+\widetilde{B}_{1,f}(x,y)\textbf{i}$ and $ \widetilde{C}_{1,f}(x,y)$  via \eqref{Teren} and \eqref{Cconstant}  with  $c_{0}$ being replaced  by $\widetilde{c}_{0}$.
Through  the similar  analysis as   in \eqref{sxdnew00},  $z:=e^{\textbf{i}\theta(\widetilde{f}(\widehat{t}_{1_{1}}+1))}$
satisfies   \begin{align}\label{equxy000}\left\{\begin{array}{lll} (\widetilde{A}_{1,f}(\widehat{t}_{1_{1}},\widehat{t}_{1_{2}})+\widetilde{B}_{1,f}(\widehat{t}_{1_{1}},\widehat{t}_{1_{2}})\textbf{i})z^{2}-\widetilde{C}_{1,f}(\widehat{t}_{1_{1}},\widehat{t}_{1_{2}})z\\
+\widetilde{A}_{1,f}(\widehat{t}_{1_{1}},\widehat{t}_{1_{2}})-\widetilde{B}_{1,f}(\widehat{t}_{1_{1}},\widehat{t}_{1_{2}})\textbf{i}=0,\\
(\widetilde{A}_{1,f}(\widehat{t}_{1_{1}},\widehat{t}_{1_{3}})+\widetilde{B}_{1,f}(\widehat{t}_{1_{1}},\widehat{t}_{1_{3}})\textbf{i})z^{2}-\widetilde{C}_{1,f}(\widehat{t}_{1_{1}},\widehat{t}_{1_{3}})z\\
+\widetilde{A}_{1,f}(\widehat{t}_{1_{1}},\widehat{t}_{1_{3}})-\widetilde{B}_{1,f}(\widehat{t}_{1_{1}},\widehat{t}_{1_{3}})\textbf{i}=0.
\end{array}\right.\end{align}
By the same  argument as    in   the proof of Lemma \ref{globalphase},  we can prove that   \begin{align}\label{guanxi}  \begin{array}{lll} \frac{\widetilde{A}_{1,f}(x,y)+\widetilde{B}_{1,f}(x,y)\textbf{i}}{A_{1,f}(x,y)+B_{1,f}(x,y)\textbf{i}}=e^{\textbf{i}(\theta_{0}-\widetilde{\theta}_{0})},
\widetilde{C}_{1,f}(x,y)=C_{1,f}(x,y),\end{array}\end{align}
which together with
\eqref{equxy0} having a unique solution  leads to that
\eqref{equxy000} also  has a unique solution. Applying
Lemma \ref{globalphase} with $\widehat{\theta}=\widetilde{\theta}_{0}-\theta_{0}$, we have $z=e^{\textbf{i}(\widetilde{\theta}_{0}-\theta_{0})}e^{\textbf{i}\theta(f(\widehat{t}_{1_{1}}+1))}$.
Consequently, $\widetilde{c}_{1}=e^{\textbf{i}(\widetilde{\theta}_{0}-\theta_{0})}c_{1}$.
Suppose that $\{\theta(f(\widehat{t}_{0}))+\widetilde{\theta}_{0}-\theta_{0}\}\cup\{\theta(f(k+\widehat{t}_{k_{j}}))+\widetilde{\theta}_{0}-\theta_{0}: j=1, 2, 3, k=1,  \ldots, n-1\}$ (or $\widetilde{c}_{k}=e^{\textbf{i}(\widetilde{\theta}_{0}-\theta_{0})}c_{k}$)
have been determined.
Define $\widetilde{A}_{n,f}+\widetilde{B}_{n,f}\textbf{i}$ and
$\widetilde{C}_{n,f}$ via \eqref{Teren} and \eqref{Cconstant}, respectively, by replacing $c_{k}$ by $\widetilde{c}_{k}$.
By Lemma \ref{globalphase} and the similar  analysis as in \eqref{guanxi}, we can prove that $\widetilde{c}_{n}=e^{\textbf{i}(\widetilde{\theta}_{0}-\theta_{0})}c_{n}$.
The rest of the proof can be easily  concluded by the  recursion on $n$.
\end{proof}

The procedures in the proof of Theorem \ref{theorem1}  for  decoding the  phases
$\{\theta(f(\widehat{t}_{0}))\}\cup\{\theta(f(n+\widehat{t}_{n_{j}}))\}_{n, j}$, up to a global real number,   are conducted recursively on $n$.
And they   alternate  with those for recovering the coefficients $\{c_{n}\}$.
Next we summarize them  to establish the PD-CR.

\textbf{Approach II-B} \label{decoder}

\textbf{Input}: Samples   $\{|f(\widehat{t}_{0})|\}\cup\{|f(k+\widehat{t}_{k_{j}})|:  j=1, 2, 3, k=1, \ldots, n\}$ where $\widehat{t}_{0}, \widehat{t}_{k_{j}}\in (0, 1)$ and $n< \mathcal{N}_{f}+s$;
initial phase $\theta(f(\widehat{t}_{0}))=\widetilde{\theta}_{0}$ and  $c_{0}:= e^{\textbf{i}\widetilde{\theta}_{0}}|f(\widehat{t}_{0})|/\phi(\widehat{t}_{0})$.

\textbf{Output}: $\{c_{k}\}^{n}_{k=0}$ and $\{\theta(f(\widehat{t}_{0}))\}\cup\{\theta(f(k+\widehat{t}_{k_{j}})):  j=1, 2, 3, k=1, \ldots, n\}$.

\textbf{Recursion assumption}: Assume that the  phases  $\{\theta(f(\widehat{t}_{0}))\}\cup\{\theta(f(k+\widehat{t}_{k_{j}})):  j=1, 2, 3, k=1, \ldots, n-1\}$
and coefficients  $\{c_{k}\}^{n-1}_{k=0}$ have been recovered.  Then    $\{\theta(f(n+\widehat{t}_{n_{j}})): j=1, 2, 3\}$ and $c_{n}$ are recovered by the following steps:


\textbf{step 1}: Compute $v_{n,f}(\widehat{t}_{n_{j}})$ via  \eqref{vn} with $j=1,2,3$. Compute $A_{n,f}(\widehat{t}_{n_{1}},\widehat{t}_{n_{2}})+B_{n,f}(\widehat{t}_{n_{1}},\widehat{t}_{n_{2}})\textbf{i}$ and
$A_{n,f}(\widehat{t}_{n_{1}},\widehat{t}_{n_{3}})+B_{n,f}(\widehat{t}_{n_{1}},\widehat{t}_{n_{3}})\textbf{i}$,
$C_{n,f}(\widehat{t}_{n_{1}}, \widehat{t}_{n_{2}})$ and $C_{n,f}(\widehat{t}_{n_{1}},\widehat{t}_{n_{3}})$ via \eqref{Teren} and \eqref{Cconstant}, respectively.

\textbf{step 2}: $\theta(f(n+\widehat{t}_{n_{1}}))$ is decoded  by  computing
\begin{align}\label{rty}\begin{array}{lll}e^{\textbf{i}\theta(f(n+\widehat{t}_{n_{1}}))}\\
=\arg\min_{z_{n,k}\in\{z_{n,1},z_{n,2}\}}\Big\{\min\{|z_{n,k}-z_{n,l}|: l=3,4\}\Big\}\end{array}\end{align}
where
\begin{equation}\notag\begin{array}{lll}
z_{n,k}\\
=\frac{C_{n,f}(\widehat{t}_{n_{1}},\widehat{t}_{n_{2}})}{2(A_{n,f}(\widehat{t}_{n_{1}},\widehat{t}_{n_{2}})+B_{n,f}(\widehat{t}_{n_{1}},\widehat{t}_{n_{2}})\textbf{i})}\\
\pm\frac{\sqrt{C^{2}_{n,f}(\widehat{t}_{n_{1}},\widehat{t}_{n_{2}})-4|A_{n,f}(\widehat{t}_{n_{1}},\widehat{t}_{n_{2}})+B_{n,f}(\widehat{t}_{n_{1}},\widehat{t}_{n_{2}})\textbf{i}|^{2}}}{2(A_{n,f}(\widehat{t}_{n_{1}},\widehat{t}_{n_{2}})+B_{n,f}(\widehat{t}_{n_{1}},\widehat{t}_{n_{2}})\textbf{i})}
\end{array}\end{equation}
with $k=1,2$,
and
\begin{align}\notag\begin{array}{lll}
z_{n,l}\\
=\frac{C_{n,f}(\widehat{t}_{n_{1}},\widehat{t}_{n_{3}})}{2(A_{n,f}(\widehat{t}_{n_{1}},\widehat{t}_{n_{3}})+B_{n,f}(\widehat{t}_{n_{1}},\widehat{t}_{n_{3}})\textbf{i})}\\
\pm\frac{\sqrt{C^{2}_{n,f}(\widehat{t}_{n_{1}},\widehat{t}_{n_{3}})-4|A_{n,f}(\widehat{t}_{n_{1}},\widehat{t}_{n_{3}})+B_{n,f}(\widehat{t}_{n_{1}},\widehat{t}_{n_{3}})\textbf{i}|^{2}}}{2(A_{n,f}(\widehat{t}_{n_{1}},\widehat{t}_{n_{3}})+B_{n,f}(\widehat{t}_{n_{1}},\widehat{t}_{n_{3}})\textbf{i})}
\end{array}\end{align}
with $l=3,4$.

\textbf{step 3}: Compute $c_{n}=\big[e^{\textbf{i}\theta(f(n+\widehat{t}_{n_{1}}))}|f(n+\widehat{t}_{n_{1}})|-v_{n,f}(\widehat{t}_{n_{1}})\big]/\phi(\widehat{t}_{n_{1}}).$
Compute $f(n+\widehat{t}_{n_{j}})$ by \eqref{diedai}, and
$\theta(f(n+\widehat{t}_{n_{j}}))=\theta\big(\frac{f(n+\widehat{t}_{n_{j}})}{|f(n+\widehat{t}_{n_{j}})|}\big)$ where  $j\neq 1$.


\subsection{Random phaseless sampling  for $V_{\hbox{ca}}(\phi)$}\label{prs}
Next we  replace  the    points $\{\widehat{t}_{0}\}\cup\{\widehat{t}_{n_{1}},\widehat{t}_{n_{2}}, \widehat{t}_{n_{3}}\}_{n}$  in Theorem \ref{theorem1} \eqref{phaselesssampling}
by  random variables, and
establish
our \emph{first main theorem} as follows.

\begin{theorem}\label{main1}
Let  $\phi=\phi_{\Re}+\textbf{i}\phi_{\Im}$  be a complex-valued  GHC-generator  such that   $\hbox{supp}(\phi)\subseteq (0,s)$   with the  integer $s\geq 2$.
Then  any  nonseparable  signal   $f\in V_{\hbox{ca}}(\phi)$  can be determined  (up to  a unimodular scalar) with    probability $1$ by the random samples
$\{|f(t_{0})|\}\cup\{|f(n+t_{n_{1}})|,|f(n+t_{n_{2}})|, |f(n+t_{n_{3}})|: n=1,  \ldots, \infty\}$,
where
 the i.i.d random  variables   $\{t_{0}\}\cup\{t_{n_{1}},t_{n_{2}}, t_{n_{3}}: n=1,  \ldots, \infty\}\sim\textbf{U}(0,1)$.
\end{theorem}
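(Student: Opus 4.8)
The plan is to deduce Theorem~\ref{main1} from Theorem~\ref{theorem1} by showing that, with probability $1$, the i.i.d.\ uniform points $\{t_{0}\}\cup\{t_{n_{1}},t_{n_{2}},t_{n_{3}}:n\geq1\}$ realize all of its hypotheses. On that event Theorem~\ref{theorem1} recovers the phases $\{\theta(f(\widehat t_{0}))\}\cup\{\theta(f(n+\widehat t_{n_{j}}))\}_{n,j}$ up to one global unimodular factor, hence the coefficients $c_{0},c_{1},\dots$ by the explicit formulas in its proof, hence $f$ up to a unimodular scalar. So it suffices to show that the \emph{bad event} ---``some sample in \eqref{phaselesssampling} vanishes, or for some $n\in\{1,\dots,\mathcal N_{f}+s-1\}$ one has $\phi(\widehat t_{n_{1}})=0$, or the system \eqref{equxy0} does not have a unique solution''--- has probability $0$. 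This is a countable union over $n$ (and over finitely many sub-conditions), so it is enough to bound each piece; and since the points are i.i.d.\ uniform, each piece has probability $0$ as soon as the corresponding subset of $(0,1)$ (or $(0,1)^{3}$) is Lebesgue-null.

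First I would dispatch the ``vanishing sample'' and ``$\phi(\widehat t_{n_{1}})=0$'' pieces. Because $\hbox{supp}(\phi)\subseteq(0,s)$, the restriction of $f(n+\cdot)$ to $(0,1)$ equals $\sum_{a=0}^{s-1}c_{n-a}\phi(\cdot+a)$ (with $c_{\ell}=0$ outside $\{0,\dots,\mathcal N_{f}\}$), a linear combination of the system $\Lambda_{\phi,1}$ of Proposition~\ref{Haarcond}; nonseparability of $f$ (together with $c_{0}\neq0$) rules out a block of $s$ consecutive vanishing coefficients in the range $\leq\mathcal N_{f}+s-1$, which would split $f$ into two nonzero functions supported on the disjoint sets $(0,n)$ and $(n,\infty)$, so $f(n+\cdot)\not\equiv0$ on $(0,1)$ for all such $n$; the GHC for $\Lambda_{\phi,1}$ then gives $\mu(\mathcal Z_{f(n+\cdot)}\cap(0,1))=0$. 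The cases $f(\widehat t_{0})=c_{0}\phi(\widehat t_{0})$ and $\phi(\widehat t_{n_{1}})=0$ reduce to $\mu(\mathcal Z_{\phi}\cap(0,1))=0$, which holds since $\phi_{\Re}^{2}+\phi_{\Im}^{2}\in\Xi_{\phi}$ is not identically zero. For the remaining piece I note that \eqref{equxy0} has a unique solution provided (a) both leading coefficients $\alpha_{j}:=A_{n,f}(\widehat t_{n_{1}},\widehat t_{n_{j}})+B_{n,f}(\widehat t_{n_{1}},\widehat t_{n_{j}})\textbf{i}$, $j=2,3$, are nonzero, and (b) $\Im(\alpha_{2}\overline{\alpha_{3}})\neq0$. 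Indeed, since $C_{n,f}$ is real-valued and the true phase $e^{\textbf{i}\theta(f(n+\widehat t_{n_{1}}))}$ is always a common root of the two quadratics (this is how they are built in the proof of Theorem~\ref{theorem1}), both quadratics have all roots on the unit circle and share the true phase, so they acquire a second common root ---and the solution fails to be unique--- only when they are proportional, i.e.\ only when $\Im(\alpha_{2}\overline{\alpha_{3}})=0$. For (a): $\alpha_{j}$ equals a nonzero scalar times $\bar\phi(\widehat t_{n_{1}})\phi(\widehat t_{n_{j}})\bar v_{n,f}(\widehat t_{n_{j}})-\bar v_{n,f}(\widehat t_{n_{1}})|\phi|^{2}(\widehat t_{n_{j}})$, which as a function of $\widehat t_{n_{j}}$ lies in $\hbox{span}\{\Lambda_{\phi,2}\}$ (equivalently, its real and imaginary parts lie in $\hbox{span}\{\Xi_{\phi}\}$) and is not identically zero, since nonseparability keeps $v_{n,f}\not\equiv0$ and hence some coefficient $c_{\ell}$, $\ell\in I_{n}$, is nonzero; the GHC then makes it vanish only on a null $\widehat t_{n_{j}}$-set.

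The crux is condition (b): that $\{(\widehat t_{n_{1}},\widehat t_{n_{2}},\widehat t_{n_{3}})\in(0,1)^{3}:\Im(\alpha_{2}\overline{\alpha_{3}})=0\}$ is Lebesgue-null. Expanding $\phi\bar v_{n,f}$ and $|\phi|^{2}$ in $\Xi_{\phi}$, for each fixed $\widehat t_{n_{1}}$ with $\phi(\widehat t_{n_{1}})\neq0$ one gets $\alpha_{j}=\sum_{g\in\Xi_{\phi}}\gamma_{g}\,g(\widehat t_{n_{j}})$ with suitable \emph{complex} coefficients $\gamma_{g}$ (the $g$ being the real-valued generators of $\Xi_{\phi}$), and the two members $g_{k},g_{k}'$ of $\Xi_{\phi}$ attached to the pair-index $k$ of any nonzero coefficient $c_{n-k}$ carry $\gamma_{g_{k}'}=-\textbf{i}\,\gamma_{g_{k}}\neq0$, which is \emph{not} a real multiple of $\gamma_{g_{k}}$, whence $\Im(\gamma_{g_{k}}\overline{\gamma_{g_{k}'}})=|\gamma_{g_{k}}|^{2}\neq0$. (This is exactly the mechanism that fails in Theorem~\ref{examm}: for $\phi=e^{\textbf{i}\alpha\cdot}\varphi$ each such pair collapses to parallel multiples of $\varphi\varphi(\cdot+k)$, so $\Xi_{\phi}$ is not linearly independent; the GHC excludes this, and nonseparability guarantees some $c_{n-k}\neq0$.) Hence $\Im(\alpha_{2}\overline{\alpha_{3}})=\sum_{g,g'}\Im(\gamma_{g}\overline{\gamma_{g'}})\,g(\widehat t_{n_{2}})g'(\widehat t_{n_{3}})$ has a nonzero antisymmetric coefficient matrix; freezing $\widehat t_{n_{3}}$ outside a null set keeps the coefficient of $g(\cdot)$ nonzero for some $g$ (these coefficients are, in $\widehat t_{n_{3}}$, nonzero elements of $\hbox{span}\{\Xi_{\phi}\}$), turning $\Im(\alpha_{2}\overline{\alpha_{3}})$ into a nonzero element of $\hbox{span}\{\Xi_{\phi}\}$ in the variable $\widehat t_{n_{2}}$, so the GHC for $\Xi_{\phi}$ gives a null $\widehat t_{n_{2}}$-set; Fubini (first $\widehat t_{n_{2}}$, then $\widehat t_{n_{3}}$, then $\widehat t_{n_{1}}$, the exceptional $\widehat t_{n_{1}}$-set lying in $\mathcal Z_{\phi}\cap(0,1)$) then shows the whole set in $(0,1)^{3}$ is null.

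I expect condition (b) to be the main obstacle: one must (i) verify that non-uniqueness of \eqref{equxy0} is captured exactly by the reality condition $\Im(\alpha_{2}\overline{\alpha_{3}})=0$; (ii) realize the resulting expression, after fixing the other two sampling points, as a genuine (non-identically-zero) element of $\hbox{span}\{\Xi_{\phi}\}$ ---which is where the linear independence built into the GHC and the nonseparability of $f$ are both indispensable--- and (iii) run the Fubini argument across the three sampling variables. Collecting the countably many probability-zero events, the bad event has probability $0$, so with probability $1$ all hypotheses of Theorem~\ref{theorem1} hold and $f$ is reconstructed up to a unimodular scalar.
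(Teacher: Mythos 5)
Your proposal is correct and follows the same overall architecture as the paper: reduce to the hypotheses of Theorem~\ref{theorem1}, show each hypothesis fails only on a null set via the GHC, and take a countable union over $n$. The treatment of the vanishing samples, of $\phi(\widehat t_{n_1})=0$, and of the nonvanishing of the leading coefficients $\alpha_j$ coincides with the paper's Lemmas~\ref{nonsp} and~\ref{theoremX}. Where you genuinely diverge is in the crux step, the almost-sure uniqueness of the solution to \eqref{equxy0}. The paper (Lemma~\ref{auxill0}) also identifies the bad event as $\Im(\alpha_2\overline{\alpha_3})=0$ (its event $\mathfrak{E}_1$), but disposes of it by conditioning on $\mathfrak{E}_2$, rewriting the condition as $\overline{\alpha_3}=b(t_{n_1},t_{n_2})\,\alpha_3$ with $b$ unimodular, and invoking Lemma~\ref{theo2.5}, which says the phase of $\alpha_3$ avoids any fixed value almost surely; that lemma is in turn proved by showing $\Re$ and $\Im$ of the bilinear expression are separately a.s.\ nonzero and then rotating $f$ by a unimodular scalar to handle a general target phase. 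You instead expand $\Im(\alpha_2\overline{\alpha_3})$ directly as a bilinear form in the $\Xi_\phi$-basis with an antisymmetric coefficient matrix, observe that the paired coefficients $\gamma_{g_k'}=-\textbf{i}\gamma_{g_k}$ force a nonzero entry $\Im(\gamma_{g_k}\overline{\gamma_{g_k'}})=|\gamma_{g_k}|^2$, and finish by GHC plus Fubini in the three sampling variables. The two arguments rest on the same two ingredients (linear independence of $\Xi_\phi$ with measure-zero zero sets, and nonseparability supplying a nonzero $c_k$, $k\in I_n$), but yours is somewhat more direct -- it avoids the intermediate phase-avoidance lemma and its rotation trick -- and it makes explicit why Theorem~\ref{examm} fails: for $\phi=e^{\textbf{i}\alpha\cdot}\varphi$ the paired generators of $\Xi_\phi$ become parallel and the antisymmetric matrix degenerates. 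The paper's factorization through Lemma~\ref{theo2.5} is slightly more modular, since that lemma is reused verbatim for both quadratics.
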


\begin{proof}
The proof is given in section \ref{proof2}.
\end{proof}

The following proposition concerns on  local reconstruction.
\begin{proposition}\label{main1proposition}
Let  $\phi=\phi_{\Re}+\textbf{i}\phi_{\Im}$ and $f$ be as in Theorem \ref{main1}.
Then for any integer $L>1$, the restriction $f_{[0, L]}$ of $f$ on $[0, L]$ can be determined (up to  a unimodular scalar) with    probability $1$,
by the random   samples $\{|f(t_{0})|\}\cup\{|f(n+t_{n_{1}})|,|f(n+t_{n_{2}})|, |f(n+t_{n_{3}})|: n=1,  \ldots, L-1\}$,
where
 the i.i.d random  variables   $\{t_{0}\}\cup\{t_{n_{1}},t_{n_{2}}, t_{n_{3}}: n=1,  \ldots, L-1\}\sim\textbf{U}(0,1)$.
\end{proposition}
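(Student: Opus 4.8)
The plan is to rerun the phase decoding--coefficient recovery of Theorem \ref{theorem1} (Approach II-B), but only through the first $L-1$ recursion levels, and to observe that the probability-$1$ event on which it succeeds --- which in Theorem \ref{main1} is an intersection of countably many probability-$1$ events --- here collapses to an intersection of only finitely many of them.

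First I would reduce the claim to recovering the finite tuple $(c_{0},\ldots,c_{L-1})$ up to a common unimodular factor. Writing $f=\sum_{k\geq0}c_{k}\phi(\cdot-k)$, the support condition $\hbox{supp}(\phi)\subseteq(0,s)$ in \eqref{support} forces $\phi(\cdot-k)\equiv0$ on $[0,L]$ whenever the integer $k\geq L$, so $f_{[0,L]}=\big(\sum_{k=0}^{L-1}c_{k}\phi(\cdot-k)\big)|_{[0,L]}$ and it suffices to pin down $(c_{0},\ldots,c_{L-1})$ up to a unimodular scalar. Moreover, for every $n\in\{1,\ldots,L-1\}$ the index set $I_{n}$ of \eqref{indexset} satisfies $I_{n}\subseteq\{0,\ldots,n-1\}$, so the auxiliary function $v_{n,f}$ of \eqref{vn} --- hence the identity \eqref{diedai} at level $n$ --- involves only coefficients produced at strictly earlier levels; the recursion through level $L-1$ is therefore self-contained and its output is exactly $c_{0},\ldots,c_{L-1}$.

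Next I would apply Theorem \ref{theorem1} in truncated form: assign $\theta(f(\widehat{t}_{0}))=\widetilde{\theta}_{0}$ arbitrarily, set $c_{0}=e^{\textbf{i}\widetilde{\theta}_{0}}|f(\widehat{t}_{0})|/\phi(\widehat{t}_{0})$, and for $n=1,\ldots,L-1$ in turn decode $e^{\textbf{i}\theta(f(n+\widehat{t}_{n_{1}}))}$ (up to the global factor $e^{\textbf{i}(\widetilde{\theta}_{0}-\theta_{0})}$) from the equation system \eqref{equxy0} and then read off $c_{n}$ from \eqref{diedai} at $x=\widehat{t}_{n_{1}}$, exactly as in the proof of Theorem \ref{theorem1} and in steps 1--3 of Approach II-B. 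This consumes precisely the samples $\{|f(\widehat{t}_{0})|\}\cup\{|f(n+\widehat{t}_{n_{j}})|:j=1,2,3,\ n=1,\ldots,L-1\}$ and returns $(\widetilde{c}_{0},\ldots,\widetilde{c}_{L-1})=e^{\textbf{i}(\widetilde{\theta}_{0}-\theta_{0})}(c_{0},\ldots,c_{L-1})$, hence $f_{[0,L]}$ up to a unimodular scalar.

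It remains to check that, with probability $1$, the hypotheses of Theorem \ref{theorem1} hold at every level $n\in\{1,\ldots,L-1\}$: the relevant samples (a subset of those in \eqref{phaselesssampling}) are nonzero, $\phi(t_{n_{1}})\neq0$, and the equation system \eqref{equxy0} has a unique solution. Level by level this is the same analysis as in the proof of Theorem \ref{main1}: a given one of these conditions can fail only when a uniformly distributed point, or pair of points, on $(0,1)$ lands in the zero set of some nonzero function built from $\phi$ itself and from the components of $\Xi_{\phi}$ --- for the nonvanishing of $\phi$ on $(0,1)$ one uses Proposition \ref{Haarcond}, giving $\mu(\mathcal{Z}_{\phi}\cap(0,1))=0$, and for the unique solvability of \eqref{equxy0} one uses the GHC \eqref{quantity11} for $\Xi_{\phi}$ together with the nonseparability of $f$ --- and by the GHC every such zero set has Lebesgue measure $0$, so each such failure has conditional probability $0$ given success at the earlier levels. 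Hence the probability that all the hypotheses hold over the $L-1$ levels $n=1,\ldots,L-1$ is a finite product of conditional probabilities each equal to $1$, so it equals $1$; on the corresponding event the truncated recursion of Theorem \ref{theorem1} goes through and determines $f_{[0,L]}$ up to a unimodular scalar. The only genuinely substantive ingredient is the probability-$1$ unique solvability of \eqref{equxy0} at each level, which is the heart of Theorem \ref{main1} and is inherited here verbatim over finitely many levels; the remainder is the bookkeeping that cutting the recursion at level $L-1$ loses no information about $f_{[0,L]}$.
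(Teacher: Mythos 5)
Your proposal is correct and follows essentially the same route as the paper's own (much terser) proof in Section \ref{xcvc}: reduce via $\hbox{supp}(\phi)\subseteq(0,s)$ to recovering $\{c_{n}\}_{n=0}^{L-1}$, then run the recursion of Section \ref{proof2} for only $L-1$ levels, the success event being a finite intersection of probability-$1$ events. Your added bookkeeping (that $I_{n}\subseteq\{0,\ldots,n-1\}$ makes the truncated recursion self-contained) is a detail the paper leaves implicit but does not change the argument.
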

\begin{proof}
The proof is given in section \ref{xcvc}.
\end{proof}

\subsection{Conjugation ambiguity does not occur for  $V_{\hbox{ca}}(\phi)$}\label{conjugationambiguity}
In this section we will compare the result  in section \ref{prs} with  that in \cite{Fienup45} which  concerns  on the conjugation phase retrieval.
We start with the definition of conjugation ambiguity (c.f. \cite{Fienup45,Fienup46}) of PR in an SIS.
The  conjugation ambiguity means  that there exists a signal  $f$ in an SIS, which is not real-valued (up to  a unimodular scalar),
such that it  is not distinguishable from its conjugation $\bar{f}$ which still lies in this SIS.
For a real-valued generator $\varphi$, it is clear  that
\begin{align}\label{ambi123} |\sum^{N}_{k=0}c_{k}\varphi(\cdot-k)|=
|\sum^{N}_{k=0}\overline{c}_{k}\varphi(\cdot-k)| \end{align}
for any sequence $\{c_{k}\}^{N}_{k=0}\subseteq \mathbb{C}.$
That is, the conjugation ambiguity is inevitable for phaseless sampling  in a real-generated SIS.
Most  recently,   C.K. Lai, F. Littmann and  E. Weber \cite{Fienup45}  investigated  the conjugate phase retrieval of complex-valued bandlimited signals, namely, to reconstruct them  up to the conjugation ambiguity.

It is easy to check that \eqref{ambi}
leads to \eqref{ambi123}. Despite all this, the following remark states that
there are some essential differences between
the  result in Theorem \ref{main1} and  that in  \cite{Fienup45}.

\begin{remark}
(1)
Unlike the conjugate phase retrieval, the conjugation ambiguity does not occur in    Theorem \ref{main1}.
Or else, suppose that  $f(x)=|f(x)|e^{\textbf{i}\rho(x)}$ and $\bar{f}(x)=|f(x)|e^{-\textbf{i}\rho(x)}$ both lie in $V_{\hbox{ca}}(\phi)$.
Clearly, $|f(x)|=|\bar{f}(x)|$. Then it follows from Theorem \ref{main1} that $f(x)=e^{\textbf{i}c}\bar{f}(x)$ which leads to
$\rho(x)\equiv c/2.$ This is a contradiction with the definition of conjugation ambiguity.
(2) Our generator is complex-valued
and compactly supported while the generator in  \cite{Fienup45} is real-valued and not compactly-supported.
(3) Our sampling   is random while  that in \cite{Fienup45} is deterministic.
\end{remark}

\subsection{Some lemmas for proving Theorem \ref{main1}}\label{lemmamaa}
%
%
%
%
We start with     the so called  maximum  gap.

\begin{definition}\label{gap}
For
$f=\sum^{\infty}_{k=0} c_{k}\phi(\cdot-k)\in V_{\hbox{ca}}(\phi)$,
its maximum gap  is defined as
\begin{align}\notag
\mathcal{G}_{f}=\left\{\begin{array}{lll} \max\Big\{1\leq\gamma<\infty:   \exists i\geq1  \ \hbox{s.t.} \ c_{i+\gamma}\neq0,\\
\quad \quad\quad c_{i}=\ldots=c_{i+\gamma-1}=0 \Big\},  \  \hbox{if} \ \exists c_{k}=0,\\
0, \quad \quad \quad \quad \quad \quad \quad \quad   \hbox{else}.
\end{array}\right.\end{align}
\end{definition}

The following lemma concerns on the relationship
between the    maximum gap   and   nonseparability.

\begin{lemma}\label{nonsp}
If     $f=\sum^{\infty}_{k=0}c_{k}\phi(\cdot-k)\in V_{\hbox{ca}}(\phi)$  is nonseparable,  then
$\mathcal{G}_{f}< s-1$.
\end{lemma}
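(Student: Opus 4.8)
The plan is to argue by contraposition: assuming $\mathcal{G}_f\ge s-1$, I would exhibit a nontrivial separation $f=f_1+f_2$. By the definition of $\mathcal{G}_f$ there are $i\ge1$ and $\gamma:=\mathcal{G}_f\ge s-1$ with $c_i=c_{i+1}=\cdots=c_{i+\gamma-1}=0$ and $c_{i+\gamma}\ne0$. First I would note that the coefficient $c_{i-1}$ immediately preceding this run is nonzero: if $i=1$ this is the standing hypothesis $c_0\ne0$, and if $i\ge2$ then $c_{i-1}=0$ would produce a run of $\gamma+1$ consecutive zeros starting at the index $i-1\ge1$, contradicting the maximality of $\mathcal{G}_f$.

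Then set $f_1:=\sum_{k=0}^{i-1}c_k\phi(\cdot-k)$ and $f_2:=\sum_{k\ge i+\gamma}c_k\phi(\cdot-k)$; both lie in $V(\phi)$ (their coefficient sequences are subsequences of $\{c_k\}\in\ell^2$), and since the coefficients with indices $i,\dots,i+\gamma-1$ vanish we have $f=f_1+f_2$ exactly. To see that $f_1\not\equiv0$ and $f_2\not\equiv0$ I would invoke the GHC: by Proposition \ref{Haarcond} the system $\Lambda_{\phi,1}=\{\phi(\cdot+k):k=0,\dots,s-1\}$ is linearly independent on $(0,1)$, which after translation says the integer translates of $\phi$ are locally linearly independent on every unit interval, hence globally $\ell^2$-linearly independent; as the coefficient sequence of $f_1$ is nonzero ($c_0\ne0$) and that of $f_2$ is nonzero ($c_{i+\gamma}\ne0$), neither function is identically zero.

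It remains to check $f_1f_2\equiv0$, which is pure support bookkeeping: $\hbox{supp}(\phi)\subseteq(0,s)$ gives $\hbox{supp}(f_1)\subseteq(0,\,i-1+s)$ and $\hbox{supp}(f_2)\subseteq(i+\gamma,\infty)$, and the inequality $\gamma\ge s-1$ forces $i-1+s\le i+\gamma$, so these two open sets are disjoint and $f_1f_2=0$. Thus $f$ would be separable, contradicting the hypothesis; hence $\mathcal{G}_f<s-1$. The only nonroutine ingredient is the passage from the GHC to global linear independence of the translates, which is what guarantees that the two pieces are genuinely nonzero; everything else is elementary support arithmetic.
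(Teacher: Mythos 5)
Your proof is correct and follows essentially the same route as the paper: assume a zero-run of length at least $s-1$, split $f$ into $f_1=\sum_{k=0}^{i-1}c_k\phi(\cdot-k)$ and $f_2=\sum_{k\ge i+\gamma}c_k\phi(\cdot-k)$, use the GHC (linear independence of the translates) to see both pieces are nonzero, and use $\hbox{supp}(\phi)\subseteq(0,s)$ for the disjointness of supports. The remark about $c_{i-1}\neq0$ is harmless but unnecessary, since $c_0\neq0$ already makes $f_1\not\equiv0$.
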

\begin{proof}
Suppose that     $0=c_{i}= \ldots=c_{i+L-1}$ with $i\geq1$ and $L\geq s-1$. Define
$f_{1}=\sum_{k=0}^{i-1}c_k\phi(\cdot-k)$ and $f_{2}=\sum_{k=i+L}^{\infty}c_k\phi(\cdot-k)$.
It is easy to derive from $c_{0}\neq0$, $c_{i+L}\neq0$
and $\phi$ being a GHC-generator that $f_{1}\not\equiv 0,  f_{2}\not\equiv 0.$
Now  $\hbox{supp}(\phi)\subseteq (0,s)$ leads to   $f=f_{1}+f_{2}$ and $f_{1}f_{2}\equiv0$. That is, $f$ is separable.
This is a contradiction.
\end{proof}

The following lemma concerns on the zero property  of $A_{n,f}(x,y)+B_{n,f}(x,y)\textbf{i}$.
\begin{lemma}\label{theoremX}
Let  $\phi=\phi_{\Re}+\textbf{i}\phi_{\Im}$  be a complex-valued  GHC-generator  such that   $\hbox{supp}(\phi)\subseteq (0,s)$   with the  integer $s\geq 2$. Moreover, $f\in V_{\hbox{ca}}(\phi)$ is nonseparable, and $\{t_{0}\}\cup\{t_{n_{1}},t_{n_{2}}, t_{n_{3}}: n=1,  \ldots, \mathcal{N}_{f}+s-1\}\sim\textbf{U}(0,1)$.
Then
$P\big(A_{n,f}(t_{n_{1}},t_{n_{i}})+B_{n,f}(t_{n_{1}},t_{n_{i}})\textbf{i}\neq0\big)=1$
for any
$n\in \{1,  \ldots, \mathcal{N}_{f}+s-1\}$, where   $i=2,3.$
\end{lemma}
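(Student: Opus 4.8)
The plan is to unpack the definition of $A_{n,f}(x,y)+B_{n,f}(x,y)\mathbf{i}$ from \eqref{Teren} and show that, for each fixed $n$, the event
$A_{n,f}(t_{n_1},t_{n_i})+B_{n,f}(t_{n_1},t_{n_i})\mathbf{i}=0$ forces $(t_{n_1},t_{n_i})$ to lie in a set of two‑dimensional Lebesgue measure zero, so that independence of the uniform variables and Fubini give probability $0$ for the bad event. Writing $\phi=|\phi|e^{\mathbf{i}\theta(\phi)}$ on its support, from \eqref{Teren} we have, whenever $\phi(x)\neq0$,
\begin{align}\notag
A_{n,f}(x,y)+B_{n,f}(x,y)\mathbf{i}
=\frac{|f|(n+x)}{|\phi|^{2}(x)}\Big[\bar\phi(x)\phi(y)\bar v_{n,f}(y)-\bar v_{n,f}(x)|\phi|^{2}(y)\Big].
\end{align}
Since, by Theorem \ref{theorem1}'s hypotheses being met almost surely (the samples in \eqref{phaselesssampling} are nonzero with probability $1$ because $|f|$ and $|\phi|$ each vanish only on a measure‑zero set, $\Lambda_{\phi,1}$ and $\Lambda_{\phi,2}$ satisfying GHC by Proposition \ref{Haarcond}), the prefactor $|f|(n+x)/|\phi|^{2}(x)$ is nonzero a.s.; so the vanishing of $A_{n,f}+B_{n,f}\mathbf{i}$ is equivalent to
\begin{align}\label{bracketzero}
\bar\phi(x)\phi(y)\bar v_{n,f}(y)=\bar v_{n,f}(x)\,|\phi|^{2}(y).
\end{align}

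First I would dispose of the degenerate cases: if $v_{n,f}\equiv0$ on $(0,1)$ then $v_{n,f}=\sum_{k\in I_n}c_k\phi(n+\cdot-k)$ is the zero element of $\mathrm{span}\{\phi(\cdot+j)\}$, which by the GHC (linear independence part) forces all $c_k$, $k\in I_n$, to vanish; but for a nonseparable $f$ with $\mathcal{G}_f<s-1$ (Lemma \ref{nonsp}), no block of $s-1$ consecutive coefficients among $c_0,\dots,c_{\mathcal N_f}$ is entirely zero, hence $v_{n,f}\not\equiv0$ for every $n\in\{1,\dots,\mathcal N_f+s-1\}$ — this is the place where nonseparability is genuinely used. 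So assume $v_{n,f}\not\equiv0$. Take moduli in \eqref{bracketzero}: $|\phi(x)|\,|\phi(y)|\,|v_{n,f}(y)|=|v_{n,f}(x)|\,|\phi(y)|^{2}$, i.e. (off the measure‑zero zero sets of $\phi(y)$) $|\phi(x)||v_{n,f}(y)|=|v_{n,f}(x)||\phi(y)|$. Fix any $y=y_0$ with $\phi(y_0)v_{n,f}(y_0)\neq0$ (such $y_0$ exists, indeed a.e. $y_0$ works since $v_{n,f}=\sum c_k\phi(n+\cdot-k)$ is a finite combination and, by Proposition \ref{Haarcond}, $\Lambda_{\phi,1}$ satisfies GHC so $v_{n,f}$ vanishes only on a null set). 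Then for a.e. $x$, \eqref{bracketzero} forces $|v_{n,f}(x)|/|\phi(x)|$ to equal the constant $|v_{n,f}(y_0)|/|\phi(y_0)|$, i.e. $|v_{n,f}(x)|^2-c^2|\phi(x)|^2=0$ for a constant $c\geq0$; but $|v_{n,f}|^2-c^2|\phi|^2=|v_{n,f}|^2-c^2(\phi_\Re^2+\phi_\Im^2)$ is a real element of $\mathrm{span}\{\Xi_\phi\}$ (expanding $|v_{n,f}|^2$ produces exactly the bilinear combinations $\phi_\Re\phi_\Re(\cdot+k)+\phi_\Im\phi_\Im(\cdot+k)$ and $\phi_\Re\phi_\Im(\cdot+k)-\phi_\Im\phi_\Re(\cdot+k)$ together with $\phi_\Re^2+\phi_\Im^2$), and it is not identically zero unless $v_{n,f}$ is a scalar multiple of $\phi$ in modulus — which, chasing the phase in \eqref{bracketzero}, would in turn put $\phi(x)\bar\phi(y_0)v_{n,f}(y_0)\overline{v_{n,f}(x)}$ into $\mathbb R$, again a nonzero element of $\mathrm{span}\{\Xi_\phi\}$ being forced to vanish on a non‑null set, contradicting the GHC \eqref{quantity11}. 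Hence the set of $x$ (for the fixed good $y_0$, and then by Fubini the set of pairs $(x,y)$) satisfying \eqref{bracketzero} has measure zero.

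Assembling: for each fixed $n\in\{1,\dots,\mathcal N_f+s-1\}$ and $i\in\{2,3\}$, the bad set $\{(x,y)\in(0,1)^2:\ A_{n,f}(x,y)+B_{n,f}(x,y)\mathbf{i}=0\}$ is contained in a $\mu\times\mu$‑null subset of $(0,1)^2$; since $(t_{n_1},t_{n_i})$ is distributed as $\mathbf U(0,1)\times\mathbf U(0,1)$ (independence), $P(A_{n,f}(t_{n_1},t_{n_i})+B_{n,f}(t_{n_1},t_{n_i})\mathbf{i}=0)=0$, which is the claim. The main obstacle is the phase bookkeeping in the paragraph above: one must be careful that after matching moduli the residual phase relation in \eqref{bracketzero} is again reduced to the vanishing of a genuinely nonzero element of $\mathrm{span}\{\Xi_\phi\}$ on a positive‑measure set, so that the GHC delivers the contradiction — the modulus argument alone only constrains $|v_{n,f}|/|\phi|$ and is not by itself enough. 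Everything else (the nonseparability $\Rightarrow v_{n,f}\not\equiv0$ step, and the final Fubini/independence step) is routine given Lemma \ref{nonsp} and Proposition \ref{Haarcond}.
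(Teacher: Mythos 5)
Your overall architecture matches the paper's: strip off the a.s.\ nonzero prefactor $|f|(n+x)/|\phi|^{2}(x)$, use nonseparability via Lemma \ref{nonsp} to guarantee a nonzero coefficient in $\{c_k:k\in I_n\}$ (hence $v_{n,f}\not\equiv 0$), reduce to showing the bracket vanishes only on a null set of pairs, and finish by Fubini/independence. But the central measure-zero step has a genuine gap: the claim that $|v_{n,f}|^{2}-c^{2}|\phi|^{2}$ is an element of $\mathrm{span}\{\Xi_{\phi}\}$ is false in general. Expanding $|v_{n,f}(x)|^{2}=\sum_{k,l\in I_n}c_k\bar c_l\,\phi(x+n-k)\bar\phi(x+n-l)$ produces products $\phi(x+m)\bar\phi(x+m')$ in which \emph{both} shifts $m,m'$ are at least $1$ (e.g.\ $|c_{n-1}|^{2}|\phi(x+1)|^{2}$), whereas every generator of $\Xi_{\phi}$ in \eqref{gammaset} is a product with one \emph{unshifted} factor, $\phi_{\bullet}(x)\phi_{\bullet}(x+k)$. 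So the GHC \eqref{quantity11} simply does not apply to $|v_{n,f}|^{2}-c^{2}|\phi|^{2}$, and the modulus-matching argument (and the subsequent phase-chasing, which is built on top of it) collapses. The secondary assertion that the prefactor is nonzero a.s.\ "because $|f|$ and $|\phi|$ vanish only on null sets" also needs the same kind of justification rather than being routine: the paper proves it by writing $\phi(t)\bar f(n+t)=\sum_{k}\bar c_k\,\phi(t)\bar\phi(n+t-k)+\bar c_n|\phi(t)|^{2}$ and invoking the GHC for $\Lambda_{\phi,2}$.

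The fix --- and the paper's actual route --- is to freeze the \emph{first} variable rather than the second. Conditioning on $\phi(t_{n_1})\neq 0$, the bracket as a function of $y$ is
\begin{align}\notag
\bar\phi(t_{n_1})\sum_{k\in I_n}\bar c_k\,[\phi\bar\phi(\cdot+n-k)](y)\;-\;\bar v_{n,f}(t_{n_1})\,[\phi\bar\phi](y),
\end{align}
a complex linear combination of $\Lambda_{\phi,2}=\{\phi\bar\phi(\cdot+k)\}_{k=0}^{s-1}$ whose coefficients are not all zero (some $c_k\neq 0$ and $\phi(t_{n_1})\neq 0$). Its real and imaginary parts are then real combinations of $\Xi_{\phi}$, not both identically zero by the linear-independence half of the GHC, so its zero set in $y$ is null by \eqref{quantity11}; Fubini then gives the claim. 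If you prefer your fixed-$y_0$ formulation, the equation $\bar w\phi(x)-\rho v_{n,f}(x)=0$ (with $w=\phi(y_0)\bar v_{n,f}(y_0)\neq 0$, $\rho=|\phi(y_0)|^{2}$) can be multiplied by $\bar\phi(x)$ to land it back in $\mathrm{span}_{\mathbb C}\Lambda_{\phi,2}$ with nonzero shift-$0$ coefficient $w$, and the GHC applies; but some such multiplication by an unshifted $\bar\phi$ is indispensable, and it is exactly what your modulus computation omits.
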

\begin{proof}
The proof is given in section \ref{theoremXX}.
\end{proof}

Based on Lemma \ref{theoremX}, in what follows  we investigate the probabilistic behavior of the  phase $\theta(A_{n,f}(t_{n_{1}}, t_{n_{2}})$
$+\textbf{i}B_{n,f}(t_{n_{1}}, t_{n_{2}}))$.

\begin{lemma}\label{theo2.5} Let  $f, \phi$ and $\{t_{0}\}\cup\{t_{n_{1}},t_{n_{2}}, t_{n_{3}}: n=1,  \ldots, \mathcal{N}_{f}+s-1\}\sim\textbf{U}(0,1)$ be as in Lemma   \ref{theoremX}. Then for
any fixed   $\alpha\in (0, 2\pi]$,
it holds that
\begin{align} \notag P\big(\theta(A_{n,f}(t_{n_{1}}, t_{n_{2}})+\textbf{i}B_{n,f}(t_{n_{1}}, t_{n_{2}}))\neq \alpha\big)=1.\end{align}
\end{lemma}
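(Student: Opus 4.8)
I want to show that for a fixed $\alpha\in(0,2\pi]$, the random phase $\theta(A_{n,f}(t_{n_1},t_{n_2})+\textbf{i}B_{n,f}(t_{n_1},t_{n_2}))$ almost surely avoids $\alpha$. First I would recall the defining formula \eqref{Teren}: with $v_n := v_{n,f}$,
\[
A_{n,f}(x,y)+B_{n,f}(x,y)\textbf{i}=\frac{|f|(n+x)}{|\phi|^{2}(x)}\big[\bar{\phi}(x)\phi(y)\bar{v}_{n}(y)-\bar{v}_{n}(x)|\phi|^{2}(y)\big].
\]
Since by Lemma \ref{theoremX} the quantity is almost surely nonzero, the phase is well defined a.s. Because $|f|(n+x)/|\phi|^{2}(x)>0$ (on the event $\phi(x)\neq0$, which is a.s.\ by the GHC applied through Proposition \ref{Haarcond}), it does not affect the phase, so the phase of $A_{n,f}+\textbf{i}B_{n,f}$ equals the phase of
\[
G_n(x,y):=\bar{\phi}(x)\phi(y)\bar{v}_{n}(y)-\bar{v}_{n}(x)|\phi|^{2}(y).
\]
Thus the event $\{\theta(A_{n,f}(t_{n_1},t_{n_2})+\textbf{i}B_{n,f}(t_{n_1},t_{n_2}))=\alpha\}$ is contained in $\{\Im(e^{-\textbf{i}\alpha}G_n(t_{n_1},t_{n_2}))=0\}$ together with a sign condition; it suffices to show $P(\Im(e^{-\textbf{i}\alpha}G_n(t_{n_1},t_{n_2}))=0)$ contributes nothing beyond measure-zero, i.e.\ to show that $\Im(e^{-\textbf{i}\alpha}G_n(x,y))$, as a function of $(x,y)\in(0,1)^2$, does not vanish identically, and in fact vanishes only on a null set. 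Since $t_{n_1},t_{n_2}$ are independent $\textbf{U}(0,1)$, I would condition on $t_{n_1}=\widehat{t}_{n_1}$ (a.s.\ a "good" value) and show that $y\mapsto \Im(e^{-\textbf{i}\alpha}G_n(\widehat{t}_{n_1},y))$ is a nonzero element of the span of the GHC-system, hence has zero set of measure zero in $y$; then integrate over $\widehat{t}_{n_1}$.

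The main work is the second step: identifying $y\mapsto \Im(e^{-\textbf{i}\alpha}G_n(\widehat{t}_{n_1},y))$ as (a linear combination lying in) $\hbox{span}\{\Xi_\phi\}$ or $\hbox{span}\{\Lambda_{\phi,2}\}$, and arguing it is not the zero function. Write $v_n(y)=\sum_{k\in I_n}c_k\phi(n+y-k)$. Then
\[
G_n(\widehat{t}_{n_1},y)=\bar{\phi}(\widehat{t}_{n_1})\sum_{k\in I_n}\bar{c}_k\,\phi(y)\bar{\phi}(n+y-k)-\Big(\sum_{k\in I_n}\bar{c}_k\bar{\phi}(n+\widehat{t}_{n_1}-k)\Big)|\phi|^2(y).
\]
Both $y\mapsto\phi(y)\bar{\phi}(n+y-k)$ (for $k\in I_n$, so that $n-k\in\{1,\dots,s-1\}$) and $y\mapsto|\phi|^2(y)=\phi^2_{\Re}+\phi^2_{\Im}$ are, after taking real and imaginary parts, exactly the members of $\Xi_\phi$ in \eqref{gammaset}. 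So $\Re(e^{-\textbf{i}\alpha}G_n(\widehat{t}_{n_1},\cdot))$ and $\Im(e^{-\textbf{i}\alpha}G_n(\widehat{t}_{n_1},\cdot))$ lie in $\hbox{span}\{\Xi_\phi\}$. By the GHC (Definition \ref{CHCcomplex}), any such function that is not identically zero has zero set of measure zero. Hence it remains to rule out that $\Im(e^{-\textbf{i}\alpha}G_n(\widehat{t}_{n_1},\cdot))\equiv0$ for a positive-measure set of $\widehat{t}_{n_1}$. If it vanished identically in $y$ for such $\widehat{t}_{n_1}$, then $e^{-\textbf{i}\alpha}G_n(\widehat{t}_{n_1},y)$ would be real for all $y$; combined with the fact that the coefficient vector of $|\phi|^2(y)$, namely $-e^{-\textbf{i}\alpha}\overline{v_n(\widehat{t}_{n_1})}$, and the coefficients $e^{-\textbf{i}\alpha}\bar\phi(\widehat{t}_{n_1})\bar c_k$ of $\phi(y)\bar\phi(n+y-k)$ would all have to be real (using linear independence of $\Xi_\phi$), one forces $\bar\phi(\widehat{t}_{n_1})\bar c_k\in e^{\textbf{i}\alpha}\mathbb{R}$ for all $k\in I_n$ and $\overline{v_n(\widehat{t}_{n_1})}\in e^{\textbf{i}\alpha}\mathbb{R}$. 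Since $f$ is nonseparable, Lemma \ref{nonsp} gives $\mathcal{G}_f<s-1$, so $I_n$ contains an index $k$ with $c_k\neq0$; then $\bar\phi(\widehat{t}_{n_1})\in e^{\textbf{i}\alpha}\,\overline{c_k}^{-1}\mathbb{R}$ is a one-dimensional real line constraint on the complex number $\phi(\widehat{t}_{n_1})$, which by the local-analyticity / GHC structure of $\phi$ holds only for a measure-zero set of $\widehat{t}_{n_1}$ (the set where $\Im(\overline{e^{\textbf{i}\alpha}\overline{c_k}^{-1}}\,\phi(\widehat{t}_{n_1}))=0$, a nonzero function in $\hbox{span}\{\Lambda_{\phi,1}\}$-type combinations, hence null by Proposition \ref{Haarcond}). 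Contradiction with positive measure.

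**Where the difficulty sits.** The delicate point is the dichotomy argument: for a.e.\ fixed $\widehat{t}_{n_1}$ the function $\Im(e^{-\textbf{i}\alpha}G_n(\widehat{t}_{n_1},\cdot))$ is a member of $\hbox{span}\{\Xi_\phi\}$, and I must show it is not the zero member. The clean way is Fubini: let $E=\{(\widehat{t}_{n_1},y):\ \Im(e^{-\textbf{i}\alpha}G_n(\widehat{t}_{n_1},y))=0\}$ and show $\mu\times\mu(E)=0$ by showing that for a.e.\ $\widehat{t}_{n_1}$ the $\widehat{t}_{n_1}$-slice is null — which reduces, via the GHC, to the non-identical-vanishing claim handled above. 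Then $P(\theta(A_{n,f}(t_{n_1},t_{n_2})+\textbf{i}B_{n,f}(t_{n_1},t_{n_2}))=\alpha)\le \mu\times\mu(E)=0$. A technical nuisance to dispatch carefully is that $|f|(n+x)$ and $|\phi|^2(x)$ are nonzero only a.s., so one should first intersect with those full-probability events (using the GHC and $f\not\equiv 0$ on the relevant interval) before dividing; this is routine but must be stated. Finally, since $n$ ranges over the finite set $\{1,\dots,\mathcal{N}_f+s-1\}$ (for the local statement; for Theorem \ref{main1} one uses countable subadditivity over $n$), the exceptional set stays null.
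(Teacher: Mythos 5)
Your proposal follows essentially the same route as the paper's proof in section \ref{XYZ}: factor out the positive scalar $|f|(n+t_{n_{1}})/|\phi|^{2}(t_{n_{1}})$ so that the phase in question is that of $a_{n,f}+b_{n,f}\textbf{i}$ from \eqref{normal} (your $G_{n}$), expand the real and imaginary parts of a rotation of $G_{n}(\widehat{t}_{n_{1}},\cdot)$ as elements of $\hbox{span}\{\Xi_{\phi}\}$, and apply the GHC slice-by-slice in $t_{n_{1}}$ together with Fubini. The only cosmetic difference is that you rotate $G_{n}$ by $e^{-\textbf{i}\alpha}$ and examine a single imaginary part, whereas the paper first proves the statement for $\alpha\in\{j\pi/2\}$ (by showing both $\Re$ and $\Im$ of $a_{n,f}+b_{n,f}\textbf{i}$ are a.s.\ nonzero) and then rotates $f$ itself.

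There is, however, one step in your non-vanishing argument that does not hold as written. From $\Im(e^{-\textbf{i}\alpha}G_{n}(\widehat{t}_{n_{1}},\cdot))\equiv0$ you extract only that the coefficients $e^{-\textbf{i}\alpha}\bar{\phi}(\widehat{t}_{n_{1}})\bar{c}_{k}$ are real, and you then need the set $\{\widehat{t}_{n_{1}}:\Im(w\,\phi(\widehat{t}_{n_{1}}))=0\}$, for a fixed $w\neq0$, to be null, citing Proposition \ref{Haarcond}. That proposition does not deliver this: the GHC for $\Lambda_{\phi,1}$ controls the \emph{simultaneous} vanishing of real and imaginary parts of $\sum_{k}c_{k}\phi(\cdot+k)$, not the vanishing of a single real-linear functional $\Re(w)\phi_{\Im}+\Im(w)\phi_{\Re}$, and nothing in the hypotheses prevents, say, $\phi_{\Im}$ from vanishing on a set of positive measure while $\phi$ remains a GHC-generator. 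The repair is immediate because you have discarded information: $\Xi_{\phi}$ in \eqref{gammaset} contains, for each shift $m=n-k$, \emph{both} products $\phi_{\Re}\phi_{\Re}(\cdot+m)+\phi_{\Im}\phi_{\Im}(\cdot+m)$ and $\phi_{\Re}\phi_{\Im}(\cdot+m)-\phi_{\Im}\phi_{\Re}(\cdot+m)$ as separate members, and in the expansion of $\Im(e^{-\textbf{i}\alpha}G_{n}(\widehat{t}_{n_{1}},\cdot))$ their coefficients are, respectively, the imaginary part and (up to sign) the real part of $e^{-\textbf{i}\alpha}\bar{\phi}(\widehat{t}_{n_{1}})\bar{c}_{k}$. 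Linear independence of $\Xi_{\phi}$ therefore forces $\bar{\phi}(\widehat{t}_{n_{1}})\bar{c}_{k}=0$ for every $k\in I_{n}$; since some $c_{k}\neq0$ with $k\in I_{n}$ (by nonseparability and Lemma \ref{nonsp}), this gives $\phi(\widehat{t}_{n_{1}})=0$, an event of probability zero because $|\phi|^{2}\in\Xi_{\phi}$. With that correction your argument closes and coincides in substance with the paper's.
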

\begin{proof}
The proof is given in section \ref{XYZ}.
\end{proof}

Based on Lemma \ref{theo2.5}, we next investigate the uniqueness of \eqref{equxy0} with $\widehat{t}_{n_{i}}$
therein being replaced by $t_{n_{i}}\sim\textbf{U}(0,1)$.

\begin{lemma}\label{auxill0}
Let  $f, \phi$ and $\{t_{0}\}\cup\{t_{n_{1}},t_{n_{2}}, t_{n_{3}}: n=1,  \ldots, \mathcal{N}_{f}+s-1\}\sim\textbf{U}(0,1)$ be as in Lemma   \ref{theoremX}.
Then for any $n\in \{1,  \ldots, \mathcal{N}_{f}+s-1\}$,  the equation system
\begin{align}\notag\left\{\begin{array}{lll} \big(A_{n,f}(t_{n_{1}},t_{n_{2}})+B_{n,f}(t_{n_{1}},t_{n_{2}})\textbf{i}\big)z^{2}-C_{n,f}(t_{n_{1}},t_{n_{2}})z\\
+A_{n,f}(t_{n_{1}},t_{n_{2}})-B_{n,f}(t_{n_{1}},t_{n_{2}})\textbf{i}=0,\\
\big(A_{n,f}(t_{n_{1}},t_{n_{3}})+B_{n,f}(t_{n_{1}},t_{n_{3}})\textbf{i}\big)z^{2}-C_{n,f}(t_{n_{1}},t_{n_{3}})z\\
+A_{n,f}(t_{n_{1}},t_{n_{3}})-B_{n,f}(t_{n_{1}},t_{n_{3}})\textbf{i}=0,
\end{array}\right.\end{align}
has only one solution with   probability $1$.
\end{lemma}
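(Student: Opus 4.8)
The plan is to exploit the special ``self-inversive'' shape of the two quadratics. Fix $n\in\{1,\ldots,\mathcal{N}_{f}+s-1\}$ and abbreviate $a_{i}:=A_{n,f}(t_{n_{1}},t_{n_{i}})+B_{n,f}(t_{n_{1}},t_{n_{i}})\textbf{i}$ and $C_{i}:=C_{n,f}(t_{n_{1}},t_{n_{i}})$ for $i=2,3$; from \eqref{Cconstant} each $C_{i}$ is real, and by Lemma \ref{theoremX} we have $a_{2}\neq0$ and $a_{3}\neq0$ with probability $1$. Thus each equation reads $a_{i}z^{2}-C_{i}z+\overline{a_{i}}=0$, so its two roots multiply to $\overline{a_{i}}/a_{i}$, a number of modulus $1$. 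Tracing the derivation of \eqref{equxy0} inside the proof of Theorem \ref{theorem1} shows that $z^{\ast}:=e^{\textbf{i}\theta(f(n+t_{n_{1}}))}$, which has modulus $1$, solves both equations whenever $f(n+t_{n_{1}})\neq0$; since \eqref{Teren} writes $a_{i}$ as $|f|(n+t_{n_{1}})/|\phi|^{2}(t_{n_{1}})$ times another factor and $a_{2}\neq0$ a.s., this caveat holds a.s. Consequently the other root of the $i$-th quadratic is $\overline{a_{i}}/(a_{i}z^{\ast})$, and the system can fail to be uniquely solvable only if these two ``other roots'' coincide, i.e. only if $\overline{a_{2}}/a_{2}=\overline{a_{3}}/a_{3}$, equivalently $\Im(a_{2}\overline{a_{3}})=0$. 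So the lemma reduces to proving $P\big(\Im(a_{2}\overline{a_{3}})=0\big)=0$.

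Next I would factor out the scalar and reduce to a one-variable GHC argument. Put $w(x,y):=\bar\phi(x)\phi(y)\overline{v_{n,f}}(y)-\overline{v_{n,f}}(x)|\phi|^{2}(y)$, so that by \eqref{Teren} $a_{i}=\big(|f|(n+t_{n_{1}})/|\phi|^{2}(t_{n_{1}})\big)w(t_{n_{1}},t_{n_{i}})$; since $a_{2},a_{3}\neq0$ a.s. the scalar is positive a.s., hence $\Im(a_{2}\overline{a_{3}})=0$ iff $\Im\big(w(t_{n_{1}},t_{n_{2}})\overline{w(t_{n_{1}},t_{n_{3}})}\big)=0$. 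Conditioning on $t_{n_{1}}=\widehat t_{1}$ makes $w(\widehat t_{1},t_{n_{2}})$ and $w(\widehat t_{1},t_{n_{3}})$ conditionally independent, and, using $a_{3}\neq0$ a.s. together with Fubini, for a.e. $\widehat t_{1}$ we have $w(\widehat t_{1},t_{n_{3}})\neq0$ for a.e. $t_{n_{3}}$. It therefore suffices to show: for a.e. $\widehat t_{1}$ and a.e. $\widehat t_{3}$ with $w(\widehat t_{1},\widehat t_{3})\neq0$, the function $y\mapsto\Im\big(w(\widehat t_{1},y)\overline{w(\widehat t_{1},\widehat t_{3})}\big)$ vanishes only on a $\mu$-null subset of $(0,1)$. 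The key point is that for fixed $x$ both $\Re(w(x,\cdot))$ and $\Im(w(x,\cdot))$ lie in $\hbox{span}\{\Xi_{\phi}\}$: indeed $w(x,y)=\sum_{k\in I_{n}}\bar\phi(x)\overline{c_{k}}\,\phi(y)\overline{\phi(y+(n-k))}-\overline{v_{n,f}}(x)\,|\phi(y)|^{2}$, the shifts $n-k$ with $k\in I_{n}$ are integers in $\{1,\ldots,s-1\}$ (by \eqref{indexset}), the real and imaginary parts of $\phi(y)\overline{\phi(y+m)}$ are, up to sign, exactly the $k=m$ pair in \eqref{gammaset}, and $|\phi(y)|^{2}=\phi_{\Re}^{2}(y)+\phi_{\Im}^{2}(y)$ is the remaining element of $\Xi_{\phi}$. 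Hence $y\mapsto\Im\big(w(\widehat t_{1},y)\overline{w(\widehat t_{1},\widehat t_{3})}\big)$, a real-linear combination of $\Re(w(\widehat t_{1},\cdot))$ and $\Im(w(\widehat t_{1},\cdot))$, lies in $\hbox{span}\{\Xi_{\phi}\}$, so by the GHC \eqref{quantity11} its zero set is $\mu$-null unless it is identically $0$ on $(0,1)$; this is the same mechanism used in Lemma \ref{theo2.5}.

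The main obstacle is to rule out the identically-zero case for a.e. $\widehat t_{1}$. If $\Im\big(w(\widehat t_{1},\cdot)\overline{w(\widehat t_{1},\widehat t_{3})}\big)\equiv0$ and $w(\widehat t_{1},\widehat t_{3})\neq0$, then $e^{\textbf{i}\lambda}w(\widehat t_{1},\cdot)$ is real-valued for some $\lambda\in\mathbb{R}$; since $|\phi|^{2}$ is real-valued, this forces $\Im\big(e^{\textbf{i}\lambda}\bar\phi(\widehat t_{1})\sum_{k\in I_{n}}\overline{c_{k}}\,\phi(\cdot)\overline{\phi(\cdot+(n-k))}\big)=\Im\big(e^{\textbf{i}\lambda}\overline{v_{n,f}}(\widehat t_{1})\big)\,|\phi(\cdot)|^{2}$, that is, a nontrivial $\mathbb{R}$-linear combination of elements of $\Xi_{\phi}$ equal to the zero function. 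By the linear independence of $\Xi_{\phi}$ (part of Definition \ref{CHCcomplex}) every coefficient vanishes, so in particular $e^{\textbf{i}\lambda}\bar\phi(\widehat t_{1})\overline{c_{k}}=0$ for all $k\in I_{n}$; since the zero set of $|\phi|^{2}\in\Xi_{\phi}$ is $\mu$-null by the GHC, $\phi(\widehat t_{1})\neq0$ for a.e. $\widehat t_{1}$, and then $c_{k}=0$ for every $k\in I_{n}$, i.e. $v_{n,f}\equiv0$. For $1\le n\le s-1$ this is impossible because $0\in I_{n}$ and $c_{0}\neq0$; for $n\ge s$ it would give the run of $s-1$ consecutive zeros $c_{n-s+1}=\cdots=c_{n-1}=0$, which together with $\mathcal{N}_{f}\ge n-s+1$ (from $n\le\mathcal{N}_{f}+s-1$) and $c_{0}\neq0$ yields a gap of length $\ge s-1$, contradicting $\mathcal{G}_{f}<s-1$ from Lemma \ref{nonsp}. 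Hence the identically-zero case never occurs for a.e. $\widehat t_{1}$, so the conditional probability of $\Im(a_{2}\overline{a_{3}})=0$ is $0$ for a.e. $\widehat t_{1}$; integrating out $t_{n_{1}}$ gives $P\big(\Im(a_{2}\overline{a_{3}})=0\big)=0$, and therefore the equation system has a unique solution with probability $1$.
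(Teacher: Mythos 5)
Your proof is correct and takes essentially the same route as the paper: both reduce uniqueness of the system to the event $a_{2}\bar a_{3}\neq\bar a_{2}a_{3}$ (your condition $\Im(a_{2}\bar a_{3})\neq0$ is exactly the paper's event $\mathfrak{E}_{1}$ in the appendix), and both then show this event has probability $1$ by a GHC measure-zero argument on $\hbox{span}\{\Xi_{\phi}\}$. The only difference is in the finish: the paper invokes Lemma \ref{theo2.5} with a phase that is in fact random (depending on $t_{n_{1}},t_{n_{2}}$) and leaves the root analysis as ``the rest can be easily concluded,'' whereas you condition on $t_{n_{1}},t_{n_{3}}$, verify directly that $y\mapsto\Im\big(w(\widehat t_{1},y)\overline{w(\widehat t_{1},\widehat t_{3})}\big)$ lies in $\hbox{span}\{\Xi_{\phi}\}$ and is not identically zero, and spell out why the two quadratics can only share the root $e^{\textbf{i}\theta(f(n+t_{n_{1}}))}$ --- a more self-contained execution of the same idea.
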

\begin{proof}
The proof is given in section \ref{XYZZZ}.
\end{proof}

Based on Lemma \ref{auxill0} and Approach  \ref{decoder},  we next prove Theorem \ref{main1}.
\subsection{Proof of Theorem \ref{main1}}\label{proof2}
By $\Lambda_{\phi,1}$ in Proposition \ref{Haarcond} satisfying GHC, we have $P(|f(n+t_{n_{1}})|\neq0)=P(|\phi(t_{n_{1}})|\neq0)=1.$
 If   the phases
$\{\theta(f(t_{0}))\}\cup\{\theta(f(n+t_{n_{i}})): n=1,  \ldots, \infty, i=1,2,3\}$
can be determined (up to a global real  number) with probability $1$, then  $\{c_{n}\}^{\infty}_{n=0}$ can be reconstructed by \eqref{diedai}
(up to a    unimodular scalar) with the same probability. On the other hand,
by $\hbox{supp}(\phi)\subseteq (0,s)$ and the definition of
$\mathcal{N}_{f}$, it is sufficient to prove that  $\{\theta(f(t_{0}))\}\cup\{\theta(f(n+t_{n_{i}})): n=1,  \ldots, \mathcal{N}_{f}+s-1, i=1,2,3\}$
can be determined (up to a  global real number) with probability $1$.

As previously, we have
$P(|\phi(t_{0})|\neq0)=1$.
Following  Approach  \ref{decoder}, let $c_{0}:=e^{\textbf{i}\widetilde{\theta}_{0}}|f(t_{0})|/\phi(t_{0})$. Then, with    probability $1$,  $c_{0}$ can be determined
up to  the  scalar $e^{\textbf{i}\widehat{\theta}}$, where $\widehat{\theta}:=\widetilde{\theta}_{0}-\theta_{0}$
with $\theta_{0}$ being the exact phase of $f(t_{0})$.
Or with probability $1$,  $\theta(f(t_{0_{i}}))$ can be determined  up to the   number  $\widehat{\theta}.$
For any $0\leq n\leq \mathcal{N}_{f}+s-1$, suppose that the phases
$\{\theta(f(t_{0}))\}\cup\{\theta(f(k+t_{k_{j}})) :k=1, \ldots, n-1,  j=1, 2,3\}$  have been determined (up to the global real number  $\widehat{\theta}$)
with    probability $1$. Correspondingly, $\{c_{k}\}^{n-1}_{k=0}$ haven been determined (up to the scalar $e^{\textbf{i}\widehat{\theta}}$)
with    probability $1$.
Now by Lemma \ref{auxill0},  Theorem  \ref{theorem1} and   Lemma \ref{globalphase}, $\theta(f(n+t_{n_{1}})+\widehat{\theta}$ can be determined
 (up to the global real number $\widehat{\theta}$) via Approach  \ref{decoder} \eqref{rty}  with    probability $1$. Then $c_{n}$ can be determined (up to the scalar $e^{\textbf{i}\widehat{\theta}}$)
with   probability $1.$
The proof can be  concluded by the recursion.

\subsection{Proof of Proposition \ref{main1proposition}}\label{xcvc}
By $\hbox{supp}(\phi)\subseteq(0,s)$ we just need to prove that  $\{c_{n}\}^{L-1}_{n=0}$  can be determined with    probability $1$, up to  a unimodular scalar. As in section \ref{proof2},  with    probability $1$, $c_{0}$ is determined by $|f(t_{0})|$, up to  a   unimodular scalar $e^{\textbf{i}\gamma}$.
Suppose that with    probability $1$, $\{c_{k}\}^{n-1}_{k=0}$ are determined  up to   $e^{\textbf{i}\gamma}$.  Then by  the same  argument as  in
section \ref{proof2}   we can prove that with the same probability, $c_{n}$ can be determined, up to $e^{\textbf{i}\gamma}$,  by $|f(n+t_{n_{i}})|, i=1,2,3$. The proof is concluded.

\subsection{Numerical simulation: random PLS of complex-valued and highly oscillatory chirps}\label{numeri1}
\begin{figure*}[htbp]
 \scriptsize
 \centering
    \begin{minipage}[b]{.1\linewidth}
    \centerline{\includegraphics[width=13.3cm, height=3.9cm]{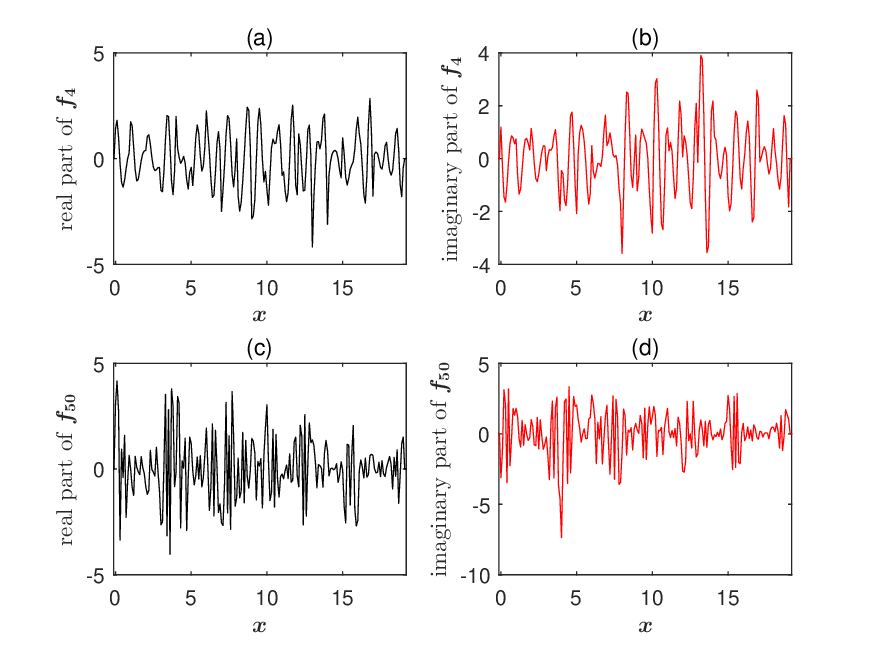}}
  \end{minipage}\hfill
 \caption{(a) The   real part of $f_{4}$; (b) The   imaginary part of $f_{4}$;
  (c) The   real part of $f_{50}$; (d) The   imaginary part of $f_{50}$.}
 \label{fig:balvsunbalHam}
\end{figure*}

 \begin{figure*}[htbp]
 \scriptsize
 \centering
    \begin{minipage}[b]{.1\linewidth}
    \centerline{\includegraphics[width=16.8cm, height=3.9cm]{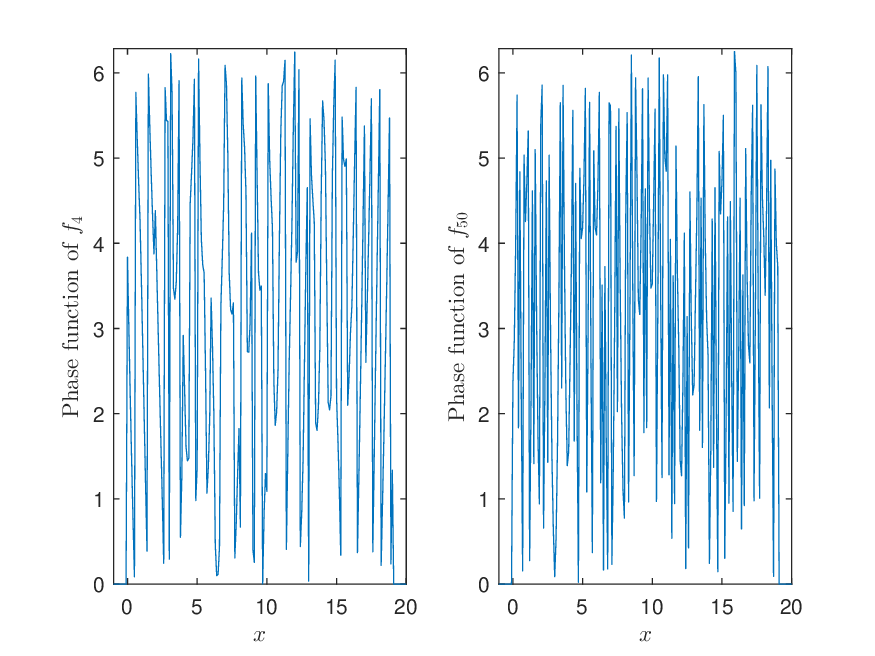}}
  \end{minipage}\hfill
 \caption{(a)  The phase function $\theta(f_{4}(x))$ of $f_{4}$; (b) The phase function $\theta(f_{50}(x))$ of $f_{50}$.}
 \label{fig:balvsunbalHam1234567}
\end{figure*}

\begin{figure*}[htbp]
 \scriptsize
 \centering
    \begin{minipage}[b]{.1\linewidth}
    \centerline{\includegraphics[width=16.8cm, height=4.1cm]{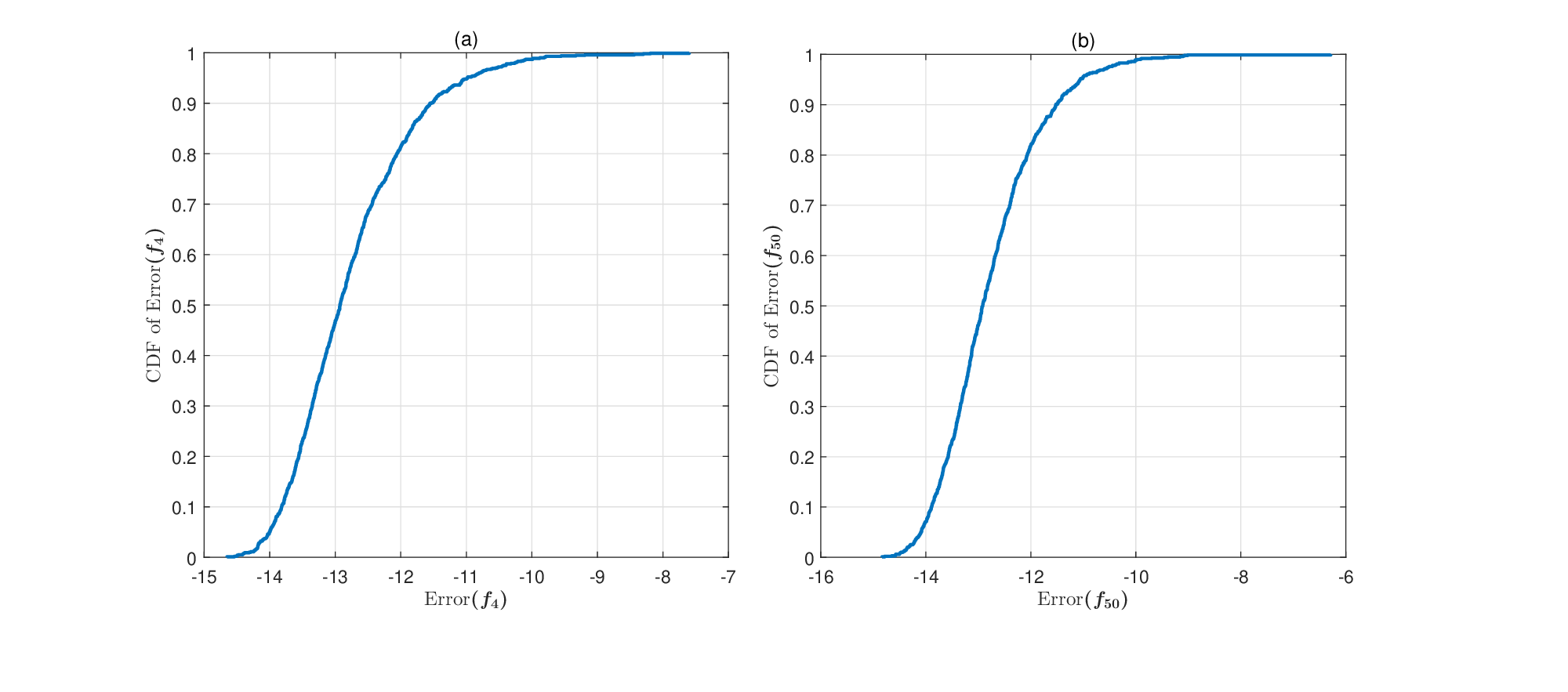}}
  \end{minipage}\hfill
 \caption{(a)  The CDF of $\hbox{Error}(f_{4})$ in the noiseless setting; (b) The CDF of  $\hbox{Error}(f_{50})$ in the noiseless setting.}
 \label{fig:balvsunbalHam123}
\end{figure*}

\begin{table*}[htbp]
 \scriptsize
 \centering
\begin{tabular}{|c|cccccccccccccc|c||} \hline \diagbox{$f_{a}$}{SNR}&50&60&70&80&90&100&110&120&130 \\
 \hline $f_{4}$&0.0070&0.0940&0.2800&0.4830&0.6440&0.7990&0.8860&0.9410&0.9970
  \\ $f_{50}$&0.0240&0.1770&0.4040& 0.6270&0.7330 &0.8430&0.9090&0.9520& 0.9880 \\
  \hline
 \end{tabular}
 \caption{Success rate  vs     noise level (SNR).}
 \label{jkk}
 \end{table*}

This  section is to verify     Theorem \ref{main1}.
Our test  SIS  $V_{\hbox{ca}}(\phi_{a,b,p})$ is related with   \cite[section 6.3.1]{chirp}. Specifically,
\begin{align}\notag \begin{array}{lll}
\phi_{a,b,p}(x)\\
=\frac{2}{3}\sqrt{2\pi|b|}e^{-\textbf{i}\frac{a(x-2)^{2}}{2b}}e^{-\textbf{i}\frac{p(x-2)}{b}}\cos^{2}\frac{\pi(x-2)}{4}\chi_{(0,4)}(x).
\end{array}\end{align}
By section \ref{ghvcd87},  both  $\phi_{4,0.8,1}$ and $\phi_{50,0.8,1}$ are GHC-generators.
The test  signal
\begin{align}\notag\begin{array}{lll}  f_{a}(x):=\sum_{n=0}^{15}c_n\phi_{a,0.8,1}(x-n),\end{array}\end{align}
where $a=4, 50$ and $\mathcal{G}(f_{a})<3$. See Fig. \ref{fig:balvsunbalHam}  for their graphs.
In Fig.  \ref{fig:balvsunbalHam1234567} we also plot the phase function  $\theta(f_{a}(x))$ defined  via  $f_{a}(x)=|f_{a}(x)|e^{\textbf{i}\theta(f_{a}(x))}$.
 Clearly, the two signals  are highly oscillatory.
By   Theorem \ref{main1},  $f_{a}(x)$  can be determined  with probability $1$, up to a unimodular,  by the
random    samples $\{|f_{a}(t_{0})|\}\bigcup\{|f_{a}(n+t_{n_{1}})|,|f_{a}(n+t_{n_{2}})|, |f_{a}(n+t_{n_{3}})|: n=1, \ldots, 18\},$ where $t_{0}, t_{n_{1}}, t_{n_{2}}, t_{n_{3}}\sim \textbf{U}(0,1)$.
In the noiseless setting, $10^3$ trials are  conducted   to determine $f_{a}(x)$ by   Approach \ref{decoder}.
The  error is defined as
\begin{align} \notag\begin{array}{lll} \hbox{Error}(f_{a})\\
:=\log_{10}(\min_{\gamma\in (0, 2\pi]}||\{c_k\}-e^{\textbf{i}\gamma}\{\widetilde{c}_k\}||_{2}/||\{c_k\}||_{2}),\end{array}\end{align}
where   $\{\widetilde{c}_k\}$ is  the coefficient sequence of the  reconstruction  version  $\widetilde{f}_{a}(x)=\sum_{n=0}^{15}\widetilde{c}_n \phi_{a,0.8,1}(x-n)$.  Approach \ref{decoder} is considered to  be successful  if $\hbox{Error}(f_{a})\leq-1.8$. The cumulative distribution function (CDF) of $\hbox{Error}(f_{a})$
is defined as
\begin{align}\label{cdf} \hbox{CDF}(x)=\frac{\# \big(\hbox{Error}(f_{a}) \leq x\big)}{10^{3}}. \end{align}
Fig. \ref{fig:balvsunbalHam123} confirms  that  with probability $1$,  the signals can be    determined  in the
noiseless setting.

In what follows we examine the robustness  of Approach \ref{decoder} to   noise corruption.
The observed values of $\{|f(t_{0})|\}\cup\{|f(n+t_{n_{1}})|,|f(n+t_{n_{2}})|, |f(n+t_{n_{3}})|: n=1,  \ldots, 18\}$
in a trial are  denoted by $\{|f(\widehat{t}_{0})|\}\cup\{|f(n+\widehat{t}_{n_{1}})|,|f(n+\widehat{t}_{n_{2}})|, |f(n+\widehat{t}_{n_{3}})|: n=1,  \ldots, 18\}$.
We add  the  Gaussian noise $\varepsilon\sim \textbf{N}(0, \sigma^{2})$ to the    noiseless samples.
That is, we employ the noisy samples  $\{|f_{a}(\widehat{t}_{0})|+\varepsilon\}\cup\{|f_{a}(n+\widehat{t}_{n_{1}})|+\varepsilon,|f_{a}(n+\widehat{t}_{n_{2}})|+\varepsilon, |f_{a}(n+\widehat{t}_{n_{3}})|+\varepsilon: n=1, \ldots, 18\}$
to conduct   Approach \ref{decoder}.
The variance $\sigma^{2}$   is chosen such that the desired signal to noise ratio  (SNR)
is expressed by
\begin{align}\label{SNRDE} \begin{array}{lllllllllllllllll} \hbox{SNR}=10\log_{10}\big(\frac{||\textbf{F}_{a}||^{2}_{2}}{55\sigma^{2}}\big),\end{array}\end{align}
where  $||\textbf{F}_{a}||^{2}_{2}=|f_{a}(\widehat{t}_{0})|^{2}+\sum^{3}_{k=1}\sum^{18}_{n=1}|f_{a}(n+\widehat{t}_{n_{k}})|^{2}$.
In the noisy setting, $10^3$  trials are also  conducted  to reconstruct $f_{4}(x)$ and $f_{50}(x)$, respectively. Their reconstruction   success rates ($\hbox{CDF}(-1.8)$) are recorded in Table \ref{jkk}.

\section{Random phaseless sampling  of  causal and real-valued  signals  in real-generated  SISs}\label{section3}
Throughout this section,
let $\varphi $ be a real-valued  GHC-generator  such that  $\hbox{supp}(\varphi)\subseteq (0, s)$ with the  integer $s\geq 2$.
This section focuses  on the PLS of   real-valued    signals in
\begin{align} \notag \begin{array}{lll} V_{\hbox{ca}}(\varphi)
=\big\{\sum^{\infty}_{k=0}c_k\varphi(\cdot-k): \{c_k\in \mathbb{R}\}\in \ell^2,  c_{0}\neq0\big\}.\end{array}\end{align}
Some denotations and definitions  are helpful  for  discussion.
Suppose that the    signal $f\in V_{\hbox{ca}}(\varphi)$
is denoted by
\begin{align}\label{targetreal}\begin{array}{lll}  f=\sum_{k=0}^{\infty}c_k\varphi(\cdot-k).\end{array}\end{align}
As in section \ref{main21} denote $\mathcal{N}_{f}=\sup\{k: c_{k}\neq0\}$.
As in \eqref{indexset},
define the  index set  $I_{n}$ by
\begin{align}\notag  I_{n}=\left\{\begin{array}{lll} \{0, 1, \ldots, n-1\},&1\leq n\leq s-1,\\
\{n-s+1,   \ldots, n-1\},&n\geq s.
\end{array}\right.\end{align}
For $n\geq1$ and  the signal   $f$ in \eqref{targetreal},  define an  auxiliary  function
\begin{align} \label{vnreal0}\begin{array}{lll}  v^{\Re}_{n,f}(x):=\sum_{k\in I_{n}}c_{k}\varphi(n+x-k), x\in (0,1).\end{array}\end{align}
Moreover, define
\begin{align} \label{vnreal}\begin{array}{lll}
A^{\Re}_{n,f}(x,y):=\frac{|f|(n+x)}{|\varphi|^{2}(x)}\Big[\varphi(x)\varphi(y)v^{\Re}_{n,f}(y)\\
\quad \quad \quad\quad \quad\quad\quad \quad\quad\quad  -v^{\Re}_{n,f}(x)\varphi^{2}(y)\Big],\end{array}\end{align}
and
\begin{align}\label{Cnx} \begin{array}{lll}  C^{\Re}_{n,f}(x,y)&:=|f|^{2}(n+y)-|v^{\Re}_{n,f}|^{2}(y)\\
&+\frac{2v^{\Re}_{n,f}(x)v^{\Re}_{n,f}(y)\varphi(y)}{\varphi(x)}\\
&\ -\frac{|\varphi|^{2}(y)}{|\varphi|^{2}(x)}\Big[|f|^{2}(n+x)+|v^{\Re}_{n,f}|^{2}(x)\Big],\end{array}\end{align}
whenever  $x, y\in (0,1)$ such that $\varphi(x)\neq0$. The maximum gap $\mathcal{G}_{f}$ is defined via Definition \ref{gap} with $\phi$ replaced by $\varphi.$ The sign function $\hbox{sgn}(x)$ takes  $1,-1$ and $0$ when $x>0, x<0$ and $x=0,$  respectively.

\begin{figure*}[htbp]\label{kncc}
 \scriptsize
 \centering
    \begin{minipage}[b]{.1\linewidth}
    \centerline{\includegraphics[width=15cm, height=7.8cm]{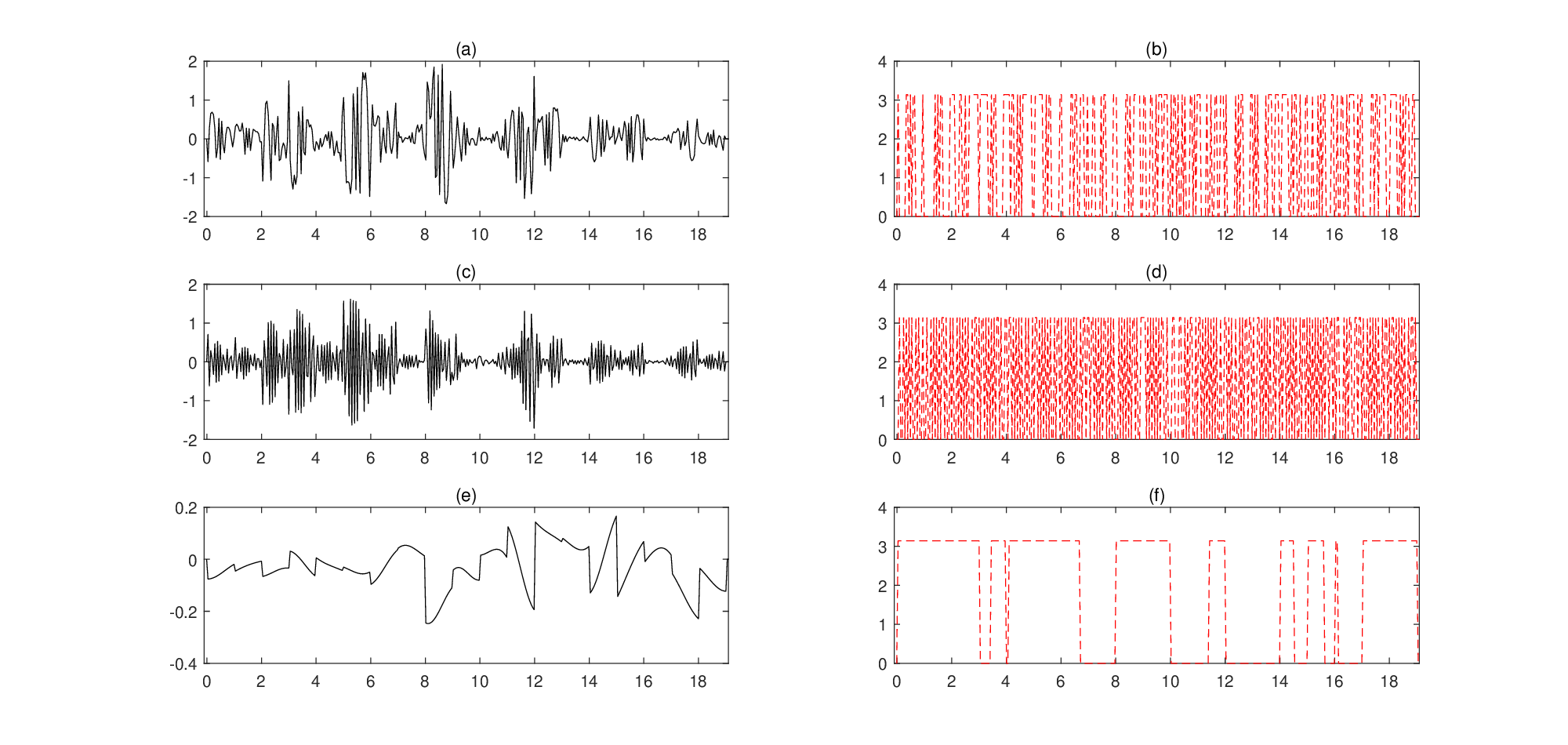}}
  \end{minipage}\hfill
 \caption{(a) The graph  of $f_{10,-0.2381,1}$; (b) The graph of the phase function of  $f_{10,-0.2381,1}$; (c) The graph  of $f_{50,-0.2381,1}$;
  (d) The graph of the phase function of $f_{50,-0.2381,1}$;
  (e) The graph  of $f_{10^{-6},-0.0038,0}$;
  (f) The graph of the phase function of $f_{10^{-6},-0.0038,0}$.}
 \label{fig:balvsunbalHam156}
\end{figure*}

\begin{figure*}[htbp]
 \scriptsize
 \centering
    \begin{minipage}[b]{.1\linewidth}
    \centerline{\includegraphics[width=16cm, height=4.1cm]{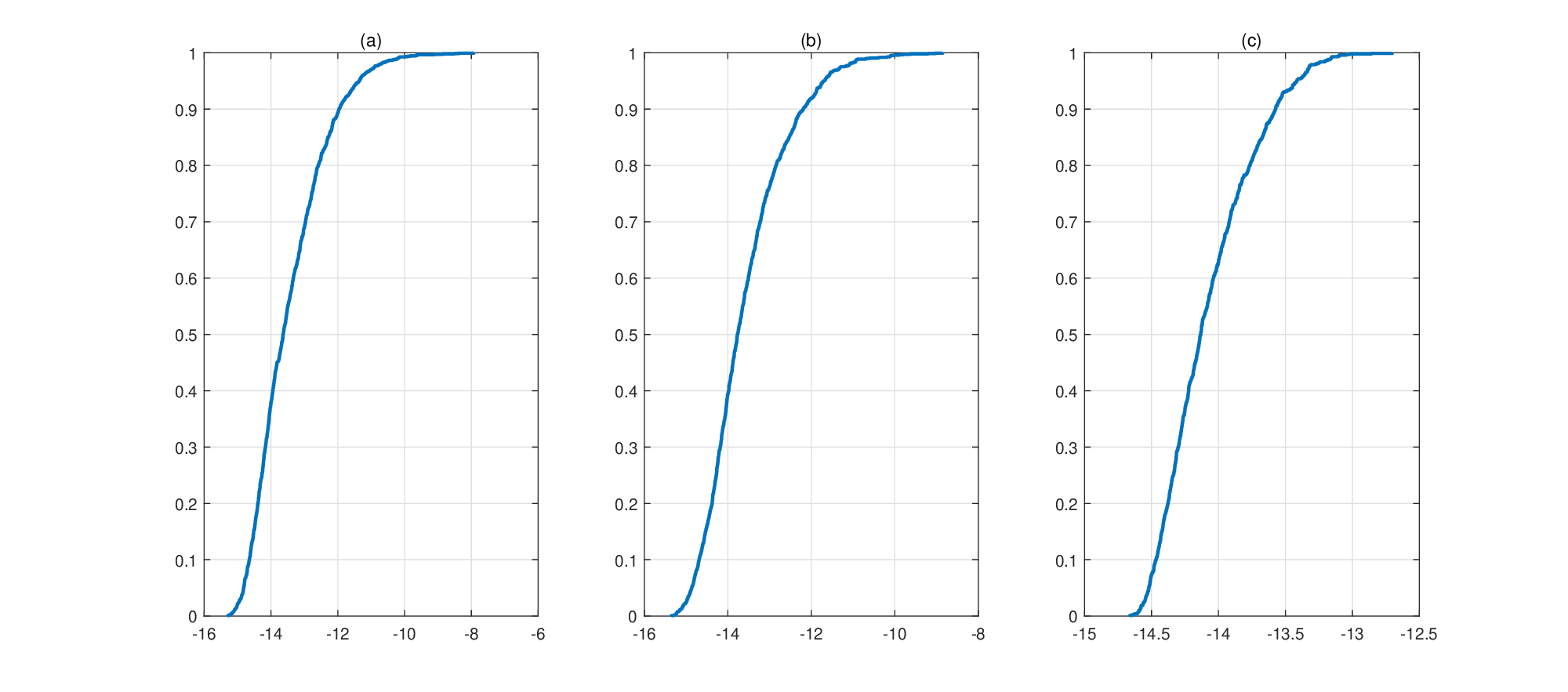}}
  \end{minipage}\hfill
 \caption{(a)  The CDF of $\hbox{Error}(f_{10,-0.2381,1})$ in the noiseless setting; (b) The CDF of  $\hbox{Error}(f_{50,-0.2381,1})$ in the noiseless setting; (c)
 The CDF of  $\hbox{Error}(f_{10^{-6},-0.0038,0})$ in the noiseless setting}
 \label{fig:balvsunbalHam167}
\end{figure*}

\begin{figure*}[htbp]
 \scriptsize
 \centering
    \begin{minipage}[b]{.1\linewidth}
    \centerline{\includegraphics[width=16cm, height=4.1cm]{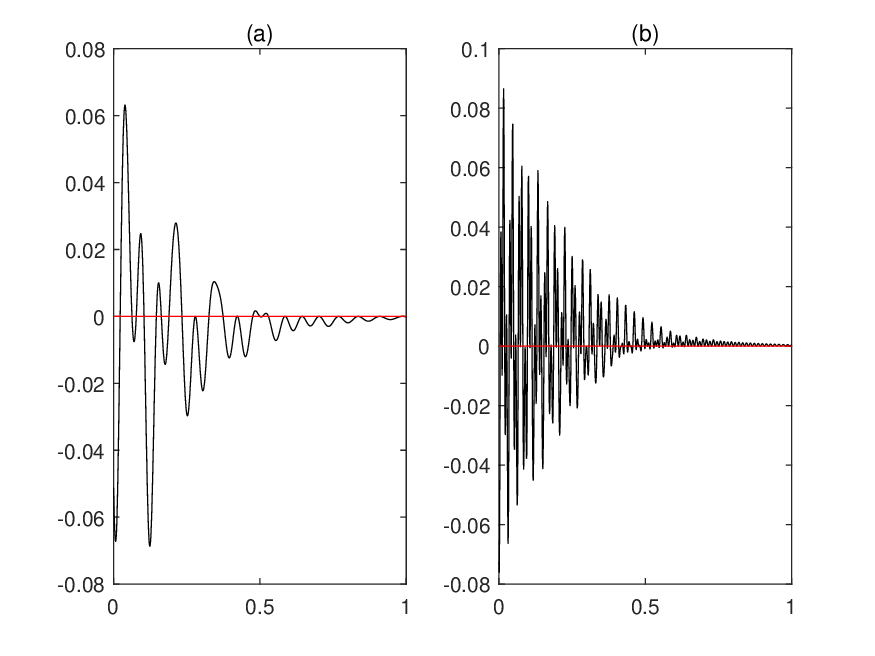}}
  \end{minipage}\hfill
 \caption{(a)  The graph of $A^{\Re}_{1,g_{10}}(0.5,x)$; (b) The graph of $A^{\Re}_{1,g_{50}}(0.5,x)$}
 \label{NND}
\end{figure*}


\subsection{Random PLS of  real-valued   signals in $V_{\hbox{ca}}(\varphi)$}
We next  establish the main theorem of  this section. It is the counterpart of Theorem \ref{main1}.

\begin{theorem}\label{main4}
Let  $\varphi$  be a real-valued  GHC-generator  such that   $\hbox{supp}(\varphi)\subseteq (0,s)$   with the  integer $s\geq 2$.
Then  any nonseparable and real-valued  signal   $f\in V_{\hbox{ca}}(\varphi)$  can be determined  (up to  a sign) with    probability $1$ by the random   samples
$\{|f(t_{0})|\}\cup\{|f(n+t_{n_{1}})|,|f(n+t_{n_{2}})|: n=1,  \ldots, \infty\}$,
where
 the i.i.d random  variables   $\{t_{0}\}\cup\{t_{n_{1}},t_{n_{2}}: n=1,  \ldots, \infty\}\sim\textbf{U}(0,1)$.
\end{theorem}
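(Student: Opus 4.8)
The strategy is to mimic the complex case (Theorem \ref{main1}) but with the simpler, real-specific machinery of \eqref{vnreal0}--\eqref{Cnx}. Since the generator is real-valued, each sample $f(n+t)$ lies in $\mathbb{R}$, so decoding the phase of $f(n+t)$ reduces to decoding the sign $\hbox{sgn}(f(n+t))\in\{+1,-1\}$. Thus, in place of the quadratic \eqref{root} with a complex unknown, one works with the single scalar equation
\begin{align}\notag A^{\Re}_{n,f}(x,y)z^{2}-C^{\Re}_{n,f}(x,y)z+A^{\Re}_{n,f}(x,y)=0,\end{align}
whose solutions are $z$ and $1/z$; the candidate signs are then $\pm1$ obtained from the admissible roots, and two evaluation points $y=t_{n_{2}}$ suffice (rather than three) because we only need to break a binary ambiguity rather than pin down a point on the unit circle. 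So the sample budget per integer node drops from $3$ to $2$, yielding random SD $=2$.

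\textbf{Key steps, in order.} First I would set up the real analogue of Approach \ref{decoder}: given $c_{0}:=\widetilde{\sigma}_{0}|f(t_{0})|/\varphi(t_{0})$ with an initial sign guess $\widetilde{\sigma}_{0}\in\{\pm1\}$, recursively recover $c_{n}$ from $c_{0},\dots,c_{n-1}$ via $c_{n}=[\,\hbox{sgn}(f(n+t_{n_{1}}))|f(n+t_{n_{1}})|-v^{\Re}_{n,f}(t_{n_{1}})\,]/\varphi(t_{n_{1}})$, where the sign is extracted from the scalar equation above using the extra sample at $t_{n_{2}}$. I would prove the real counterparts of the lemmas used for Theorem \ref{main1}: (a) an analogue of Lemma \ref{nonsp}, namely that nonseparability forces $\mathcal{G}_{f}<s-1$ (identical argument, using that $\varphi$ is a real-valued GHC-generator, Definition \ref{CHCreal}); (b) an analogue of Lemma \ref{theoremX}, that with probability $1$ one has $\varphi(t_{n_{1}})\neq0$ and $A^{\Re}_{n,f}(t_{n_{1}},t_{n_{2}})\neq0$ for each $n\le\mathcal{N}_{f}+s-1$ — this uses that $A^{\Re}_{n,f}(x,y)$, as a function of $(x,y)$, is (up to the nonvanishing factor $|f|(n+x)/\varphi^{2}(x)$) a bilinear expression in the GHC-systems $\{\varphi(n+\cdot-k)\}$ and $\{\varphi\varphi(\cdot+k)\}$, hence its zero set in the unit square has Lebesgue measure zero by the GHC (Proposition \ref{Haarcond} has a real analogue, or one invokes Definition \ref{CHCreal} directly together with Fubini); and (c) an analogue of Lemma \ref{auxill0}, that the two-equation scalar system has a unique admissible sign with probability $1$ — here the bad event ``the two sign-candidates from $y=t_{n_{2}}$ coincide with those forced by consistency'' is again contained in a measure-zero algebraic set. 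Then, exactly as in section \ref{proof2}, a recursion on $n$ combined with a countable union over $n$ (probability-$1$ events are stable under countable intersection) gives that all signs $\hbox{sgn}(f(n+t_{n_{j}}))$, and hence all $c_{n}$, are recovered up to the single global sign $\widetilde{\sigma}_{0}\sigma_{0}$ with probability $1$, which is the claimed ``up to a sign'' conclusion. Finiteness of the needed node range follows from $\hbox{supp}(\varphi)\subseteq(0,s)$ and the definition of $\mathcal{N}_{f}$ as in the complex proof.

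\textbf{Main obstacle.} The delicate point is the real analogue of Lemma \ref{auxill0}: showing that the two candidate signs produced at the auxiliary node $t_{n_{2}}$ almost surely single out the correct one. In the complex case the two quadratics share at most one root generically, and Lemma \ref{theo2.5} (on the phase of $A_{n,f}+\mathbf{i}B_{n,f}$ being almost surely not a fixed value) is what rules out degeneracy; in the real case the analogous statement is that $A^{\Re}_{n,f}(t_{n_{1}},t_{n_{2}})$ and $C^{\Re}_{n,f}(t_{n_{1}},t_{n_{2}})$ do not satisfy the exceptional relation $C^{\Re}_{n,f}=\pm 2A^{\Re}_{n,f}$ (which would make $z=\pm1$ a double root and collapse the two candidates), again a measure-zero condition via the GHC. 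I would need to verify carefully that this exceptional locus is not identically satisfied — i.e., that $C^{\Re}_{n,f}\mp 2A^{\Re}_{n,f}$, viewed as a function in $\hbox{span}$ of products of shifts of $\varphi$, is not the zero function — which is where nonseparability (via $\mathcal{G}_{f}<s-1$, so that $v^{\Re}_{n,f}\not\equiv 0$ at the relevant $n$) and the linear independence built into the GHC are essential. The remaining steps are routine adaptations of the complex-case arguments with ``phase'' replaced by ``sign'' throughout.
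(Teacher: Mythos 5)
Your overall architecture --- the recursive PD-CR with an initial sign guess, the real analogues of Lemma \ref{nonsp} and of Lemma \ref{theoremX} giving $P(A^{\Re}_{n,f}(t_{n_1},t_{n_2})\neq0)=1$, and the final argument that the wrong initial sign simply produces $-f$ --- matches the paper's proof of Theorem \ref{main4}. The genuine gap is in the one step that actually makes the sampling density drop to $2$: why a \emph{single} auxiliary sample $|f(n+t_{n_2})|$ pins down the sign. You invoke a ``two-equation scalar system'' and an analogue of Lemma \ref{auxill0}, and you propose to show that the ``exceptional relation'' $C^{\Re}_{n,f}=\pm2A^{\Re}_{n,f}$ is avoided outside a measure-zero set. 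This is backwards. The quadratic $A^{\Re}_{n,f}z^{2}-C^{\Re}_{n,f}z+A^{\Re}_{n,f}=0$ has roots whose product is $1$; since the true root $z_n=\hbox{sgn}(f(n+t_{n_1}))$ satisfies $z_n^{2}=1$, the second root $1/z_n$ coincides with $z_n$, so the true sign is automatically a \emph{double} root and the identity $C^{\Re}_{n,f}=2A^{\Re}_{n,f}\,z_n$ holds at every admissible point. Thus $C^{\Re}_{n,f}=\pm2A^{\Re}_{n,f}$ is not a degeneracy to be ruled out --- one of the two relations always holds, with the sign equal to the quantity you are trying to recover --- and your proposed verification that $C^{\Re}_{n,f}\mp2A^{\Re}_{n,f}$ is not identically zero is both unprovable in the form you state it and unnecessary.

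The correct resolution, which is the one the paper uses in \eqref{positive}--\eqref{signdecoding}, is immediate once $A^{\Re}_{n,f}(t_{n_1},t_{n_2})\neq0$ holds with probability $1$: both $+1$ and $-1$ can be roots only if $A^{\Re}_{n,f}=C^{\Re}_{n,f}=0$, so the root in $\{1,-1\}$ is unique and is read off in closed form as $z_n=\hbox{sgn}\big(C^{\Re}_{n,f}(t_{n_1},t_{n_2})/A^{\Re}_{n,f}(t_{n_1},t_{n_2})\big)$. No analogue of Lemma \ref{auxill0} or of Lemma \ref{theo2.5} is needed in the real case; that absence is precisely what buys SD $=2$ instead of $3$. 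With this step replaced, the remainder of your outline goes through as in the paper.
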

\begin{proof}
Since $f$ is nonseparable, by the     same  argument as      in   Lemma \ref{nonsp} we have
$\mathcal{G}_{f}<s-1$, which together with $\varphi$  being  a real-valued  GHC-generator leads to   the probability  $ P(|f(k+t_{k_{i}})|>0)=1$
for any  $k\in \{0,  \ldots, \mathcal{N}_{f}+s-1\}.$
Clearly, $ P(|\varphi(t_{n_{i}})|>0)=1.$
As in section \ref{proof2}, it is sufficient to prove that the phases
$\{\theta(f(t_{0}))\}\cup\{\theta(f(n+t_{n_{i}})): n=1,  \ldots, \mathcal{N}_{f}+s-1, i=1,2\}$
can be determined, up to the   constant $\pi$, with probability $1.$

 Denote $f=\sum_{k=0}^{\infty}c_k\varphi(\cdot-k). $  By  the similar argument as   in the proof of  Lemma \ref{theoremX}, we can  prove that
 \begin{align}\label{gurantee}
P(A^{\Re}_{n,f}(t_{n_{1}},t_{n_{2}})\neq0)=1, n=1,  \ldots, \mathcal{N}_{f}+s-1.
 \end{align}
Assume  that  $z_{0}=e^{\textbf{i}\theta(f(t_{0}))}\in\{1, -1\}$ is  assigned exactly. As  previously, $ P(|f(n+t_{n_{i}})|>0)=P(|\varphi(t_{n_{i}})|>0)=1.$ Then
\begin{align}\label{assume1} \begin{array}{lll} f(t_{0})=z_{0}|f(t_{0})|,\end{array}\end{align}
and $c_{0}=\frac{z_{0}|f(t_{0})|}{\varphi(t_{0})}$  with probability $1$.
We next  determine $\theta(f(t_{1_{1}}+1))$ and $c_{1}$.
Similarly to  \eqref{iterative00},
we have
\begin{align}\label{iterative01} \left\{\begin{array}{lll}
|v^{\Re}_{1,f}(t_{1_{1}})+c_{1}\varphi(t_{1_{1}})|=|f(1+t_{1_{1}})|,\\
|v^{\Re}_{1,f}(t_{1_{2}})+c_{1}\varphi(t_{1_{2}})|=|f(1+t_{1_{2}})|.
\end{array}\right.\end{align}
Denote  $f(1+t_{1_{1}}):=z_{1}|f(1+t_{1_{1}})|$ with $z_{1}\in \{1,-1\}$ to be determined.
By the similar  argument  as  in  \eqref{sxdnew00}, we can prove that  $z_{1}$ is the solution to
\begin{align}\label{positive} \begin{array}{lll} A^{\Re}_{1,f}(t_{1_{1}},t_{1_{2}})z^{2}-C^{\Re}_{1,f}(t_{1_{1}},t_{1_{2}})z+A^{\Re}_{1,f}(t_{1_{1}},t_{1_{2}})=0.\end{array}\end{align}
It follows from  \eqref{gurantee} that   with probability $1 $, there exist at most  two  solutions to the above equation.
Note  that   the product of the two solutions  is $1$. Then
there exists  a unique   solution   with the same probability.
More precisely,  \begin{align}\label{signdecoding} \begin{array}{lll} z_{1}=\hbox{sgn}\big(\frac{C^{\Re}_{1,f}(t_{1_{1}},t_{1_{2}})}{A^{\Re}_{1,f}(t_{1_{1}},t_{1_{2}})}\big).\end{array}\end{align}
Therefore under   the assumption \eqref{assume1}, $c_{1}=\frac{z_{1}|f(1+t_{1_{1}})|-v_{1,f}(t_{1_{1}})}{\varphi(t_{1_{1}})}$ with   probability $1$.
And $\theta(f(1+t_{1_{i}}))$ can be determined with probability $1$. Continuing the above procedures, $\{\theta(f(t_{0}))\}\cup\{\theta(f(n+t_{n_{i}})): n=1,  \ldots, \mathcal{N}_{f}+s-1, i=1,2\}$
can be determined  with the same  probability.

Contrary to \eqref{assume1},
we next assign \begin{align}\label{assume2} \begin{array}{lll} f(t_{0})=-z_{0}|f(t_{0})|.\end{array}\end{align}
 Under \eqref{assume2}, we shall prove that $\{\theta(f(t_{0}))+\pi\}\cup\{\theta(f(n+t_{n_{i}}))+\pi: n=1,  \ldots, \mathcal{N}_{f}+s-1, i=1,2\}$ or $\widetilde{f}=\sum_{k=0}^{\infty}\widetilde{c}_k\varphi(\cdot-k) $ can be determined with probability $1$, where $\widetilde{c}_k=-c_{k}$.
 First it follows from $P(|\varphi(t_{0})|\neq0)=1$ that  \begin{align}\label{assume3} \begin{array}{lll} \widetilde{c}_{0}=-\frac{z_{0}|f(t_{0})|}{\varphi(t_{0})}=-c_{0}.\end{array}\end{align}
 Then $\theta(\widetilde{f}(t_{0_{i}}))=\theta(f(t_{0_{i}}))+\pi$.
By \eqref{vnreal}, \eqref{Cnx} and \eqref{assume3}, we have
$A^{\Re}_{1,\widetilde{f}}(t_{1_{1}},t_{1_{2}})=-A^{\Re}_{1,f}(t_{1_{1}},t_{1_{2}})$
and   $C^{\Re}_{1,\widetilde{f}}(t_{1_{1}},t_{1_{2}})=C^{\Re}_{1,f}(t_{1_{1}},t_{1_{2}})$.
Moreover,   as  in \eqref{positive},
$\hbox{sgn}(\widetilde{f}(1+t_{1_{1}}))$ is the solution to
\begin{align} \notag \begin{array}{lll} A^{\Re}_{1,\widetilde{f}}(t_{1_{1}},t_{1_{2}})z^{2}-C^{\Re}_{1,\widetilde{f}}(t_{1_{1}},t_{1_{2}})z+A^{\Re}_{1,\widetilde{f}}(t_{1_{1}},t_{1_{2}})=0.\end{array}\end{align}
As in \eqref{signdecoding}, the solution is given by  \begin{align} \notag \begin{array}{lll} z=\hbox{sgn}\big(\frac{C^{\Re}_{1,\widetilde{f}}(t_{1_{1}},t_{1_{2}})}{A^{\Re}_{1,\widetilde{f}}(t_{1_{1}},t_{1_{2}})}\big)=-\hbox{sgn}\big(\frac{C^{\Re}_{1,f}(t_{1_{1}},t_{1_{2}})}{A^{\Re}_{1,f}(t_{1_{1}},t_{1_{2}})}\big).\end{array}\end{align}
Then $\theta(\widetilde{f}(1+t_{1_{1}}))=\theta(f(1+t_{1_{1}}))+\pi$. Consequently,  $\widetilde{c}_{1}=-c_{1}$ and
$\theta(\widetilde{f}(1+t_{1_{2}}))=\theta(f(1+t_{1_{2}}))+\pi$.
By recursion on $n$, we can prove that $\{\theta(f(t_{0}))+\pi\}\cup\{\theta(f(n+t_{n_{i}}))+\pi: n=1,  \ldots, \mathcal{N}_{f}+s-1, i=1,2\}$
can be determined  with probability $1.$
\end{proof}

The following proposition concerns on the local reconstruction.
It is the counterpart of Proposition \ref{main1proposition}.

\begin{proposition}\label{main1proposition123}
Let  $\varphi$ and $f$ be as in Theorem \ref{main4}.
Then for any integer  $L>1$, the restriction $f_{[0, L]}$ of $f$ on $[0, L]$ can be determined with    probability $1$, up to  a sign,
by the random    samples $\{|f(t_{0})|\}\cup\{|f(n+t_{n_{1}})|,|f(n+t_{n_{2}})|: n=1,  \ldots, L-1\}$,
where
 the i.i.d random  variables   $\{t_{0}\}\cup\{t_{n_{1}},t_{n_{2}}: n=1,  \ldots, L-1\}\sim\textbf{U}(0,1)$.
\end{proposition}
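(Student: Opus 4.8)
The plan is to deduce this from Theorem \ref{main4} in exactly the way Proposition \ref{main1proposition} was deduced from Theorem \ref{main1} in section \ref{xcvc}. The first step is to observe that, since $\hbox{supp}(\varphi)\subseteq(0,s)$, for every $x\in[0,L]$ the series $f(x)=\sum_{k\ge0}c_k\varphi(x-k)$ receives contributions only from indices $0\le k\le L-1$; hence the restriction $f_{[0,L]}$ is completely determined by the finite coefficient block $c_0,\dots,c_{L-1}$, and it suffices to recover that block up to one common sign, with probability $1$, from the prescribed samples $\{|f(t_0)|\}\cup\{|f(n+t_{n_1})|,|f(n+t_{n_2})|:n=1,\dots,L-1\}$.

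Next I would run, for $n=0,1,\dots,L-1$ only, the recursion from the proof of Theorem \ref{main4}. Since $f$ is nonseparable, Lemma \ref{nonsp} gives $\mathcal{G}_{f}<s-1$, so $P(|f(n+t_{n_i})|>0)=1$ for $1\le n\le\min\{L-1,\mathcal{N}_{f}+s-1\}$, while $P(|\varphi(t_{n_i})|>0)=1$ always. The base step assigns $z_0=\hbox{sgn}(f(t_0))\in\{1,-1\}$ and sets $c_0=z_0|f(t_0)|/\varphi(t_0)$. For the inductive step, assuming $c_0,\dots,c_{n-1}$ have been recovered up to a common sign, I would form as in \eqref{iterative01}--\eqref{positive} the quadratic $A^{\Re}_{n,f}(t_{n_1},t_{n_2})z^2-C^{\Re}_{n,f}(t_{n_1},t_{n_2})z+A^{\Re}_{n,f}(t_{n_1},t_{n_2})=0$, invoke the non-vanishing guarantee \eqref{gurantee} (that is, $P(A^{\Re}_{n,f}(t_{n_1},t_{n_2})\neq0)=1$, proved exactly as in Lemma \ref{theoremX}, now needed only for the finitely many indices $n\le\min\{L-1,\mathcal{N}_{f}+s-1\}$), and use that the two roots have product $1$ to conclude that $z_n=\hbox{sgn}\big(C^{\Re}_{n,f}(t_{n_1},t_{n_2})/A^{\Re}_{n,f}(t_{n_1},t_{n_2})\big)$ is the unique admissible sign; then $c_n=\big[\,z_n|f(n+t_{n_1})|-v^{\Re}_{n,f}(t_{n_1})\,\big]/\varphi(t_{n_1})$. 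For the (at most finitely many) indices $n$ with $\mathcal{N}_{f}+s-1<n\le L-1$, if any, the function $f$ vanishes identically on $[n,n+1]$, so the corresponding samples and $v^{\Re}_{n,f}$ vanish and the same formula returns $c_n=0$. Finally, replacing the initial choice $z_0$ by $-z_0$ turns every $c_k$ into $-c_k$, exactly as in the second half of the proof of Theorem \ref{main4}, so the ambiguity is a single global sign.

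The event on which all of the above succeeds is a \emph{finite} intersection, over $n=0,1,\dots,L-1$, of the probability-$1$ events ``$|\varphi(t_{n_i})|>0$'', ``$|f(n+t_{n_i})|>0$'' and ``$A^{\Re}_{n,f}(t_{n_1},t_{n_2})\neq0$'' (the latter two for $1\le n\le\min\{L-1,\mathcal{N}_{f}+s-1\}$); a finite intersection of probability-$1$ events again has probability $1$, and on it $c_0,\dots,c_{L-1}$, hence $f_{[0,L]}$, are reconstructed up to a sign.

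I do not expect a genuinely new obstacle beyond Theorem \ref{main4}; the only points that need a line of care are verifying that the restriction depends on just finitely many coefficients and that the almost-sure guarantees borrowed from Theorem \ref{main4} are invoked only on a finite index range, so that their intersection is still an event of probability $1$. The ``hard'' analytic input---non-vanishing of $A^{\Re}_{n,f}(t_{n_1},t_{n_2})$ with probability $1$---is already provided by the proof of Lemma \ref{theoremX} and \eqref{gurantee}; everything else is the bookkeeping of the truncated recursion.
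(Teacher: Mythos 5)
Your proposal is correct and takes essentially the same route as the paper, which simply truncates the recursion from the proof of Theorem \ref{main4} to the finite index range $n=0,\dots,L-1$ exactly as Proposition \ref{main1proposition} was deduced from Theorem \ref{main1} in section \ref{xcvc}. Your write-up is in fact more explicit than the paper's two-line proof (noting that $f_{[0,L]}$ depends only on $c_0,\dots,c_{L-1}$ and that a finite intersection of probability-$1$ events has probability $1$), but the underlying argument is identical.
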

\begin{proof}
The proof is based on that of Theorem \ref{main4}. And it can be concluded by  the
similar argument as in  section \ref{xcvc}.
\end{proof}

\subsection{PD-CR for nonseparable real-valued signals in $V_{\hbox{ca}}(\varphi)$}
Let $\{\widehat{t}_{0}\}\cup\{\widehat{t}_{n_{1}},\widehat{t}_{n_{2}}: n=1,  \ldots, \infty\}$ be the observed values of random  variables   $\{t_{0}\}\cup\{t_{n_{1}},t_{n_{2}}: n=1,  \ldots, \infty\}$ in Theorem \ref{main4}.
Based on the proof of Theorem \ref{main4}, in what follows
we  establish  an approach for the PLS of nonseparable real-valued signals in $V_{\hbox{ca}}(\varphi)$.

\textbf{Approach III-B} \label{gghh}

\textbf{Input}: Samples   $\{|f(\widehat{t}_{0})|\}\cup\{|f(k+\widehat{t}_{k_{j}})|:  j=1, 2, k=1, \ldots, n\}$ where $\widehat{t}_{0}, \widehat{t}_{k_{j}}\in (0, 1) $ and  $n\leq \mathcal{N}_{f}+s-1$.
Assign initial phase $\theta(f(\widehat{t}_{0}))=\widetilde{\theta}_{0}\in \{0, \pi\}$; $c_{0}= e^{\textbf{i}\widetilde{\theta}_{0}}|f(\widehat{t}_{0})|/\varphi(\widehat{t}_{0})$.

\textbf{Output}: $\{c_{k}\}^{n}_{k=0}$ and $\{\theta(f(\widehat{t}_{0}))\}\cup\{\theta(f(k+\widehat{t}_{k_{j}})):  j=1, 2, k=1, \ldots, n\}$.

\textbf{Recursion assumption}: Assume that the  phases  $\{\theta(f(k+\widehat{t}_{k_{j}})):  j=1, 2, k= 1, \ldots, n-1\}$
and coefficients  $\{c_{k}\}^{n-1}_{k=0}$ have been recovered.
Then    $\{\theta(f(n+\widehat{t}_{n_{j}})): j=1, 2\}$ and $c_{n}$ are recovered by the following steps:

\textbf{step 1:} Compute $v^{\Re}_{n,f}(\widehat{t}_{n_{1}})$, $A^{\Re}_{n,f}(\widehat{t}_{n_{1}},\widehat{t}_{n_{2}})$   and $C^{\Re}_{n,f}(\widehat{t}_{n_{1}},\widehat{t}_{n_{2}})$
by \eqref{vnreal0}, \eqref{vnreal} and  \eqref{Cnx},  respectively.

\textbf{step 2:}
$\theta(f(n+\widehat{t}_{n_{1}}))\in \{0, \pi\}$
is recovered by
computing
$e^{\textbf{i}\theta(f(n+\widehat{t}_{n_{1}}))}=\hbox{sgn}\big(\frac{C^{\Re}_{n,f}(\widehat{t}_{n_{1}},\widehat{t}_{n_{2}})}{A^{\Re}_{n,f}(\widehat{t}_{n_{1}},\widehat{t}_{n_{2}})}\big)$.
And $c_{n}=[e^{\textbf{i}\theta(f(n+\widehat{t}_{n_{1}}))}|f(n+\widehat{t}_{n_{1}})|-v^{\Re}_{n,f}(\widehat{t}_{n_{1}})]/\varphi(\widehat{t}_{n_{1}})$.



\subsection{Numerical simulation: costing  small number of     samples to reconstruct  highly oscillatory real-valued chirps}\label{chirp}
 \begin{table*}[htbp]
 \scriptsize
 \centering
\begin{tabular}{|c|cccccccccccccc|c||} \hline \diagbox{$f_{a,b,p}$}{SNR}&35&40& 50&60&70&80&90&100 \\
 \hline $f_{50,-0.2381,1}$& 0.0100& 0.0590&0.2460&0.5290&0.6870&0.8450&0.9200&0.9620 \\
 $f_{10,-0.2381,1}$&0.0160&0.0470& 0.2520&0.4590& 0.6570&0.8540&0.9090&0.9560 \\
  $f_{10^{-6},-0.0038,0}$&0.1100&  0.3370& 0.7390& 0.9080 & 0.9520&  0.9800 &0.9850 &0.9960\\
  \hline
 \end{tabular}
 \caption{Success rate  vs     noise level (SNR).}
 \label{table123}
 \end{table*}


This section is to  examine    the efficiency of Approach \ref{gghh}.
The generator  $\varphi$  is chosen as $\phi_{a,b,p,\Re}$, the real part of $\phi_{a,b,p}$ defined  in    section \ref{numeri1}.
The test  signal   is
\begin{align}\label{targetnumer2}\begin{array}{lll}  f_{a,b,p}(t)=\sum_{n=0}^{15}c_{n,a}\phi_{a,b,p,\Re}(t-n), \end{array}\end{align}
where $c_{0,a}\neq0, c_{n,a}\in \mathbb{R}$.
It is easy to check that $f_{a,b,p}(t)$ can be rewritten as the real-valued  chirp   form (c.f. \cite{CHEN}): $A(t)\cos(\lambda\upsilon(t))$ with $A(\cdot)\geq0$. By the  analysis     in section \ref{jianzhijiushishengli},
we can check   that   $\phi_{10,-0.238,1,\Re}$, $\phi_{50,-0.238,1,\Re}$
and $\phi_{10^{-6},-0.0038,0,\Re}$
are all  real-valued   GHC-generators.  We choose $f_{10,-0.2381,1}$, $f_{50,-0.2381,1}$
and $f_{10^{-6},-0.0038,0}$ as test signals.
Their graphs    are plotted in Fig. \ref{fig:balvsunbalHam156}  (a, c, e). Moreover,  the phase function
$\theta(f_{a,b,p}(x))$, taking $0$ and $\pi$ when $f_{a,b,p}(x)\geq0$ and $f_{a,b,p}(x)<0$, respectively,
is plotted in Fig. \ref{fig:balvsunbalHam156}  (b, d, f).


Fig. \ref{fig:balvsunbalHam156}  (b, d, f) imply  that   $f_{10,-0.2381,1}$ and  $f_{50,-0.2381,1}$ are much more oscillatory
than $f_{10^{-6},-0.0038,0}$. It should be noted that a great number of  deterministic samples  are necessary for the local reconstructions of   $f_{10,-0.2381,1}$ and  $f_{50,-0.2381,1}$. To make this point,
define
\begin{align}\label{bianhuakuai} g_{a}(t)=\sum_{n=0}^{1}c_{n,a}\phi_{a,-0.238,1,\Re}(t-n), \ t\in (0, 2),\end{align}
where $c_{0,10}=0.7064, c_{1,10}=-0.6183, c_{0, 50}=-0.5874$ and $ c_{1,50}=0.2659$ are as in \eqref{targetnumer2}. Clearly,
\begin{align}\label{jycs} g_{a}(t)=f_{a,-0.2381,1}(t), \ t\in (0, 2),\end{align}
and the  reconstruction  of  $g_{a}$ is equivalent to ones   of   $c_{0,a}$ and $c_{1,a}$.
Suppose that      $c_{0,a}$ and $c_{1,a}$ can be recovered  by   \textbf{any}   $\mathring{L}$ deterministic samples $\{|g_{a}(\widehat{t}_{0})|,   |g_{a}(1+\widehat{t}_{1_{i}})|: i=1, \ldots, \mathring{L}-1\}$, where $\widehat{t}_{0}$, $
\widehat{t}_{1_{i}}\in (0,1)$.  We next estimate   $\mathring{L}.$
It is required that $\phi_{a,-0.238,1,\Re}(\widehat{t}_{0})\neq0$
such that $c_{0,a}$ can be determined, up to a sign, by $|g_{a}(\widehat{t}_{0})|$. Otherwise,
$|g_{a}(\widehat{t}_{0})|$ is useless for determining  $c_{0,a}$. Without loss of generality,
assume that $c_{0,a}$ is determined and  $|g_{a}(1+\widehat{t}_{1_{i}})|\neq0.$
Next we need to determine $c_{1, a}$. Then the  determination of
$c_{1, a}$ is equivalent to the  determination of
$z:=\hbox{sgn}(g_{a}(1+\widehat{t}_{1_{1}}))$. By the    analysis in \eqref{iterative01}, $z$
is the solution to \eqref{positive} with $A^{\Re}_{1,f}(t_{1_{1}},t_{1_{2}})$ and $C^{\Re}_{1,f}(t_{1_{1}},t_{1_{2}})$
therein  replaced by
$A^{\Re}_{1,g_{a}}(\widehat{t}_{1_{1}}, \widehat{t}_{1_{j}})$ and $C^{\Re}_{1,g_{a}}(\widehat{t}_{1_{1}},\widehat{t}_{1_{j}})$, respectively,
where $j\neq1$.
Clearly, $z$ can be determined if and only if   $A^{\Re}_{1,g_{a}}(\widehat{t}_{1_{1}}, \widehat{t}_{1_{j}})\neq0$.
As an example,   we choose $\widehat{t}_{1_{1}}=0.5$ without bias.
See the graph  of $A^{\Re}_{1,g_{a}}(0.5,x)$ on $(0,1)$
in Fig. \ref{NND}.
Obviously     the number  of zeros of $A^{\Re}_{1,g_{a}}(0.5,x)$
on $(0,1)$ is much larger than $2 $.
Especially, we found that the number of zeros of  $A^{\Re}_{1,g_{50}}(0.5,x)$
is not smaller than $\textbf{256}.$ Then  we need at least $\textbf{257}$ additional  deterministic samples on $(1, 2)$ to
avoid $A^{\Re}_{1,g_{50}}(0.5,x)=0.$ Therefore  for  reconstructing  $g_{50}$, $\mathring{L}\geq \textbf{259}$
although it is determined by only   \textbf{two} coefficients.
 By Proposition \ref{main1proposition123}, however,   $g_{a}$ can be determined, with probability $1$, by just  \textbf{three} random
samples. Allowing for \eqref{jycs} we just need to check the recovery  efficiency of $f_{a,-0.2381,1}$.

In the present simulation,  by
the random samples \begin{align}\label{data123}\begin{array}{lll} \{|f_{a,b,p}(t_{0})|\}\cup\{|f_{a,b,p}(n+t_{n_{1}})|,|f_{a,b,p}(n+t_{n_{2}})|:\\
\quad  n=1, \ldots, 18\},\end{array}\end{align}  $10^3$ trials of  Approach \ref{gghh}  are conducted to recover  $f_{a,b,p}$, where   $\{t_{0}\}\cup\{t_{n_{1}},t_{n_{2}}: n=1, \ldots, 18\}\sim \textbf{U}(0, 1)$. The recovery  error is defined as
\begin{align}\label{errorspline} \begin{array}{lll} \hbox{Error}(f_{a,b,p})&:=\log_{10}(\min_{\gamma\in \{1, -1\}}||\{c_{k,a}\}\\
&\quad \quad\quad\quad-\gamma\{\widetilde{c}_{k,a}\}||_{2}/||\{c_{k,a}\}||_{2}),\end{array}\end{align}
where $\{\widetilde{c}_{k,a}\}$ is the recovery  version of $\{c_{k,a}\}.$
As in section \ref{numeri1},    the  approach   is considered   successful  if $\hbox{Error}(f_{a,b,p})\leq-1.8$, and
the  cumulative distribution function (CDF) of the
error is  defined via  \eqref{cdf}.
Clearly,    Fig. \ref{fig:balvsunbalHam167} implies  that  $f_{10,-0.2381,1}$, $f_{50,-0.2381,1}$
and $f_{10^{-6},-0.0038,0}$ can be   recovered  perfectly
 in the noiseless setting. To check the stability to noise, we also conduct  $10^3$ trials in the noisy setting. As in section \ref{numeri1}, we add  the  Gaussian noise $\varepsilon\sim \textbf{N}(0, \sigma^{2})$ to the observed  noiseless samples in \eqref{data123}.
The variance $\sigma^{2}$ is chosen via \eqref{SNRDE} with $55$ therein replaced  by $37$
such that the desired  SNR can be expressed.
As in the noiseless case, $10^3$ trials are also conducted.
The success rates  ($\hbox{CDF}(-1.8)$) are recorded in Table \ref{table123}.

Comparing Table \ref{jkk} and Table \ref{table123} we found that, for the low   SNR (e.g. $\leq 60$),
the stability to noise in the present simulation (real-valued case) is much stronger than that in
section \ref{chirp} (complex-valued case).  We next interpret this from the phase distribution perspective.

\begin{remark}
For a real-valued signal $f\in V(\varphi)$, its phase function $\theta(f(x))$ has only  two values: $0$ and $\pi$.
Since the samples   in \eqref{data123}  are perturbed,
 unavoidably  so is $\frac{C^{\Re}_{n,f}(\widehat{t}_{n_{1}},\widehat{t}_{n_{2}})}{A^{\Re}_{n,f}(\widehat{t}_{n_{1}},\widehat{t}_{n_{2}})}$
in $\textbf{step 2}$ of Approach \ref{decoder}.
If the perturbation $\epsilon$ of $\frac{C^{\Re}_{n,f}(\widehat{t}_{n_{1}},\widehat{t}_{n_{2}})}{A^{\Re}_{n,f}(\widehat{t}_{n_{1}},\widehat{t}_{n_{2}})}$ satisfies $|\epsilon|<
|\frac{C^{\Re}_{n,f}(\widehat{t}_{n_{1}},\widehat{t}_{n_{2}})}{A^{\Re}_{n,f}(\widehat{t}_{n_{1}},\widehat{t}_{n_{2}})}|$, then
$\theta(f(n+\widehat{t}_{n_{1}}))$ can be decoded \emph{exactly} through $\textbf{step 2}$.
Unlike the real-valued case, Fig. \ref{fig:balvsunbalHam1234567} implies that the phases of the complex-valued signals in section \ref{numeri1} are much more complicated. Therefore, it is  no wonder  that the stability in the present simulation is much stronger than that in
section \ref{numeri1}.
\end{remark}

On the other hand, it follows from Fig. \ref{fig:balvsunbalHam156} (b, d, f)
that the phase function  of  $f_{10^{-6},-0.0038,0}$ varies much more  slowly  than those of $f_{10,-0.2381,1}$ and $f_{50,-0.2381,1}$.
And when SNR ($\leq 50$) is low,   numerical results in Table \ref{table123} imply
the much stronger  stability for $f_{10^{-6},-0.0038,0}$.

Recall that the distribution and oscillation of the phase is the intrinsic property of a signal. Overall,
the simulation results in section \ref{numeri1} and in  the present section imply that the recovery stability to noise
is related with the property.

\section{Conclusion}\label{conclusion}
We  prove that the full spark property   is not sufficient
for the phaseless sampling  in   complex-generated shift-invariant spaces (SISs) (Theorem \ref{examm}).
We establish a   condition for decoding the phases of the samples (Theorem \ref{theorem1}).
 Based on Theorem \ref{theorem1},  we establish a reconstruction scheme  in Approach \ref{decoder}.
Based on  Approach \ref{decoder} and the  generalized Haar condition (GHC),
nonseparable and causal (NC) signals in the complex-generated SISs can be determined  with probability $1$
if  the random sampling density (SD) is not smaller than $3$ (Theorem \ref{main1}).
Approach \ref{decoder}  is modified to Approach \ref{gghh} such that it is more adaptive to
real-valued NC signals in real-generated SISs. Based on Approach \ref{gghh} and GHC,  real-valued NC signals in the real-generated SISs can be determined  with probability $1$
if  the random SD is not smaller than $2$ (Theorem \ref{main4}). Propositions \ref{main1proposition} and \ref{main1proposition123} imply  that  the highly oscillatory signals
can be   determined  locally,  with probability $1$, by a very small number of random samples.

\section{Appendix}
\subsection{Proof of Lemma \ref{theoremX}}\label{theoremXX}
Since $t_{n_{1}}, t_{n_{2}}$ and $t_{n_{3}}$ are i.i.d  random variables,
we just need to  prove $P\big(A_{n,f}(t_{n_{1}},t_{n_{2}})+B_{n,f}(t_{n_{1}},t_{n_{2}})\textbf{i}\neq0\big)=1$.

Define an event  $\widetilde{\mathfrak{E}}_{n,0}
:=\{\phi(t_{n_{1}})\bar{f}(n+t_{n_{1}})\neq0\}$ w.r.t $t_{n_{1}}$.
By \eqref{diedai}, we have
\begin{align} \notag \begin{array}{lll}\widetilde{\mathfrak{E}}_{n,0}\\
=\{\phi(t_{n_{1}})(\bar{v}_{n,f}(t_{n_{1}})+\bar{c}_{n}\bar{\phi}(t_{n_{1}}))\neq0\}\\
=\{\sum_{k\in I_{n}}\bar{c}_{k}\phi(t_{n_{1}})\bar{\phi}(n+t_{n_{1}}-k)+\bar{c}_{n}|\phi|^{2}(t_{n_{1}})\neq0\}.\end{array}\end{align}
First, it is easy to derive from Lemma \ref{nonsp} and the definition of  $\mathcal{N}_{f}$
that, for every $n\in \{1, 2, \ldots, \mathcal{N}_{f}+s-1\}$  there exists a nonzero coefficient in $\{c_{k}: k\in I_{n}\}$.
Moreover, $\Lambda_{\phi,2}$ in
Proposition \ref{Haarcond} satisfies  GHC. Then
\begin{align}\begin{array}{lll}
\mu \Big(\{t\in(0,1):\sum_{k\in I_{n}}\bar{c}_{k}\phi(t)\bar{\phi}(n+t-k)\\
\quad \quad \quad\quad \quad \quad\quad +\bar{c}_{n}|\phi|^{2}(t)=0\}\Big)\\
=0.\end{array}\end{align}
Therefore $P(\widetilde{\mathfrak{E}}_{n,0})=1$. Consequently,   $P(\mathfrak{E}_{n,0})=1$ where $\mathfrak{E}_{n,0}=\big\{\frac{|f(n+t_{n_{1}})|}{|\phi|^{2}(t_{n_{1}})}\neq0\big\}$.
Define an auxiliary (random) function  w.r.t  $t_{n_{1}}$ and $t_{n_{2}}$ by
\begin{align} \label{normal}\begin{array}{lll}  a_{n,f}(t_{n_{1}},t_{n_{2}})+b_{n,f}(t_{n_{1}},t_{n_{2}})\textbf{i}\\
:=\bar{\phi}(t_{n_{1}})\phi(t_{n_{2}})\bar{v}_{n,f}(t_{n_{2}})-\bar{v}_{n,f}(t_{n_{1}})|\phi|^{2}(t_{n_{2}}).\end{array}\end{align}
Direct observation on \eqref{Teren} leads to that
 \begin{align} \label{normal89}\begin{array}{lll} A_{n,f}(t_{n_{1}},t_{n_{2}})+B_{n,f}(t_{n_{1}},t_{n_{2}})\textbf{i}\\
 =\frac{|f|(n+t_{n_{1}})}{|\phi|^{2}(t_{n_{1}})}\big(a_{n,f}(t_{n_{1}},t_{n_{2}})+b_{n,f}(t_{n_{1}},t_{n_{2}})\textbf{i}\big).\end{array}\end{align}
As previously   for every $n\in \{1, 2, \ldots, \mathcal{N}_{f}\}$,  there exists a nonzero coefficient in $\{c_{k}: k\in I_{n}\}$.
Then by \eqref{vn} we have $\bar{v}_{n,f}\not\equiv0.$ Now it follows from
$\Lambda_{\phi,2}$ in
Proposition \ref{Haarcond} satisfying GHC that $\phi\bar{v}_{n,f}$ and $|\phi|^{2}$
are linearly independent, which together with $P(\mathfrak{E}_{n,0})=1$
leads to    $a_{n,f}(\cdot,\cdot)+b_{n,f}(\cdot,\cdot)\textbf{i}\not\equiv0.$
Then
\begin{align} \notag \begin{array}{lll} 1\geq P\big(a_{n,f}(t_{n_{1}},t_{n_{2}})+b_{n,f}(t_{n_{1}},t_{n_{2}})\textbf{i}\neq0\big)\\
\ \   \geq  P\big(a_{n,f}(t_{n_{1}},t_{n_{2}})+b_{n,f}(t_{n_{1}},t_{n_{2}})\textbf{i}\neq0|\mathfrak{E}_{n,0}\big)P(\mathfrak{E}_{n,0})\\
\ \ = P\big(a_{n,f}(t_{n_{1}},t_{n_{2}})+b_{n,f}(t_{n_{1}},t_{n_{2}})\textbf{i}\neq0|\mathfrak{E}_{n,0}\big)\\
\ \ =1,\end{array}\end{align}
where $\Lambda_{\phi,2}$   satisfying GHC is used again in the last identity.
The proof is concluded.

\subsection{Proof of Lemma \ref{theo2.5}}\label{XYZ}
If $0<\frac{|f|(n+t_{n_{1}})}{|\phi|^{2}(t_{n_{1}})}<\infty$,  then it follows from \eqref{normal89}  that  $\theta[A_{n,f}(t_{n_{1}}, t_{n_{2}})+B_{n,f}(t_{n_{1}}, t_{n_{2}})\textbf{i}]$
$=\theta[a_{n,f}(t_{n_{1}}, t_{n_{2}})+b_{n,f}(t_{n_{1}}, t_{n_{2}})\textbf{i}]$, where $a_{n,f}(t_{n_{1}}, t_{n_{2}})+b_{n,f}(t_{n_{1}}, t_{n_{2}})\textbf{i}$ is defined in \eqref{normal}.
By direct calculation, for $y\in (0, 1)$  we have
\begin{align}\notag \begin{array}{lll} \Re(a_{n,f}(t_{n_{1}},y)+\textbf{i}b_{n,f}(t_{n_{1}},y))\\
=a_{n,f}(t_{n_{1}},y)\\
=u_{t_{n_{1}},f}(\phi^{2}_{\Re}(y)+\phi^{2}_{\Im}(y))\\
\ +\sum_{k\in I_{n}}[\widetilde{c}_{t_{n_{1}},k,\Re}\big(\phi_{\Re}(y)\phi_{\Re}(y+n-k)\\
\quad\quad\quad\quad +\phi_{\Im}(y)\phi_{\Im}(y+n-k)\big)]\\
\ -\sum_{k\in I_{n}}[\widetilde{c}_{t_{n_{1}},k,\Im}\big(\phi_{\Im}(y)\phi_{\Re}(y+n-k)\\
\quad\quad\quad\quad-\phi_{\Re}(y)\phi_{\Im}(y+n-k)\big)],\end{array}\end{align}
and
\begin{align}\notag \begin{array}{lll} \Im(a_{n,f}(t_{n_{1}},y)+\textbf{i}b_{n,f}(t_{n_{1}},y))\\
=b_{n,f}(t_{n_{1}},y))\\
=v_{t_{n_{1}},f}(\phi^{2}_{\Re}(y)+\phi^{2}_{\Im}(y))\\
\ +\sum_{k\in I_{n}}[\widetilde{c}_{t_{n_{1}},k,\Im}\big(\phi_{\Re}(y)\phi_{\Re}(y+n-k)\\
\quad \quad \quad \quad +\phi_{\Im}(y)\phi_{\Im}(y+n-k)\big)]\\
\ +\sum_{k\in I_{n}}[\widetilde{c}_{t_{n_{1}},k,\Re}\big(\phi_{\Im}(x)\phi_{\Re}(y+n-k)\\
\quad \quad \quad \quad -\phi_{\Re}(y)\phi_{\Im}(y+n-k)\big)],\end{array}\end{align}
where $\bar{v}_{n,f}(t_{n_{1}}):=u_{t_{n_{1}},f}+\textbf{i}v_{t_{n_{1}},f} $ and
 \begin{align}\label{gggg} \widetilde{c}_{t_{n_{1}}, k}:=\bar{\phi}(t_{n_{1}})c_{k}=\widetilde{c}_{t_{n_{1}},k,\Re}+\textbf{i}\widetilde{c}_{t_{n_{1}},k,\Im}.\end{align}

As mention in section \ref{theoremXX},  there exists at least one nonzero coefficient in $\{c_{k}: k\in I_{n}\}$ for every $n\in \{1, 2, \ldots, \mathcal{N}_{f}+s-1\}$. For
any fixed  $n\in \{1, \ldots, \mathcal{N}_{f}\}$,
using    $\Lambda_{\phi,1}$ in Proposition  \ref{Haarcond} satisfying  GHC, we have
$P(\bar{\phi}(t_{n_{1}})\neq0)=1$, which together with \eqref{gggg} leads to that    with probability  $1$,  there exists at least one nonzero coefficient in $\{\widetilde{c}_{t_{n_{1}}, k}: k\in I_{n}\}$.
Then
\begin{align}\notag \begin{array}{lll}
P\big(\Re(a_{n,f}(t_{n_{1}},t_{n_{2}})+\textbf{i}b_{n,f}(t_{n_{1}},t_{n_{2}}))\neq0\big)\\
\geq P\big(\Re(a_{n,f}(t_{n_{1}},t_{n_{2}})+\textbf{i}b_{n,f}(t_{n_{1}},t_{n_{2}}))\neq0|\mathfrak{E}_{n,0}\big)\\
\ \ \times P(\mathfrak{E}_{n,0})\\
=P\big(\Re(a_{n,f}(t_{n_{1}},t_{n_{2}})+\textbf{i}b_{n,f}(t_{n_{1}},t_{n_{2}}))\neq0|\mathfrak{E}_{n,0}\big)\\
=1,
\end{array}\end{align}
where $P(\mathfrak{E}_{n,0})=1$,   derived from section \ref{theoremXX}, is used in the first identity, and the second identity
is derived from   GHC \eqref{quantity11}. Therefore, $P\big(\Re(a_{n,f}(t_{n_{1}},t_{n_{2}})+\textbf{i}b_{n,f}(t_{n_{1}},t_{n_{2}}))\neq0\big)=1.$
Similarly, we can prove that $P\big(\Im(a_{n,f}(t_{n_{1}},t_{n_{2}})+\textbf{i}b_{n,f}(t_{n_{1}},t_{n_{2}}))\neq0\big)=1.$
Then $P\big(\theta[a_{n,f}(t_{n_{1}}, t_{n_{2}})+b_{n,f}(t_{n_{1}}, t_{n_{2}})\textbf{i}]=\frac{j\pi}{2}\big)=0,$
where $j=0, 1, 2,3.$ Applying the above result to
$\widetilde{f}:=e^{\textbf{i}(\frac{\pi}{2}-\alpha)}f\in V_{\hbox{ca}}(\phi)$, the proof is concluded.

\subsection{Proof of Lemma \ref{auxill0}}\label{XYZZZ}
Define  three  random  events
\begin{align} \label{danding09} \begin{array}{lll}  \mathfrak{E}_{1}:=\Big\{(A_{n,f}(t_{n_{1}}, t_{n_{2}})+\textbf{i}B_{n,f}(t_{n_{1}}, t_{n_{2}}))\\
\quad \quad \quad  \ \times (A_{n,f}(t_{n_{1}}, t_{n_{3}})-\textbf{i}B_{n,f}(t_{n_{1}}, t_{n_{3}}))\\
\quad \quad  \quad\neq(A_{n,f}(t_{n_{1}}, t_{n_{2}})-\textbf{i}B_{n,f}(t_{n_{1}}, t_{n_{2}}))\\
\quad \quad \quad \ \times (A_{n,f}(t_{n_{1}}, t_{n_{3}})+\textbf{i}B_{n,f}(t_{n_{1}}, t_{n_{3}}))\Big\}, \end{array} \end{align}
and
\begin{align}\begin{array}{lll} \label{fendou}  \mathfrak{E}_{2}:=\{A_{n,f}(t_{n_{1}}, t_{n_{2}})+\textbf{i}B_{n,f}(t_{n_{1}}, t_{n_{2}})\neq0\}, \\ \mathfrak{E}_{3}:=\{A_{n,f}(t_{n_{1}}, t_{n_{3}})+\textbf{i}B_{n,f}(t_{n_{1}}, t_{n_{3}})\neq0\}.\end{array} \end{align}
Next we  prove that  $P(\mathfrak{E}_{1})=1$.
By Lemma  \ref{theoremX}, $P(\mathfrak{E}_{2})=P(\mathfrak{E}_{3})=1 $.
 Direct computation gives that
 \begin{align}\label{prob1}\begin{array}{lll}
 1&\geq P(\mathfrak{E}_{1})\\
 &\geq P(\mathfrak{E}_{1}\cap\mathfrak{E}_{2})\\
 &=P(\mathfrak{E}_{1}|\mathfrak{E}_{2})P(\mathfrak{E}_{2})\\
 &=P(\mathfrak{E}_{1}|\mathfrak{E}_{2}).
 \end{array}
 \end{align}
 By \eqref{danding09} and \eqref{fendou}, we have
  \begin{align}\notag \begin{array}{lll} \mathfrak{E}_{1}|\mathfrak{E}_{2}\\
  =\Big\{A_{n,f}(t_{n_{1}}, t_{n_{3}})-\textbf{i}B_{n,f}(t_{n_{1}}, t_{n_{3}})\\
  \  -b(t_{n_{1}}, t_{n_{2}})(A_{n,f}(t_{n_{1}}, t_{n_{3}})+\textbf{i}B_{n,f}(t_{n_{1}}, t_{n_{3}}))\neq0|\mathfrak{E}_{2}\Big\},\end{array}
 \end{align}
where \begin{align}\notag \begin{array}{lll} b(t_{n_{1}}, t_{n_{2}})=\frac{A_{n,f}(t_{n_{1}}, t_{n_{2}})-\textbf{i}B_{n,f}(t_{n_{1}}, t_{n_{2}})}{A_{n,f}(t_{n_{1}}, t_{n_{2}})+\textbf{i}B_{n,f}(t_{n_{1}}, t_{n_{2}})}.\end{array}
 \end{align}
Applying  Lemma \ref{theo2.5} to $A_{n,f}(t_{n_{1}}, t_{n_{3}})+\textbf{i}B_{n,f}(t_{n_{1}}, t_{n_{3}})$,  it is easy to prove that $P(\mathfrak{E}_{1}|\mathfrak{E}_{2})=1 $ which together with \eqref{prob1} leads to   $P(\mathfrak{E}_{1})=1.$
Now the rest of  proof can be easily    concluded.

\begin{IEEEbiography}{Youfa Li}
Biography text here.
\end{IEEEbiography}

\begin{IEEEbiography}{Wenchang Sun}
Biography text here.
\end{IEEEbiography}

\tabcolsep 30pt

\vspace*{7pt}

\end{document}